\newtheorem{theorem}{Theorem}
\newtheorem{lemma}[theorem]{Lemma}
\newtheorem{corollary}[theorem]{Corollary}
\newtheorem{proposition}[theorem]{Proposition}
\newtheorem{definition}[theorem]{Definition}
\newtheorem{example}[theorem]{Example}
\newenvironment{proof}{\noindent{\em Proof:}}{$\Box$~\\}
\newcommand{\true}{\textsc{True}}
\newcommand{\false}{\textsc{False}}
\DeclareMathOperator{\diam}{Diam}
\newcommand{\RR}{\mathbb{R}}
\newcommand{\CC}{\mathbb{C}}
\newcommand{\dist}{\mathtt{dist}}
\newcommand{\PV}{\texttt{PV}}
\newcommand{\ZZ}{\mathbb{Z}}
\newcommand{\sOB}{\widetilde{O}_B}
\newcommand{\sO}{\widetilde{O}}
\newenvironment{algorithm1}[1]
{\refstepcounter{theorem} 
\setlength{\FrameSep}{0.1in}
\MakeFramed{\setlength{\hsize}{5in} \FrameRestore}
\centerline{\bf Algorithm \arabic{theorem}. #1}\vspace{.05in}}
{\endMakeFramed}
\begin{document}

\title{The Complexity of Subdivision for Diameter-Distance Tests}

\author{Michael Burr}
\address{Department of Mathematical Sciences, Clemson University, Clemson, SC 29634}
\email{burr2@clemson.edu}
\thanks{Partially supported by a grant from the Simons Foundation (\#282399 to Michael Burr) and National Science Foundation Grant CCF-1527193.}

\author{Shuhong Gao}
\address{Department of Mathematical Sciences, Clemson University, Clemson, SC 29634}
\email{sgao@clemson.edu}
\thanks{Partially supported by the National Science Foundation  under Grants CCF-1407623, DMS-1403062 and  DMS-1547399.}

\author{Elias Tsigaridas}
\address{Sorbonne Universit{\'e}s, \textsc{UPMC} Univ Paris 06, \textsc{CNRS}, \textsc{INRIA}, Laboratoire d'Informatique de Paris 6 (\textsc{LIP6}), {\'E}quipe \textsc{PolSys}, 4 place Jussieu, 75252 Paris Cedex 05, France}
\email{elias.tsigaridas@inria.fr}
\thanks{Partially by an FP7 Marie Curie Career Integration Grant
and by the ANR JCJC ``GALOP''}

\begin{abstract}
  We present a general framework for analyzing the complexity of
  subdivision-based algorithms whose tests are based on the sizes of
  regions and their distance to certain sets (often varieties)
  intrinsic to the problem under study.  We call such tests
  diameter-distance tests.  We illustrate that diameter-distance tests
  are common in the literature by proving that many interval
  arithmetic-based tests are, in fact, diameter-distance tests.  For
  this class of algorithms, we provide both non-adaptive bounds for
  the complexity, based on separation bounds, as well as adaptive
  bounds, by applying the framework of continuous amortization.

  Using this structure, we provide the first complexity analysis for
  the algorithm by Plantinga and Vegeter for approximating real
  implicit curves and surfaces.  We present both adaptive and
  non-adaptive {\em a priori} worst-case bounds on the complexity of
  this algorithm both in terms of the number of subregions constructed
  and in terms of the bit complexity for the construction.  Finally,
  we construct families of hypersurfaces to prove that our bounds are tight.
\end{abstract}

\maketitle

%\begin{keyword}
%Algorithmic complexity,
%worst case bit-complexity,
%subdivision algorithms,
%continuous amortization,
%adaptive complexity,
%separation bounds
%\end{keyword} 
%\end{frontmatter}

\section{Introduction}
Subdivision-based algorithms are adaptive methods that start with a domain of interest (often an axis-aligned box) and recursively split it into sub-domains until each sub-domain either isolates or does not contain an interesting feature of the problem at hand.  The output is a partition of the original domain (often into axis-aligned boxes) which we can further study or post-process.  This algorithmic paradigm is one of the most commonly used classes of algorithms with appearances in many fields, ranging from computational geometry and graphics to approximating solutions to polynomial systems and mathematical programming, see, e.g.,~\cite{Lorensen:1987:MCH:37402.37422,JCC:JCC540140212,Schroder02subdivisionas,computationapplication,mmt-tcs-2010,6257651,ek-spm-01,AlErGe-ex-opt-02}.  The main goal of this paper is to study the computational complexity of these types of algorithms.

The main advantages of  subdivision-based algorithms are their great flexibility and their local nature.  Because of their recursive character, they are easy to implement using simple data structures, and this ease of use makes them popular among practitioners.  Moreover, subdivision-based algorithms are intrinsically adaptive, and they are often efficient in practice since they only perform additional subdivisions near difficult features.  These advantages, however, make the complexity analysis of subdivision-based algorithms particularly challenging.  To analyze these algorithms, we need to understand, in detail, the local complexity of the input instance and how the problem-specific predicates behave near problem-specific features because any tight complexity bound must be sensitive to the locations and sizes of easy and difficult features.

Our motivating example for this paper is the complexity analysis of the Plantinga and Vegter algorithm\footnote{Our approach applies to similar subdivision-based methods for approximating curves such as~\cite{LinYap:Cxy}.  The final complexity results are similar, and we leave the details to the interested reader.}~\cite{Plantinga:2004}.  Their algorithm is a subdivision-based algorithm for correctly approximating curves and surfaces, see Figure~\ref{fig:pv-curve-example}.  We call this algorithm the \PV\ algorithm.  It takes, as input, a polynomial $f\in\mathbb{R}[x,y]$ or $\mathbb{R}[x,y,z]$, whose real zero set is smooth\footnote{\label{fn:correctness}The correctness depends on the curve being bounded, but the termination of the algorithm depends only on the smoothness.  See \cite{Burr:Isotopic} for an extension of this algorithm which includes a correctness statements for unbounded curves.}, and an axis-aligned square $I\subseteq\mathbb{R}^2$ or cube $I\subseteq\mathbb{R}^3$.  From this input data, the algorithm constructs a piecewise-linear approximation to the zero set of $f$ in $I$.  In particular, when $I$ is a bounding box for the variety, the approximation has the correct topology in the sense that there is an ambient isotopy between the approximation and the zero set.  Additionally, by further subdivisions, the Hausdorff distance between the approximation and the zero set can be made as small as desired.  The authors of~\cite{Plantinga:2004} claim that the \PV\ algorithm is efficient in practice, but, to the best of our knowledge, the work in this paper provides the first complexity analysis of the \PV\ algorithm.  A preliminary version of this work appeared in~\cite{ComplexityRealCurves}.

\begin{figure}[hbt]
  \centering
\includegraphics{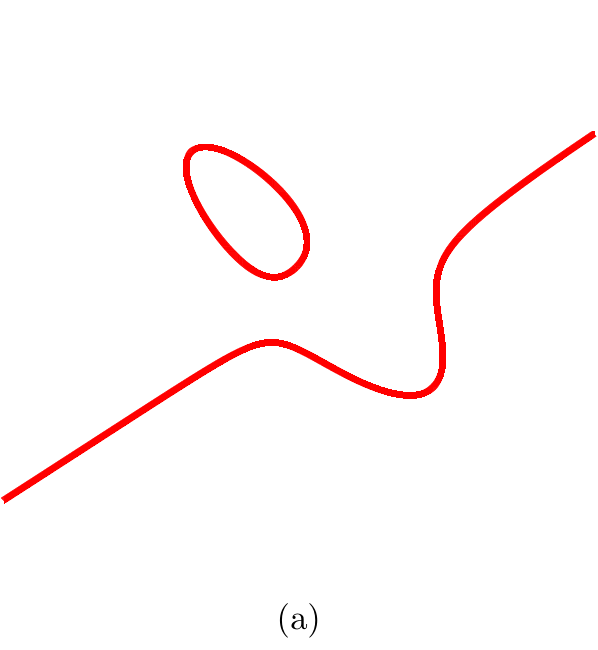}
%\begin{tikzpicture}[scale=.45]
%\draw[color=white,opacity=0] (-5,-5) -- (-5,5) -- (5,5) -- (5,-5) -- cycle;
%\input{"Curve Approximation2.tex"}
%\node[below] at (0,-3) {(a)};
%\end{tikzpicture}
\qquad
%\begin{tikzpicture}[scale=.45]
%\input{"Curve Approximation1.tex"}
%\node[below] at (0,-5) {(b)};
%\end{tikzpicture}
\includegraphics{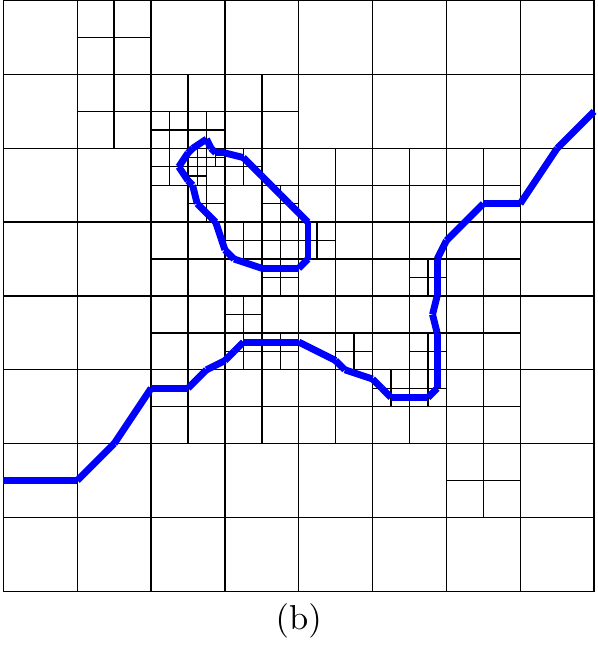}
\label{fig:pv-curve-example}
\caption{(a) The real curve given by the zeros of $f=3y^3+3xy^2-2x^3-3y^2+xy+3x^2-3y+3x+2$.
(b) The approximation produced by the \PV\ algorithm~\cite{Plantinga:2004} as well as the regions constructed by the algorithm.
}
\end{figure}

In this paper, we provide complexity bounds for the class of subdivision-based algorithms which use diameter-distance tests.  Diameter-distance tests are predicates which become more restrictive as sub-domains become closer to problem-specific subsets of the domain.  These tests are fairly common in the literature, for example, condition number-based tests are related to the inverse of the distance to the set of ill-conditioned inputs, see, e.g.,~\cite{BurCuc-condition-13}, tests for motion planning are based on the distance to the set of obstacles~\cite{SoftPredicates}, and root isolation is related to the inverse of the distance to the nearest root of the polynomial or its derivative, see, e.g.,~\cite{BurrKrahmerYap:ContinuousAmortization,Burr:2016,Eigenwillig:Descartes}.  After developing the general theory, we prove, using Fourier analysis, that the \PV\ algorithm's tests are diameter-distance tests, and, as an example, we provide the first complexity bounds for the \PV\ algorithm and prove that they are tight~\cite{Plantinga:2004}.

\subsection{Related Work}

For univariate problems, the analysis of subdivision-based algorithms
is well-understood, and there are several results, especially for the
case of approximating the roots of polynomials, see,
e.g.,~\cite{YaSaSh-arc-13,SagraloffYap:CEVAL,BurrKrahmerYap:ContinuousAmortization,Burr:2016,Eigenwillig:Descartes,SturmComplexity},
and the references therein.
Moreover, it also possible to modify the subdivision process by applying the Newton operator, see~\cite{Sag-nd-12,Pan-Weyl-00}, and considerably improve both the complexity and the actual running time of the corresponding algorithms.
However, in higher dimensions, very little is known.  For example,
there are no explicit complexity results for pure subdivision-based
algorithms for approximating curves and surfaces.

The design of efficient subdivision-based algorithms that are {\em output-sensitive}, {\em precision-sensitive}, {\em certified}, and exploit the underlying {\em structure} of the problem is an important challenge and an active area of research.  An important step in this direction was the introduction of {\em soft tests}, see~\cite{WaChYa-soft-15, YaSaSh-arc-13}, that, roughly speaking, replace harder exact tests (usually comparisons with zero) with approximate computations which are exact in the limit.  They introduce a new notion of correctness called resolution-exactness.  In this context, it is exactly the continuous amortization tool~\cite{Burr:2016,BurrKrahmerYap:ContinuousAmortization} that captures the complexity of the soft predicates. Therefore, continuous amortization is a key tool for the analysis of such algorithms.

The previous work on subdivision-based methods and inclusion-exclusion
predicates is quite extensive, so, we can only scratch its surface.
For work that focuses on classical inclusion-exclusion algorithms for
the isolation of roots of algebraic and analytic functions, but
without bit-complexity bounds, we refer the interested reader to
\cite{GiLeSaYa-focm-05,y-na-1994,dy-na-1993,YaSaSh-arc-13}, and the
references therein.  For other approaches for approximating curves and
surfaces, we refer the interested reader to
\cite{ChDeRaTa-sicomp-07,BoiCohVeg-DCG-08,BBMRV-survey-06,CheKaiLaz-jsc-13},
and the references therein.  For the problem of isolating the roots of
polynomials with subdivision-based methods, we refer the interested
reader to
\cite{mmt-tcs-2010,Krawczyk:Newton,Henrici:search:70,CollinsAkritas:Descartes,Yakoubsohn:bisection:05,MourrainPavone,BurrKrahmer:2012,SagraloffYap:CEVAL,CheGaoGuo-jsc-12},
and the references therein.  There are also approaches, see, e.g.,
\cite{MourrainPavone,mmt-tcs-2010}, that achieve locally quadratic
convergence towards the simple roots of polynomial systems, and they
are very efficient in practice.  Another interesting direction for the
application of subdivision-based algorithms, of a more geometric
nature, concerns the approximation of algebraic varieties
\cite{Snyder:Interval,Plantinga:2004,Burr:Isotopic,SharmaVegterYap:Isotopic,LinYap:Cxy,LinYapYu:Isotopic}
and the computation of the approximate Voronoi diagrams
\cite{6257651}.  There are also important applications of these
algorithms to the problem of robotic motion planning
\cite{SoftPredicates}.

For the related problem of computing the topology of an implicitly defined curve in the plane, we refer the reader to~\cite{blmprs-curv2d-16} for state-of-the-art results.  Nevertheless, we emphasize, that even though analyzing the topology of an implicitly defined curve is related to the problem we consider in this paper, the problems and approaches are different and the complexity estimates are not directly comparable.  Our approach is a general one that we can use for the analysis of any subdivision-based algorithm that uses diameter-distance tests and it is not a dedicated one for computing the topology of curves.  This argument also holds for methods and algorithms based on cylindrical algebraic decomposition, which can be used as a black-box tool to solve similar problems with curves, see~\cite{HooSaf-jsc-12} and references therein.

\subsection{Main Results}
We introduce diameter-distance tests, which formalize a type of test that is
frequently used in subdivision-based algorithms, see Section
\ref{sec:diam-dist:def}.  We then present both
straightforward non-adaptive complexity bounds for such tests based on
separation bounds, see Proposition~\ref{prop:global-size-bound}, and
adaptive bounds, based on continuous amortization, which exploit the
local features of the problem at hand in Proposition
\ref{prop:adaptive-size-bound}.  The diameter-distance tests are quite
generic in nature and we illustrate this by formulating classical
exclusion predicates, in {\em any dimension}, in conjunction with
interval arithmetic as diameter-distance tests, see Section
\ref{sec:exclusion:distance-diameter}.

We provide the first complexity analysis for (a slightly modified
version of) the \PV\ algorithm for approximating curves and surfaces
from~\cite{PlantingaVegterJournal,Plantinga:2004}.  We extend the
predicates of the \PV\ algorithm to all dimensions and bound the
number of regions and bit-complexity of these algorithms in two- and
higher-dimensions, see Theorems~\ref{thm:PV:steps} and
\ref{thm:nonadaptive-bit-complexity}.  Moreover, using continuous
amortization, first developed by
\cite{BurrKrahmerYap:ContinuousAmortization}, we provide adaptive
bounds on the number of regions and the bit-complexity of the \PV\
algorithm in arbitrary dimensions, see Theorems
\ref{theorem:PV:CA:complexity} and~\ref{thm:bit-CA}.  These results
consist of the first application of continuous amortization to a pure
high-dimensional problem.  We provide examples that show that our
bounds are tight in Lemma~\ref{lem:PV-tight}.

We anticipate that diameter-distance tests and the tools for the
complexity analysis of the underlying subdivision-based algorithms that we develop in
this paper will be applicable to many other algorithms and in related
contexts.

\subsection{Overview of Paper}

The rest of this paper is organized as follows: In the next section, we present a general description of subdivision-based algorithms, we introduce diameter-distance tests, and we derive adaptive and non-adaptive complexity bounds for subdivision-based algorithms that use these tests.  In Section~\ref{Section:localsizebound}, we show that exclusion tests based on interval arithmetic are diameter-distance tests.  This illustrates that many algorithms in the literature can be analyzed with the techniques of this paper.  In Section~\ref{sec:PV-alg}, we present a slight modification of the \PV\ algorithm for curve and surface approximation.  We then exhibit the tests in the \PV\ algorithm as diameter-distance tests.  In Section~\ref{sec:worst-case-bounds}, we present both adaptive and non-adaptive bounds on the number of subdivisions that the \PV\ algorithm performs and the bit-complexity for the overall algorithm.  Finally, in Section~\ref{sec:examples} we present examples to demonstrate the tightness of our bounds.

\section{Subdivision-based Methods and Diameter-Distance Tests}
\label{sec:subdiv-dd-test}

In Section~\ref{sec:subdiv:def}, we present the general form of a subdivision-based method which is studied in this paper.  In Section~\ref{sec:diam-dist:def}, we define the diameter-distance tests, which form the class of predicates studied in this paper.  Even though our motivating example is the Plantinga and Vegter algorithm~(\cite{Plantinga:2004,PlantingaVegterJournal}), we present this material in a general setting.  Additional, related, background on this approach for the study of subdivision-based methods in this section can be found in~\cite{BurrKrahmerYap:ContinuousAmortization} and~\cite{Burr:2016}.

Throughout this section, we assume that $X$ is both a measure space with measure $\mu$ and a metric space with distance function $d$.  We note that we do not require any compatibility between $\mu$ and $d$.  Additionally, we assume the technical condition that $X$ is proper\footnote{The theory continues to apply even with weaker conditions, but there are a few technicalities that arise.  For most applications, this assumption is not an additional constraint.  For weaker conditions, we leave the details to the interested reader.}, i.e., closed balls are compact.  Moreover, we let $\mathcal{S}$ be a collection of subsets of $X$ which have finite measure (with respect to $\mu$) and are compact (with respect to $d$).  We call predicates (boolean functions) on $\mathcal{S}$ {\em stopping criteria}.  In the motivating case, $X=\mathbb{R}^n$, $\mu$ is the Lebesgue measure, $d$ is the Euclidean distance, and $\mathcal{S}$ is the collection of $n$-dimensional cubes in $X$.

\subsection{Subdivision-based Methods}\label{sec:subdiv:def}

In this section, we provide the general form of a subdivision-based method considered in this paper.  Let $C_1,\dots,C_\ell$ be stopping criteria, i.e., each $C_i$ is a function from subsets in $\mathcal{S}$ to $\{\true,\false\}$.  When $C_i(J)$ is $\true$ for some $i$, then we do not split $J$, and, when $C_i(J)$ is $\false$ for all $i$, then we must subdivide $J$.

We use the following simple abstract algorithm to describe subdivision-based tests.  Fix $C_1,\dots,C_\ell$ to be stopping criteria, and consider an input region $I\in \mathcal{S}$.  The output of the algorithm is a partition $P$ of $I$ such that for each element $J$ in $P$, there is some $i$ so that $C_i(J)=\true$.  Initially, $P=\{I\}$.

\begin{algorithm1}{Abstract Subdivision-based Algorithm}\label{algorithm:abstract}
\noindent Repeatedly subdivide each $J\in P$ until:\\[.15cm]
\indent There exists $1\leq i\leq \ell$ so that $C_i(J)=\true$.
\end{algorithm1}

To subdivide a region $J$ means to replace $J$ in $P$ with regions
$J_1,\dots,J_k$, where $k\geq 2$, each $J_j\in\mathcal{S}$,
$J=\cup_j J_j$, and the pairwise intersections of the $J_j$ are
measure zero subsets.  In this paper, we add two mild
additional assumptions to the subdivisions under consideration: Let
$0<\varepsilon_1,\varepsilon_2<1$.  Then, we add the following
assumptions:
\begin{enumerate}
\item[Assumption 1:] $\mu(J_j)\geq\varepsilon_1\mu(J)$ and
\item[Assumption 2:] $\diam(J_j)\leq\varepsilon_2\diam(J)$.
\end{enumerate}
The first condition prevents $J$ from splitting into two many regions at any step, while the second condition generalizes the idea that the aspect ratio of $J_i$ should not be too large.  For additional details on the first assumption, see~\cite[Lemma 3.5 and Remark 3.6]{Burr:2016}.  The {\em subdivision tree} is the tree whose root is $I$, whose internal nodes represent sub-domains $J$ that are processed during the subdivision, whose leaves are the terminal regions, and where the parent-child relationship is given by subdivision.  We observe that, in the motivating example for this paper, $\varepsilon_1=2^{-n}$ and $\varepsilon_2=2^{-1}$.

\subsection{Diameter-distance Tests}\label{sec:diam-dist:def}

In this section, we define distance-diameter tests.  These tests form the class of predicates that we consider in this paper.  Many tests which have been developed, such as the one-circle condition in Descartes' rule of signs~\cite{OneCircleTest,AleGal-Vincent-00}, are diameter-distance tests.  At first glance, it might appear that the definition of diameter-distance tests is very specialized; this is not the case.
In fact, in Section~\ref{Section:localsizebound}, we provide a nontrivial example of these tests which appear frequently in applications.

\begin{definition}
Let $X$ and $\mathcal{S}$ be defined as above.  Let $C$ be a stopping criterion on $\mathcal{S}$.  $C$ is a {\em diameter-distance test} if there exists a closed set $V\subseteq X$ and a positive constant $K$ such that for any $J\in\mathcal{S}$,
$$
\text{If }\left(\diam(J)<K\max_{x\in J}d(x,V)\right)\text{, then }C(J)=\true,
$$
where $d(x,V)=\min_{v\in V}d(x,v)$.
\end{definition}

The extra conditions, such as compactness and properness allow one to use the minimum and maximum in the definition above instead of infimum and supremum.  Loosely speaking, this definition states that $C(J)$ must be $\true$ when $J$ includes a point sufficiently far away from $V$ and $J$ isn't too large.

We note that this definition does not state that the stopping criterion $C$ must be a distance-based test or even that $V$ is known to $C$.  Instead, the only assumption is that the criterion is less conservative than the conditional in the definition above.  In particular, stopping criteria whose theory is based upon condition numbers are frequently diameter-distance tests because the condition number can be rewritten in terms of the inverse of the distance to the set of ill-conditioned inputs~\cite{BurCuc-condition-13}.

Throughout the remainder of this section, we assume that all stopping criteria are diameter-distance tests.

\subsection{Non-adaptive Bounds}

In this section, we provide a lower bound on the number of regions produced by Algorithm~\ref{algorithm:abstract}.  This analysis is not adaptive, so it assumes the worst-case behavior everywhere.  We include this approach for comparison because the adaptive bounds are based on the ideas of the non-adaptive bounds and, in some cases, the adaptive bounds may be too complicated to compute.  In the next section, we provide an adaptive bound based on continuous amortization~\cite{Burr:2016}.

\begin{proposition}\label{prop:global-size-bound}
Suppose that the stopping criteria $C_1,\dots,C_\ell$ in Algorithm~\ref{algorithm:abstract} are all diameter-distance tests, with associated positive constants $K_i$ and closed subsets $V_i$.  Furthermore, assume that the intersection $\bigcap V_i$ is empty.  Let $K=\min K_i$, and let $I$ be the initial input region.  Define the separation bound $\delta$ as 
$$
0<\delta\leq\min_{x\in I}\max_i d(x,V_i).
$$
Then, the number of regions constructed by the algorithm is at most
$$
\max\left\{1,\varepsilon_1^{-1+\frac{\ln\diam(I)-\ln(K\delta)}{\ln(\varepsilon_2)}}\right\}.
$$
\end{proposition}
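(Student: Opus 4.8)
The plan is to bound the depth of the subdivision tree and then count its nodes. The key observation is that once a region $J$ in the subdivision tree becomes ``small enough'' relative to its distance to the sets $V_i$, every stopping criterion fails to force further subdivision — more precisely, one of the $C_i$ returns \true, so $J$ becomes a leaf. First I would argue that if $J \subseteq I$ satisfies $\diam(J) < K\delta$, then $J$ is not subdivided. Indeed, pick any $x_0 \in J$; by the definition of $\delta$ we have $\max_i d(x_0, V_i) \geq \delta$, so there is some index $i$ with $d(x_0, V_i) \geq \delta$, hence $\max_{x \in J} d(x, V_i) \geq \delta$. Combined with $\diam(J) < K\delta \leq K_i \delta \leq K_i \max_{x\in J} d(x,V_i)$ (using $K = \min_j K_j \leq K_i$), the diameter-distance property of $C_i$ gives $C_i(J) = \true$, so $J$ is a leaf. (The edge case $\delta \le \min_{x\in I}\max_i d(x,V_i)$ and nonemptiness of the relevant maxima is exactly what properness and compactness guarantee, as remarked after the definition.)

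Next I would control how fast diameters shrink along a root-to-node path. By Assumption 2, each subdivision multiplies the diameter by at most $\varepsilon_2 < 1$, so a node at depth $h$ has diameter at most $\varepsilon_2^h \diam(I)$. Therefore, as soon as $\varepsilon_2^h \diam(I) < K\delta$, i.e. as soon as
$$
h > \frac{\ln(K\delta) - \ln\diam(I)}{\ln \varepsilon_2} = \frac{\ln\diam(I) - \ln(K\delta)}{-\ln\varepsilon_2},
$$
the node is a leaf. (Note $\ln\varepsilon_2 < 0$, so the inequality flips; writing $\ln(\varepsilon_2)$ in the denominator as in the statement keeps the exponent positive since the numerator $\ln\diam(I)-\ln(K\delta)$ may be taken positive in the nontrivial case.) Hence the subdivision tree has depth at most $H := \lceil (\ln\diam(I) - \ln(K\delta))/(-\ln\varepsilon_2) \rceil$; if this quantity is $\le 0$ the tree is just the root and the bound $1$ applies, which is the first term in the $\max$.

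Now I would count nodes. The subtlety is that the tree is not binary: a node may have up to $k$ children, and $k$ is not a priori bounded by a universal constant. This is where Assumption 1 enters and is the main point of the argument. Because each child $J_j$ of $J$ has $\mu(J_j) \geq \varepsilon_1 \mu(J)$ and the children have pairwise measure-zero overlaps, $\mu(J) \geq \sum_j \mu(J_j) \geq k \varepsilon_1 \mu(J)$, forcing $k \leq \varepsilon_1^{-1}$; moreover iterating, a node at depth $h$ has measure at least $\varepsilon_1^h \mu(I)$, and all nodes at a fixed depth $h$ have disjoint-up-to-measure-zero interiors inside $I$, so there are at most $\varepsilon_1^{-h}$ of them. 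The total number of regions constructed is the number of nodes in the tree, which (counting leaves, or equivalently bounding the whole tree by its bottom level times a geometric factor) is at most a constant times the number of depth-$H$ nodes, i.e. at most $\varepsilon_1^{-H}$ up to the geometric series $\sum_{h=0}^{H}\varepsilon_1^{-h} \le \varepsilon_1^{-H}\cdot\frac{1}{1-\varepsilon_1}$; the cleaner bound in the statement, $\varepsilon_1^{-1+H}$ with $H = (\ln\diam(I)-\ln(K\delta))/\ln(\varepsilon_2)$, comes from observing that only leaves need to be output/counted and that the parent of every leaf sits at depth $\ge H-1$, so the count of output regions is at most $\varepsilon_1^{-(H-1)} = \varepsilon_1^{-1+\frac{\ln\diam(I)-\ln(K\delta)}{\ln(\varepsilon_2)}}$.

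The main obstacle is the non-binary branching: without Assumption 1 the node count could blow up even for bounded depth, so the measure-theoretic packing argument ($k \le \varepsilon_1^{-1}$, and more sharply $\le \varepsilon_1^{-h}$ nodes at depth $h$) is the crux. Everything else — the leaf criterion from the diameter-distance property and the depth bound from Assumption 2 — is a direct unwinding of definitions. I would also be careful to phrase the final count so that the $\max\{1, \cdot\}$ correctly absorbs the degenerate case where $K\delta \ge \diam(I)$, in which the input region itself is already a leaf.
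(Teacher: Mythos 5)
Your approach mirrors the paper's — show that a subdivided region has diameter at least $K\delta$, bound the depth via Assumption~2, and count via measure packing from Assumption~1 — but your final accounting has an error that makes the claimed equality false. With $H := \frac{\ln\diam(I)-\ln(K\delta)}{-\ln\varepsilon_2}$, the exponent $-(H-1)$ simplifies to $1 + \frac{\ln\diam(I)-\ln(K\delta)}{\ln(\varepsilon_2)}$, which differs by $2$ from the statement's exponent $-1 + \frac{\ln\diam(I)-\ln(K\delta)}{\ln(\varepsilon_2)}$. The supporting claim that ``the parent of every leaf sits at depth $\geq H-1$'' is also wrong: leaves can occur at any depth, including depth~$1$, and the depth bound runs in the opposite direction.

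The clean version of your own argument, which is exactly the paper's, is: a terminal region $J_j$ at depth $k$ has a parent $J$ at depth $k-1$ that \emph{was} subdivided, so for any $x\in J_j\subseteq J$ we have $\diam(J)\geq K\max_i d(x,V_i)\geq K\delta$; combined with $\diam(J)\leq\varepsilon_2^{k-1}\diam(I)$ this gives $k-1\leq H$, i.e.\ $k\leq H+1$. Then Assumption~1 yields $\mu(J_j)\geq\varepsilon_1^{k}\mu(I)\geq\varepsilon_1^{H+1}\mu(I)$, and since the terminal regions cover $I$ with measure-zero overlaps, their number is at most $\varepsilon_1^{-(H+1)} = \varepsilon_1^{-1 + \frac{\ln\diam(I)-\ln(K\delta)}{\ln(\varepsilon_2)}}$. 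Your leaf criterion ($\diam(J)<K\delta$ implies $J$ is a leaf) and the packing observation from Assumption~1 are both correct and are the key ideas; only the bookkeeping in the last step needs the correction above.
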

Before presenting  the proof, we note that $\delta$ is a lower bound on the smallest distance from any $x\in I$ to the furthest $V_i$.  We call $\delta$ a separation bound because if the $V_i$'s are pairwise disjoint and $\Delta$ is the minimum distance between them, then $\frac{\Delta}{2}$ satisfies the conditions for $\delta$.
\begin{proof}
If no subdivisions occur, then the only region is the initial region.  In this case, $1$ region is constructed and the bound holds.  We, therefore, assume that subdivisions occur.  Let $J_j$ be a terminal region and $J$ its parent; moreover, let $x\in J_j$.  Since $x\in J$ and $J$ was subdivided, we know that $\diam(J)\geq K\max_i d(x,V_i)$, since, otherwise, by the definition of a diameter-distance test, for some $i$, $C_i(J)$ would be \true, contradicting the assumption that $J$ was subdivided.  Since this maximum is larger than $\delta$, it follows that $\diam(J)\geq K\delta$.

Suppose that the depth of $J_j$ in the tree is $k$, then the depth of $J$ is $k-1$, and, by the assumption on diameters under subdivisions, we know that $\diam(J)\leq\varepsilon_2^{k-1}\diam(I)$.  Therefore, $K\delta\leq\varepsilon_2^{k-1}\diam(I)$.  Taking the logarithm of both sides (and recalling that $\ln(\varepsilon_2)<0$), it follows that
$$
k\leq1+\frac{\ln(K\delta)-\ln(\diam(I))}{\ln(\varepsilon_2)}.
$$

By the assumption on volumes for subdivisions, it follows that $\mu(J_j)\geq\varepsilon_1^k\mu(I)$.  Substituting in our expression for $k$, we can conclude that 
$$
\mu(J_j)\geq \varepsilon_1^{1+\frac{\ln(K\delta)-\ln(\diam(I))}{\ln(\varepsilon_2)}}\mu(I).
$$
This lower bound applies to the measure of every terminal region.  Moreover, since the pairwise intersection of terminal regions has zero measure, we know that, in the worse-case, $I$ is subdivided into regions of this size, which results in the desired bound.
\end{proof}

We observe that in the motivating case, i.e., where $\varepsilon_1=2^{-n}$ and $\varepsilon_2=2^{-1}$, the maximum above simplifies to
\begin{equation}\label{eq:nonadaptive:bound}
\max\left\{1,\left(\frac{2\sqrt{n}}{K\delta}\right)^n\mu(I)\right\}
\end{equation}
since $\diam(I)^n=n^{n/2}\mu(I)$.  In many cases, however, this bound is much larger than necessary as the analysis assumes that the worst-case situation occurs everywhere.  An adaptive bound is necessary to account for this non-uniformity.

\subsection{Adaptive Bounds}

In this section, we present adaptive bounds for the number of regions produced by Algorithm~\ref{algorithm:abstract}.  This adaptive bound is based on the continuous amortization technique~\cite{Burr:2016}, which we briefly review here.  

Continuous amortization was introduced in~\cite{BurrKrahmerYap:ContinuousAmortization} as a way to adaptively analyze the complexity of subdivision-based algorithms.  In~\cite{Burr:2016}, the theory of continuous amortization was extended to apply to measure spaces and to evaluate functions on the regions of the partition, and we recall this technique here.  The key to continuous amortization is a function on $X$, called a local size bound, which is a point estimate, locally describing the worst-case amount of work that is required at each point.

\begin{definition}\label{def:localsizebound}
Let $X$ and $\mathcal{S}$ be defined as above and $C$ a stopping criterion on $\mathcal{S}$.  A {\em local size bound} for $C$ is a function $F:X\rightarrow\mathbb{R}_{\geq 0}$ with the property that 
$$
F(x)\leq \inf_{\substack{J\in\mathcal{S}\\J\ni x\\C(J)=\textsc{False}}}\mu(J).
$$
\end{definition}

In other words, $F(x)$ is a lower bound on the measure of a region which contains $x$, but fails the stopping criterion.  The local size bound provides the link between the algorithm and a quantity that we can compute.

\begin{theorem}[{\cite{BurrKrahmerYap:ContinuousAmortization,Burr:2016}}]\label{thm:local-sz-bd}    
Let $X$ and $\mathcal{S}$ be defined as above, $C$ a stopping criterion on $\mathcal{S}$, and $F$ a local size bound for $C$.  Let $h:\mathbb{R}_{\geq 0}\rightarrow\mathbb{R}$ be a non-increasing function, and let $P$ be the final partition formed by Algorithm~\ref{algorithm:abstract}, which recursively subdivides the input region $I$, subject to Assumption 1.  The sum of $h$ applied to the regions in $P$ is bounded as follows:
$$
\sum_{J\in P}h(\mu(J))\leq\max\left\{h(\mu(I)),\int_I\frac{h(\varepsilon_1F(x))}{\varepsilon_1F(x)}d\mu\right\}.
$$
If $h\equiv 1$, i.e., $h$ is the constant function, then this integral counts the number of regions formed by Algorithm~\ref{algorithm:abstract}.  In addition, if the algorithm does not terminate, then the integral is infinite.
\end{theorem}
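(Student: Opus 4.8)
The plan is to bound the unknown, adaptively‑generated sum on the left by the fixed integral on the right, using the local size bound $F$ as the bridge between the algorithm and a computable quantity. First I would dispose of the degenerate case: if no subdivision ever occurs then $P=\{I\}$, the left‑hand side is $h(\mu(I))$, and the asserted inequality holds because $h(\mu(I))$ is the first term of the maximum. So assume the subdivision tree has at least one internal node; then every $J\in P$ has a parent. The key structural fact is that an internal node $J'$ was subdivided by the algorithm, so $C(J')=\false$, and hence, directly from Definition~\ref{def:localsizebound}, $F(x)\leq\mu(J')$ for every $x\in J'$. Now take a terminal region $J\in P$ with parent $J'$: since $J\subseteq J'$ and $\mu(J)\geq\varepsilon_1\mu(J')$ by Assumption~1, we get $\mu(J)\geq\varepsilon_1 F(x)$ for every $x\in J$. (Only Assumption~1 is needed here; the diameter hypothesis plays no role.)

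The second step is the amortization itself: spread the cost $h(\mu(J))$ of each terminal region evenly over its points. Writing $h(\mu(J))=\int_J \frac{h(\mu(J))}{\mu(J)}\,d\mu$ and using $\varepsilon_1 F(x)\leq\mu(J)$ together with the fact that $h$ is non‑increasing, one bounds the integrand pointwise by $\frac{h(\varepsilon_1 F(x))}{\varepsilon_1 F(x)}$; summing over the finitely many $J\in P$, whose union is $I$ and whose overlaps are null, collapses the sum of these integrals to $\int_I \frac{h(\varepsilon_1 F(x))}{\varepsilon_1 F(x)}\,d\mu$, the second term of the maximum. The counting statement is then the special case $h\equiv 1$, where each term on the left is $1$ and the left‑hand side is $|P|$.

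For the claim that non‑termination forces the integral to be infinite, I would apply the same reasoning to the finite partial partitions $P_0=\{I\},P_1,P_2,\dots$ obtained after successive subdivision steps. The internal‑node fact above needs no assumption about termination, so for $j\geq 1$ every region of $P_j$ has a subdivided parent and hence measure at least $\varepsilon_1 F(x)$ at each of its points; running the amortization with $h\equiv 1$ gives $|P_j|\leq\int_I \frac{d\mu}{\varepsilon_1 F(x)}$. If the algorithm never stops, then (using Assumption~1, which bounds the branching) $|P_j|\to\infty$, so the integral diverges.

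The step I expect to be the real obstacle is making the amortization inequality and its bookkeeping rigorous at this level of generality: one has to know $F$ is measurable enough for the integrand to be defined, decide how to read $\frac{h(\varepsilon_1 F(x))}{\varepsilon_1 F(x)}$ where $F(x)=0$ (as $+\infty$, consistent with the divergent case), and---most delicately---track the sign of $h$, since the clean pointwise charge bound $\frac{h(\mu(J))}{\mu(J)}\leq\frac{h(\varepsilon_1 F(x))}{\varepsilon_1 F(x)}$ is transparent when $h\geq0$ (in particular for the counting case $h\equiv1$ and for the non‑negative $h$ used in the bit‑complexity applications) but needs the first term $h(\mu(I))$ of the maximum, or a separate argument, to survive in general. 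A minor secondary point is that the ``parent was subdivided'' argument presupposes an honest finite partition, which is exactly why the non‑terminating case is routed through the truncations $P_j$ rather than through $P$.
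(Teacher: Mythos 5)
The paper does not prove Theorem~\ref{thm:local-sz-bd}; it is quoted from \cite{BurrKrahmerYap:ContinuousAmortization,Burr:2016} without an argument, so there is no internal proof to compare against. Your proposal is the standard continuous amortization argument used in those references and is correct for the purposes to which the paper puts the theorem: the chain $C(J')=\false \Rightarrow F(x)\leq\mu(J')$ for $x\in J'$, then $\mu(J)\geq\varepsilon_1\mu(J')\geq\varepsilon_1F(x)$ via Assumption~1, then charging $h(\mu(J))/\mu(J)$ pointwise and bounding by $h(\varepsilon_1F(x))/(\varepsilon_1F(x))$ using monotonicity of $h$, and finally summing over the partition. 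The $\max$ with $h(\mu(I))$ is there exactly for the no-split case, as you say, and routing the non-termination claim through the finite truncations $P_j$ is also the right move. Your caveat about the sign of $h$ is a genuine observation: the pointwise inequality $h(\mu(J))/\mu(J)\leq h(\varepsilon_1F(x))/(\varepsilon_1F(x))$ uses that a smaller nonnegative numerator over a larger positive denominator shrinks, and can indeed reverse when both values of $h$ are negative; as stated in the present paper (with $h:\mathbb{R}_{\geq 0}\to\mathbb{R}$) that hypothesis is implicit, but every $h$ used here ($h\equiv 1$ and the bit-cost functions $h_0,h_1$) is nonnegative, so nothing in the paper is affected. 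The measurability and $F(x)=0$ points you raise are likewise handled in the source and are harmless for the applications.
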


We observe that we can use the continuous amortization integral to express the complexity of Algorithm~\ref{algorithm:abstract} in the particular case where each stopping criterion is a diameter-distance test.

\begin{lemma}\label{lem:diameter-distance:lsb}
Let $X$ and $\mathcal{S}$ be as above and $C$ a stopping criterion on $\mathcal{S}$ with constant $K$ and closed set $V$.  Suppose that the subdivisions by Algorithm~\ref{algorithm:abstract} are subject to the two additional conditions following Algorithm~\ref{algorithm:abstract}.  Then, 
$$
F(x)=\varepsilon_1^{1+\frac{\ln(Kd(x,V))-\ln(\diam(I))}{\ln(\varepsilon_2)}}\mu(I)
$$
is a local size bound for $C$.
\end{lemma}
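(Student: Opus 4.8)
The plan is to mimic the argument in the proof of Proposition~\ref{prop:global-size-bound}, but applied pointwise rather than globally, and then to verify that the resulting point estimate satisfies Definition~\ref{def:localsizebound}. Fix $x\in X$, and let $J\in\mathcal{S}$ be any region with $J\ni x$ and $C(J)=\false$. The goal is to show $\mu(J)\geq F(x)$ with $F$ as in the statement; taking the infimum over all such $J$ then gives the conclusion.

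First I would extract a lower bound on $\diam(J)$. Since $C(J)=\false$, the contrapositive of the diameter-distance condition forces $\diam(J)\geq K\max_{y\in J}d(y,V)\geq Kd(x,V)$, because $x\in J$. Next I would translate this diameter bound into a bound on how deep $J$ could sit below $I$ in a hypothetical subdivision tree: if we imagine repeatedly subdividing $I$ down to $J$, Assumption~2 gives $\diam(J)\leq\varepsilon_2^{k}\diam(I)$ after $k$ steps, so the depth $k$ of $J$ satisfies $Kd(x,V)\leq\varepsilon_2^{k}\diam(I)$, i.e., taking logarithms and using $\ln\varepsilon_2<0$,
$$
k\leq\frac{\ln(Kd(x,V))-\ln(\diam(I))}{\ln(\varepsilon_2)}.
$$
Then Assumption~1 gives $\mu(J)\geq\varepsilon_1^{k}\mu(I)$, and since $0<\varepsilon_1<1$ the right-hand side is decreasing in $k$; substituting the upper bound on $k$ yields
$$
\mu(J)\geq\varepsilon_1^{\frac{\ln(Kd(x,V))-\ln(\diam(I))}{\ln(\varepsilon_2)}}\mu(I).
$$
Comparing with the claimed $F(x)$, which carries an extra factor of $\varepsilon_1^{1}$, we see $F(x)$ is in fact slightly smaller than this bound, so $F(x)\leq\mu(J)$ holds and $F$ is a valid local size bound.

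The one point that needs care — and I expect it to be the main (mild) obstacle — is that an arbitrary $J\in\mathcal{S}$ containing $x$ and failing $C$ need not literally arise in any subdivision tree rooted at $I$; the "depth $k$" argument above is really just a bookkeeping device. The clean way to handle this is to avoid invoking a tree at all: directly, $\diam(J)\geq Kd(x,V)$ is an unconditional consequence of $C(J)=\false$, and one only needs a lower bound for $\mu(J)$ in terms of $\diam(J)$. If such a bound is not available for a single region in full generality, one argues instead that for the purposes of Theorem~\ref{thm:local-sz-bd} it suffices to bound $\mu(J)$ for regions actually produced by Algorithm~\ref{algorithm:abstract}, for which the tree-depth interpretation is legitimate and Assumptions~1 and~2 apply verbatim; this is precisely the setting in which local size bounds are used. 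Either way, the extra factor of $\varepsilon_1$ in $F(x)$ gives a comfortable margin, and no delicate estimate is required beyond the monotonicity of $t\mapsto\varepsilon_1^{t}$ and the logarithm manipulation above.
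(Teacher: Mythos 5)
Your proof is correct and follows the same route as the paper: the paper's own proof is just a one-line sketch deferring to Proposition~\ref{prop:global-size-bound}, and your argument fills in that sketch by applying the contrapositive of the diameter-distance condition to obtain $\diam(J) \geq K d(x,V)$, then trading this for a depth bound via Assumption~2 and a measure bound via Assumption~1. Your two supplementary observations, namely that the stated $F(x)$ carries a spare factor of $\varepsilon_1$ relative to what the argument actually yields and that the depth-tracking implicitly restricts $J$ to regions reachable by subdivision from $I$, are both accurate and are precisely the details the paper's sketch elides.
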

\begin{proof}[Proof Sketch]
The proof is identical to that of Proposition~\ref{prop:global-size-bound} except that we begin with the condition that $\diam(J)\geq Kd(x,V)$ from the definition of a diameter-distance test.
\end{proof}

In Algorithm~\ref{algorithm:abstract}, we have multiple stopping criteria.  Therefore, for each $C_i$, we can define a local size bound $F_i:X\rightarrow\mathbb{R}_{\geq 0}$.  Moreover, since at least one of the stopping criteria must be true, we can take the maximum of all of them for the local size bound for Algorithm~\ref{algorithm:abstract}.  In particular, we have the following result:

\begin{proposition}\label{prop:adaptive-size-bound}
Suppose that the stopping criteria $C_1,\dots,C_\ell$ in Algorithm~\ref{algorithm:abstract} are all diameter-distance tests with associated positive constants $K_i$ and closed subsets $V_i$.  Furthermore, assume that the intersection $\bigcap V_i$ is empty.  Let $I$ be the initial input region.  Then, the number of regions constructed by the algorithm is at most 
$$
\max\left\{1,\mu(I)^{-1}\int_I\min_i\left\{\varepsilon_1^{-1+\frac{\ln(\diam(I))-\ln(K_id(x,V_i))}{\ln(\varepsilon_2)}}\right\}d\mu\right\}.
$$
\end{proposition}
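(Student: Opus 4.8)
The plan is to combine Lemma~\ref{lem:diameter-distance:lsb}, which gives an explicit local size bound for each individual diameter-distance test $C_i$, with Theorem~\ref{thm:local-sz-bd} applied to the constant function $h\equiv 1$. First I would observe that by Lemma~\ref{lem:diameter-distance:lsb}, for each $i$ the function
$$
F_i(x)=\varepsilon_1^{1+\frac{\ln(K_id(x,V_i))-\ln(\diam(I))}{\ln(\varepsilon_2)}}\mu(I)
$$
is a local size bound for $C_i$. The key conceptual step is then to show that $F(x):=\max_i F_i(x)$ is a local size bound for the \emph{combined} stopping criterion of Algorithm~\ref{algorithm:abstract}, i.e.\ for the predicate ``$C_i(J)=\true$ for some $i$''. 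This follows because if $J$ fails the combined criterion, then $C_i(J)=\false$ for every $i$, so $\mu(J)\geq F_i(x)$ for every $i$ and every $x\in J$; taking the maximum over $i$ gives $\mu(J)\geq F(x)$, which is exactly the defining inequality in Definition~\ref{def:localsizebound}.

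Next I would feed $F$ into Theorem~\ref{thm:local-sz-bd} with $h\equiv 1$. The theorem states that the number of regions is at most
$$
\max\left\{1,\int_I\frac{1}{\varepsilon_1F(x)}\,d\mu\right\},
$$
since $h(\mu(I))=1$ and $h(\varepsilon_1F(x))/(\varepsilon_1F(x))=1/(\varepsilon_1F(x))$. It then remains to rewrite $1/(\varepsilon_1F(x))$ in the claimed form. Since $F(x)=\max_i F_i(x)$, we have $1/(\varepsilon_1F(x))=\min_i 1/(\varepsilon_1F_i(x))$, and a direct computation with the exponent in $F_i$ gives
$$
\frac{1}{\varepsilon_1F_i(x)}=\mu(I)^{-1}\,\varepsilon_1^{-1-\frac{\ln(K_id(x,V_i))-\ln(\diam(I))}{\ln(\varepsilon_2)}}=\mu(I)^{-1}\,\varepsilon_1^{-1+\frac{\ln(\diam(I))-\ln(K_id(x,V_i))}{\ln(\varepsilon_2)}}.
$$
Pulling the constant $\mu(I)^{-1}$ out of the integral and the $\min_i$ yields precisely the stated bound.

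The main thing to be careful about — rather than a deep obstacle — is the role of the hypothesis $\bigcap V_i=\emptyset$ and the finiteness/termination issue. If $\bigcap V_i$ were nonempty, then at a point $x$ in that intersection every $d(x,V_i)=0$, so every $F_i(x)=0$, hence $F(x)=0$ and the integrand $1/(\varepsilon_1F(x))$ blows up, making the integral infinite; Theorem~\ref{thm:local-sz-bd} then correctly reports that the algorithm need not terminate. The emptiness assumption guarantees that for each $x\in I$ at least one $d(x,V_i)>0$, so $F(x)>0$ pointwise; one should note, though, that pointwise positivity of the integrand does not by itself force the integral to be finite, so the proposition is genuinely an upper bound of the form ``number of regions $\leq$ (possibly infinite) integral,'' exactly as in Theorem~\ref{thm:local-sz-bd}. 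I would also remark that the additional Assumptions 1 and 2 on the subdivision scheme are used implicitly here: Assumption~1 ($\mu(J_j)\geq\varepsilon_1\mu(J)$) is the hypothesis of Theorem~\ref{thm:local-sz-bd}, and Assumption~2 ($\diam(J_j)\leq\varepsilon_2\diam(J)$) is what makes Lemma~\ref{lem:diameter-distance:lsb} applicable to produce the $F_i$. With these points in place the proof is a short assembly of the preceding results.
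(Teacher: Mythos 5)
Your overall strategy — take $F=\max_i F_i$ with the $F_i$ from Lemma~\ref{lem:diameter-distance:lsb}, verify this is a local size bound for the disjunction of the $C_i$, and plug into Theorem~\ref{thm:local-sz-bd} with $h\equiv 1$ — is exactly the route the paper intends (it gives no separate proof for the proposition, only the surrounding remarks), and your justification that $\max_i F_i$ satisfies Definition~\ref{def:localsizebound} for the combined criterion is correct.

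The one thing to fix is the final arithmetic step. Taking Lemma~\ref{lem:diameter-distance:lsb} at face value, $F_i(x)=\varepsilon_1^{1+\frac{\ln(K_id(x,V_i))-\ln(\diam(I))}{\ln(\varepsilon_2)}}\mu(I)$, so
$$
\frac{1}{\varepsilon_1F_i(x)}=\mu(I)^{-1}\varepsilon_1^{-2+\frac{\ln(\diam(I))-\ln(K_id(x,V_i))}{\ln(\varepsilon_2)}},
$$
whereas you report exponent $-1+\frac{\ln(\diam(I))-\ln(K_id(x,V_i))}{\ln(\varepsilon_2)}$; what you actually computed is $1/F_i(x)$, not $1/(\varepsilon_1F_i(x))$. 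Your answer coincides with the proposition's statement, but that is a fortuitous cancellation, not a derivation: the proof sketch of Lemma~\ref{lem:diameter-distance:lsb} (which copies the argument of Proposition~\ref{prop:global-size-bound}) lower-bounds $\mu(J_j)$, the measure of a terminal \emph{child} of the failing region $J$, whereas Definition~\ref{def:localsizebound} calls for a lower bound on $\mu(J)$ itself. The local size bound that actually matches the definition drops the leading $+1$ from the exponent, i.e.\ $F_i(x)=\varepsilon_1^{\frac{\ln(K_id(x,V_i))-\ln(\diam(I))}{\ln(\varepsilon_2)}}\mu(I)$, and with that version $1/(\varepsilon_1F_i(x))$ gives precisely the integrand in the proposition (this is consistent with the stated motivating-case simplification $(2\sqrt{n}/(K_id(x,V_i)))^n$ and with the non-adaptive bound in Equation~(\ref{eq:nonadaptive:bound})). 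To make your argument self-contained, either use the corrected $F_i$ or explicitly note that the Lemma's $F_i$ already absorbs a factor of $\varepsilon_1$ relative to Definition~\ref{def:localsizebound}, so the integrand is $1/F_i(x)$. The rest of your write-up — the role of $\bigcap V_i=\emptyset$, the possibility of a divergent integral, and the use of Assumptions 1 and 2 — is accurate and well observed.
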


We observe that in the motivating case, i.e., where $\varepsilon_1=2^{-n}$ and $\varepsilon_2=2^{-1}$, the continuous amortization integral simplifies to
$$
\max\left\{1,\int_I\min_i\left(\frac{2\sqrt{n}}{K_id(x,V_i)}\right)^nd\mu\right\}.
$$
We apply both the adaptive and non-adaptive bounds to the \PV\ algorithm as a specific example in Section~\ref{sec:worst-case-bounds}.

\section{Interval Methods and Diameter-Distance Tests}\label{Section:localsizebound}

In this section, we show that a common exclusion test which is based on the standard centered form is a diameter-distance test.  We begin with a brief review of the standard centered form, for more details, see, for example,~\cite{Ratschek:1984,MooreInterval}.

Let $Y$ be any set, $\mathcal{S}$ a collection of subsets of $Y$, and consider the function $f:Y\rightarrow\mathbb{R}$.  An {\em interval method} for $f$ is an algorithm $\square f$ such that for any subset $J\in\mathcal{S}$, $\square f(J)\supseteq f(J)$, where $f(J)$ is the image of $J$ under $f$.  In other words, $\square f(J)$ is an over-approximation for the image $f$ on $J$.  In most applications, $Y$ is a metric space, and we add the convergence condition for $\square f$, i.e., that for a sequence of domains $\{J_k\}$ which converge to a point $p$, then $\{\square f(J_k)\}$ converges to $f(p)$.

In our applications, we consider the case where $Y$ is $\mathbb{R}^n$, and the regions in $\mathcal{S}$ are axis aligned $n$-dimensional boxes, i.e., for $J\in\mathcal{S}$, $J=\prod_i [a_i,b_i]$.  In this case, most interval methods use {\em interval arithmetic}, i.e., arithmetic operations on intervals that produce the set-theoretic image as an interval.  In this section, we focus on the standard centered form for multivariate polynomials~\cite{Ratschek:1984,MooreInterval}.  Let $f\in\mathbb{R}[x_1,\dots,x_n]$ be a multivariate polynomial of total degree $d$ and $J$ an axis aligned box.  Let $m=m(J)$ be the midpoint of $J$, then the standard centered form for $f$ applied to $J$ is
$$
\square f(J)=f(m)+\sum_{|\alpha|=1}^d \frac{\partial^\alpha f(m)}{\alpha!}(J-m)^\alpha,
$$
where $\alpha\in\mathbb{N}^n$ and the notation is multi-index notation, i.e., $|\alpha|=\sum \alpha_i$, $\partial^\alpha f(m)=\partial_1^{\alpha_1}\dots\partial_n^{\alpha_n}f(m)$, $\alpha!=\prod (\alpha_i)!$, and $(J-m)^\alpha=\prod \left[\frac{a_i-b_i}{2},\frac{b_i-a_i}{2}\right]^{\alpha_i}$.  Since $(J-m)^{\alpha}$ is a product of intervals centered at zero, using interval arithmetic, this product simplifies to $\prod\left(\frac{b_i-a_i}{2}\right)^{\alpha_i}[-1,1]$.  In the special case where $J$ is an axis-aligned, $n$-dimensional cube, i.e., $J$ is a product of $n$ intervals all of the same width $w$, then all of the factors in $(J-m)^\alpha$ are identical, and the standard centered form can be rewritten as:
\begin{equation}\label{eq:simplified:centered:form}
\square f(J)=f(m)+\left(\sum_{|\alpha|=1}^d \frac{\left|\partial^\alpha f(m)\right|}{\alpha!}\left(\frac{w}{2}\right)^{|\alpha|}\right)[-1,1],
\end{equation}
The standard centered form is an interval version of a multivariate Taylor expansion centered at $m$, and the standard centered form has several nice properties including a very structured expression and fast convergence.

In the remainder of this section, we consider the following predicate:
$$
C(J)=\true\quad\text{if and only if}\quad 0\not\in\square f(J).
$$
If $0\not\in\square f(J)$, then we can directly conclude that $0\not\in f(J)$; we observe that the converse does not hold in general, but converges in the limit, i.e., if $\{J_k\}$ is a sequence of $n$-dimensional boxes whose limit is $p$, then either $C(J_k)=\false$ for some $k$ or $f(p)=0$.  It is often more efficient, in practice, to test $C(J)$ and subdivide $J$, if necessary, rather than to compute $f(J)$ directly.  In the remainder of this section, we prove that $C(J)$ is a diameter-distance test.  

\subsection{Bounds on Coefficients of Powers of Sines and Cosines}
\label{sec:sin-cos}

In this section, we prove a technical lemma on the magnitudes of the coefficients of sines and cosines.  The main result in this section is used in the following section to prove that $C(I)$ is a diameter-distance test.

\begin{lemma}\label{lemma:sinebound}
Suppose that for all $\theta$,
\begin{equation}\label{equation:sinebound}
\left|\sum_{j=0}^ka_j\cos^j(\theta)\sin^{k-j}(\theta)\right|\leq C.
\end{equation}
Then, $|a_j|\leq 2^{k+1}C$.
\end{lemma}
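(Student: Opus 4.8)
The plan is to turn this into a statement about the monomial coefficients of an ordinary polynomial and then to extract those coefficients with a Cauchy integral, using a one‑sided growth estimate on each complex half‑plane. First I would rewrite the hypothesis: since $\cos^j\theta\sin^{k-j}\theta=\sin^k\theta\,(\cot\theta)^j$, condition~\eqref{equation:sinebound} says exactly that $\bigl|\sin^k\theta\,P(\cot\theta)\bigr|\le C$ for all $\theta$, where $P(x)=\sum_{j=0}^k a_jx^j$. As $\theta$ runs over $(0,\pi)$ the quantity $x=\cot\theta$ runs over all of $\mathbb{R}$ and $\sin\theta=(1+x^2)^{-1/2}$, so the hypothesis is equivalent to
\[
|P(x)|\le C\,(1+x^2)^{k/2}\qquad\text{for all }x\in\mathbb{R}.
\]
The target bound $|a_j|\le 2^{k+1}C$ is now a pure statement about the coefficients of a polynomial of degree $\le k$ obeying this growth condition.

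Next I would upgrade the real‑line estimate to a complex one. Consider $P(z)/(1-iz)^k$: its only pole, at $z=-i$, lies in the open lower half‑plane, it tends to a finite limit as $z\to\infty$, so it is bounded and holomorphic on the closed upper half‑plane, and on $\mathbb{R}$ it has modulus $\le C$ because $|1-ix|=(1+x^2)^{1/2}$. By the maximum‑modulus principle for bounded holomorphic functions on a half‑plane (equivalently Phragmén--Lindelöf, or transport to the disk via the Cayley map), $|P(z)|\le C\,|1-iz|^k$ for $\operatorname{Im}z\ge 0$; symmetrically, working with $P(z)/(1+iz)^k$, one gets $|P(z)|\le C\,|1+iz|^k$ for $\operatorname{Im}z\le 0$. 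In both regions this gives $|P(z)|\le C\,(1+|z|)^k$.

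Finally I would recover the coefficients by Cauchy's formula: for $0\le j\le k$,
\[
a_j=\frac{1}{2\pi i}\oint_{|z|=\rho}\frac{P(z)}{z^{j+1}}\,dz .
\]
Splitting the circle $|z|=\rho$ into its upper and lower semicircles and applying the matching half‑plane estimate on each, every point of the contour satisfies $|P(z)|\le C(1+\rho)^k$, so the $ML$‑estimate gives $|a_j|\le C\,(1+\rho)^k/\rho^{\,j}$ for every $\rho>0$. Minimizing over $\rho$ (the optimum is $\rho=j/(k-j)$, with the endpoints $j=0,k$ handled by letting $\rho\to 0$ or $\rho\to\infty$) yields $|a_j|\le C\,k^k/\bigl(j^j(k-j)^{k-j}\bigr)$; since $k^k/\bigl(j^j(k-j)^{k-j}\bigr)=2^{kH(j/k)}\le 2^k$ with $H$ the binary entropy function, this proves $|a_j|\le 2^kC$, which is even slightly stronger than the claimed $2^{k+1}C$.

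The one point I expect to require care is the half‑plane maximum principle in the second step: one must verify that $P(z)/(1\mp iz)^k$ is genuinely bounded on the relevant closed half‑plane, which is the reason the weight $(1\mp iz)^k$ is exactly the right one, after which the estimate is an honest application of the maximum‑modulus principle rather than a growth argument. The remaining ingredients — the substitution $x=\cot\theta$, the Cauchy estimate, and the inequality $H\le 1$ — are routine.
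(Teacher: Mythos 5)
Your proof is correct, and it takes a genuinely different route from the paper's. The paper argues Fourier-theoretically: it expands each monomial $\cos^j\theta\sin^{k-j}\theta$ into a finite Fourier series via Euler's formula, observes that the Fourier coefficients of the whole sum are bounded by $2C$, then re-expands $\cos(n\theta)$ and $\sin(n\theta)$ back in the $\cos^j\sin^{k-j}$ basis with explicit Chebyshev-type binomial formulas, bounds those binomial sums, and assembles everything with the triangle inequality across four parity cases. Your approach instead converts the hypothesis, via $x=\cot\theta$, into the polynomial growth bound $|P(x)|\le C(1+x^2)^{k/2}$ on $\mathbb{R}$, upgrades it to $|P(z)|\le C(1+|z|)^k$ on all of $\mathbb{C}$ by applying Phragm\'en--Lindel\"of to $P(z)/(1\mp iz)^k$ on each half-plane, and then extracts $a_j$ with a Cauchy estimate on a circle of radius $\rho$, optimized at $\rho=j/(k-j)$ to give $|a_j|\le C\,k^k/(j^j(k-j)^{k-j})=2^{kH(j/k)}C\le 2^kC$. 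The trade-off: the paper's argument is elementary (no complex-analytic boundary principles, only trigonometric identities and binomial manipulations) at the cost of considerably heavier bookkeeping; yours is shorter and conceptually cleaner once Phragm\'en--Lindel\"of is available, and it even improves the constant from $2^{k+1}$ to $2^k$ with the $j$-dependent refinement $2^{kH(j/k)}$ for free.
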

\begin{proof}
Let $f(\theta)=\sum_{j=0}^ka_j\cos^j(\theta)\sin^{k-j}(\theta)$.  We observe that this is a square-integrable and $2\pi$-periodic function.  Moreover, its Fourier coefficients are bounded by $2 C$ since $|f(\theta)\cos(n\theta)|\leq C$ and $|f(\theta)\sin(n\theta)|\leq C$.

We now observe that $\cos(x)=\frac{1}{2}(e^{ix}+e^{-ix})$ and $\sin(x)=\frac{1}{2i}(e^{ix}-e^{-ix})$.  Therefore, 
$$
\cos^j(\theta)\sin^{k-j}(\theta)=\frac{1}{2^ki^{k-j}}(e^{ix}+e^{-ix})^j(e^{ix}-e^{-ix})^{k-j}=\frac{1}{2^ki^{k-j}}\sum_{l=0}^kb_le^{i(k-2l)x},
$$
where the $b_l$'s are sums and products of binomial coefficients.  Since $e^{i(k-2l)x}=\cos((k-2l)x)+i\sin((k-2l)x)$, it follows that $\cos^j(\theta)\sin^{k-j}(\theta)$ has a finite Fourier series whose nonzero terms are of the form $\cos(n\theta)$ and $\sin(n\theta)$ where $0\leq n\leq k$ and $k-n$ is even.  Therefore, the Fourier series of $f(\theta)$ is can be written as follows:
\begin{equation}\label{eq:mainFourier}
f(\theta)=\frac{c_0}{2}+\sum_{l=0}^{\lfloor (k-1)/2\rfloor}(c_{k-2l}\cos((k-2l)\theta)+d_{k-2l}\sin((k-2l)\theta))
\end{equation}
where all the constants are bounded by $2C$.

Suppose that $k\geq n$ and $k-n$ is even.  Then, since $\cos(nx)=\Re((\cos(x)+i\sin(x))^n)$ and $\sin(nx)=\Im((\cos(x)+i\sin(x))^n)$, we have the following\footnote{We note that the expression for $\cos(nx)$ is a multiple of of the Chebyshev polynomials of the first kind.}:
\begin{align*}
\cos(nx)&=(\sin^2(x)+\cos^2(x))^{\frac{k-n}{2}}\left(\sum_{m=0}^{\lfloor n/2\rfloor}(-1)^m\binom{n}{2m}\cos^{n-2m}(x)\sin^{2m}(x)\right)\\
  \sin(nx)&=(\sin^2(x)+\cos^2(x))^{\frac{k-n}{2}}\left(\sum_{m=0}^{\lfloor (n-1)/2\rfloor}(-1)^m\binom{n}{2m+1}\cos^{n-1-2m}(x)\sin^{2m+1}(x)\right).
\end{align*}
We observe that in these expansions, $\cos(nx)$ and $\sin(nx)$ are written as a linear combination of products of sines and cosines of degree $k$.  Reorganizing these sums, we find that
\begin{equation}\label{eq:cos(nx)}
  \cos(nx)=\sum_{m=0}^{\left\lfloor \frac{k}{2}\right\rfloor}\left[\sum_{p=\max\left\{0,m-\frac{k-n}{2}\right\}}^{\min\left\{\left\lfloor \frac{n}{2}\right\rfloor,m\right\}}(-1)^p\binom{n}{2p}\binom{\frac{k-n}{2}}{m-p}\right]\cos^{k-2m}(x)\sin^{2m}(x)
\end{equation}
  and
\begin{equation}\label{eq:sin(nx)}
 \sin(nx)=\sum_{m=0}^{\left\lfloor \frac{k-1}{2}\right\rfloor}\left[\sum_{p=\max\left\{0,m-\frac{k-n}{2}\right\}}^{\min\left\{\left\lfloor \frac{n-1}{2}\right\rfloor,m\right\}}(-1)^p\binom{n}{2p+1}\binom{\frac{k-n}{2}}{m-p}\right]\cos^{k-2m-1}(x)\sin^{2m+1}(x).
\end{equation}

We observe that in the formula for $\cos(nx)$, the coefficient of $\cos^{k-2l}(x)\sin^{2l}(x)$ can be bounded as follows:
\begin{align}
  \left|\sum_{p=\max\{0,m-(k-n)/2\}}^{\min\{\lfloor n/2\rfloor,m\}}(-1)^p\binom{n}{2p}\binom{\frac{k-n}{2}}{m-p}\right|&\leq\sum_{p=0}^{\lfloor n/2\rfloor}\binom{n}{2p}\sum_{q=0}^{(k-n)/2}\binom{\frac{k-n}{2}}{q}\notag\\
&\hspace*{-4cm}=\sum_{p=0}^{\lfloor n/2\rfloor}\left(\binom{n-1}{2p}+\binom{n-1}{2p-1}\right)\sum_{q=0}^{(k-n)/2}\binom{\frac{k-n}{2}}{q}\leq 2^{n-1}2^{(k-n)/2}=2^{\frac{n+k}{2}-1}\label{eq:upper:cos}
\end{align}
Similarly, the coefficient of $\cos^{k-2l-1}(x)\sin^{2l+1}(x)$ in the formula for $\sin(nx)$ is bounded by
\begin{align}
  \left|\sum_{p=\max\{0,m-(k-n)/2\}}^{\min\{\lfloor (n-1)/2\rfloor,m\}}(-1)^p\binom{n}{2p+1}\binom{\frac{k-n}{2}}{m-p}\right|&\leq\sum_{p=0}^{\lfloor (n-1)/2\rfloor}\binom{n}{2p+1}\sum_{q=0}^{(k-n)/2}\binom{\frac{k-n}{2}}{q}\notag\\
  &\hspace*{-5.4cm}=\sum_{p=0}^{\lfloor (n-1)/2\rfloor}\left(\binom{n-1}{2p+1}+\binom{n-1}{2p}\right)\sum_{q=0}^{(k-n)/2}\binom{\frac{k-n}{2}}{q}\leq 2^{n-1}2^{(k-n)/2}=2^{\frac{n+k}{2}-1}.\label{eq:upper:sin}
\end{align} 
Moreover, we observe that these bounds are independent of $m$ and $p$, depending only on $n$ and $k$

In order to bound the coefficients $a_j$, we substitute the formulas above for $\cos(nx)$ and $\sin(nx)$ into the Fourier series for $f$.  In particular, we substitute $n=k-2l$ into $2^{\frac{n+k}{2}-1}$ to get $2^{k-l-1}$.  Moreover, by considering the powers in Equations~(\ref{eq:cos(nx)}) and~(\ref{eq:sin(nx)}), we conclude that if $k-j$ is even, then $\cos^j(\theta)\sin^{k-j}(\theta)$ only appears in the expansion of the cosine terms (perhaps including the constant term) in the Fourier series for $f(\theta)$, while if $k-j$ is odd, then $\cos^j(\theta)\sin^{k-j}(\theta)$ only appears in the expansion of the sine terms in the Fourier series for $f(\theta)$.  We can then isolate the occurrences of $\cos^j(\theta)\sin^{k-j}(\theta)$ (there are four cases, depending on the parity of $k$ and $j$).  Then, using the triangle inequality and the upper bounds in Inequalities~(\ref{eq:upper:cos}) and~(\ref{eq:upper:sin}), we find that 
$$
|a_j|\leq 2C\sum_{l=0}^{\left\lfloor\frac{k}{2}\right\rfloor}2^{k-l-1}<2^{k+1}C,
$$
which completes the proof.
\end{proof}

\begin{corollary}\label{corollary:multifourier}
Fix $k_0\in\mathbb{N}$, and suppose that for all $\theta_1$, $\dots$, $\theta_m$, 
$$
\left|\sum_{k_1=0}^{k_0}\sum_{k_2=0}^{k_1}\dots\sum_{k_m=0}^{k_{m-1}}a_{(k_1,\dots,k_m)}\prod_{j=1}^m\left(\sin^{k_{j-1}-k_j}(\theta_j)\cos^{k_j}(\theta_j)\right)
\right|\leq C.
$$
Then, $a_{(k_1,\dots,k_m)}\leq 2^{m(k_0+1)} C$.
\end{corollary}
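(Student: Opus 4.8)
The plan is to prove Corollary~\ref{corollary:multifourier} by induction on the number of angles $m$, using Lemma~\ref{lemma:sinebound} both as the base case and as the engine of the inductive step. For $m=1$, the hypothesis is exactly Inequality~(\ref{equation:sinebound}) with $k=k_0$ and the single summation index $k_1$ playing the role of $j$; Lemma~\ref{lemma:sinebound} then gives $|a_{(k_1)}|\leq 2^{k_0+1}C=2^{1\cdot(k_0+1)}C$, which is the claimed bound.

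For the inductive step, assume the statement holds for $m-1$ angles, for every value of the ``top index'' and every constant. Given the $m$-variable hypothesis, freeze $\theta_2,\dots,\theta_m$ and regard the left-hand side as a function of $\theta_1$ alone. Grouping terms by the power of $\cos(\theta_1)$, it becomes $\sum_{k_1=0}^{k_0} b_{k_1}\sin^{k_0-k_1}(\theta_1)\cos^{k_1}(\theta_1)$, where
$$
b_{k_1}=b_{k_1}(\theta_2,\dots,\theta_m)=\sum_{k_2=0}^{k_1}\cdots\sum_{k_m=0}^{k_{m-1}}a_{(k_1,\dots,k_m)}\prod_{j=2}^m\left(\sin^{k_{j-1}-k_j}(\theta_j)\cos^{k_j}(\theta_j)\right).
$$
This is a homogeneous trigonometric polynomial of degree $k_0$ in $\theta_1$ that is bounded in absolute value by $C$ for every $\theta_1$, so Lemma~\ref{lemma:sinebound} (with $k=k_0$) yields $|b_{k_1}|\leq 2^{k_0+1}C$ for every $k_1$ and every choice of $\theta_2,\dots,\theta_m$.

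Now fix $k_1\in\{0,\dots,k_0\}$. After relabelling $\theta_2,\dots,\theta_m$ as $\theta_1,\dots,\theta_{m-1}$ and $k_2,\dots,k_m$ as $k_1,\dots,k_{m-1}$, the quantity $b_{k_1}$ has exactly the nested form appearing in the corollary, now with $m-1$ angles and top index $k_1$, and it is bounded uniformly by $2^{k_0+1}C$. Applying the inductive hypothesis with ``$k_0$'' set to $k_1$ and ``$C$'' set to $2^{k_0+1}C$ gives, for every admissible tuple,
$$
|a_{(k_1,k_2,\dots,k_m)}|\leq 2^{(m-1)(k_1+1)}\cdot 2^{k_0+1}C.
$$
Since $k_1\leq k_0$, we have $2^{(m-1)(k_1+1)}\leq 2^{(m-1)(k_0+1)}$, so the right-hand side is at most $2^{(m-1)(k_0+1)+k_0+1}C=2^{m(k_0+1)}C$, completing the induction. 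The only subtlety requiring attention is precisely this last point: after freezing $\theta_1$, the coefficient functions $b_{k_1}$ are sub-problems whose top degree is $k_1$ rather than $k_0$, because the inner summation ranges depend on $k_1$; this is harmless since the factor $2^{(m-1)(k_1+1)}$ produced by the inductive hypothesis is monotone in $k_1$ and is absorbed into $2^{(m-1)(k_0+1)}$. A second, purely bookkeeping, point is matching the monomial $\cos^{k_1}(\theta_1)\sin^{k_0-k_1}(\theta_1)$ to the form $a_j\cos^j(\theta)\sin^{k-j}(\theta)$ in Lemma~\ref{lemma:sinebound} via the substitution $j=k_1$, $k=k_0$.
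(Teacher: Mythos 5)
Your proof is correct and follows essentially the same route as the paper: induction on $m$ with Lemma~\ref{lemma:sinebound} as the base case, freezing $\theta_2,\dots,\theta_m$ and applying the lemma in $\theta_1$ to bound the coefficient functions $b_{k_1}$ uniformly, then invoking the inductive hypothesis with top index $k_1$ and absorbing $2^{(m-1)(k_1+1)}$ into $2^{(m-1)(k_0+1)}$ via $k_1\leq k_0$. The subtlety you flag about the sub-problem's top degree being $k_1$ is exactly the point the paper handles with the same monotonicity observation.
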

\begin{proof}
Proof by induction on $m$; the base case is Lemma~\ref{lemma:sinebound}.  For the inductive case, we fix $\theta_2,\dots,\theta_m$.  For each $k_1$, we define 
$$
a_{k_0-k_1}= \sum_{k_2=0}^{k_1}\dots\sum_{k_m=0}^{k_{m-1}}a_{(k_1,\dots,k_m)}\prod_{j=2}^m\left(\sin^{k_{j-1}-k_j}(\theta_j)\cos^{k_j}(\theta_j)\right).
$$
Then, the given inequality simplifies to
$$
\left|\sum_{k_1=0}^{k_0}a_{k_0-k_1}\sin^{k_0-k_1}(\theta_1)\cos^{k_1}(\theta_1)\right|<C.
$$
By Lemma~\ref{lemma:sinebound}, $|a_{k_0-k_1}|\leq 2^{k_0+1} C$.  Since $\theta_2,\dots,\theta_m$ are fixed, but arbitrary, and the bound does not depend on $\theta_2,\dots,\theta_m$, we can apply the inductive hypothesis to $a_{k_0-k_1}$ to give that $|a_{(k_1,\dots,k_m)}|\leq 2^{(m-1)(k_1+1)}|a_{k_0-k_1}|\leq 2^{(m-1)(k_1+1)+(k_0+1)} C$.  Since $k_1\leq k_0$, the claim follows.
\end{proof}

\subsection{Exclusion Interval Arithmetic Tests are Distance-Diameter Tests}\label{sec:exclusion:distance-diameter}

In this section, we use the results of Section~\ref{sec:sin-cos} to prove that the predicate on $n$-dimensional cubes\footnote{This analysis can be extended to the case of a region which is not an $n$-dimensional cube, by considering the smallest $n$-dimensional cube containing the $n$-dimensional box.  We leave the details to the interested reader.} $J$ where $C(J)=\true$ if and only if $0\not\in\square f(J)$ is a distance-diameter test.  In this case, the set in the definition of a distance-diameter test is the complex variety $V_\mathbb{C}(f)$.  As our first step, we reduce a higher-dimensional problem to a collection of one-dimensional problems as follows:

\begin{definition}
Let $f\in\mathbb{R}[x_1,\dots,x_n]$, $p\in\mathbb{R}^n$, and $v\in S^{n-1}$.  We define $f_v(t)$ to be the univariate polynomial passing through $p$ and in the direction $v$, i.e., $f_v(t)=f(p+tv)$, see Figure~\ref{fig:fp}.  Next, we define $\Sigma_{f_v}$ to be the sum of the reciprocals of the complex roots of $f_v$, i.e.,
$$
\Sigma_{f_v}(p)=\sum_{s\in V_\mathbb{C}(f_v)}\frac{1}{|s|}.
$$
\end{definition}

\begin{figure}[hbt]
  \centering
%\begin{tikzpicture}[scale=.45]
%\input{"Curve Approximation5.tex"}
%  \filldraw (1.25,0) circle (.1);
%  \draw (1.25,-.1) node[below] {$p$};
%  {\draw[color = blue] (3,-.875) -- (-3,2.125);}
%  {\draw[->] (1.25,0) -- (0.35557280900008,0.44721359549996) node[below]{$\vec{v}$};}
%  \filldraw (2.399696822,-.5748484110) circle (.1);
%  \filldraw (-1.659818426,1.454909213) circle (.1);
%  \filldraw (0.0485831414,0.6007084293) circle (.1);
%  \draw (2.499696822,-.5748484110) node[right]{$\alpha_3$};
%  \draw (-1.759818426,1.454909213) node[left]{$\alpha_1$};
%  \draw (0.1485831414,0.6007084293) node[above right]{$\alpha_2$};
%\end{tikzpicture}
\includegraphics{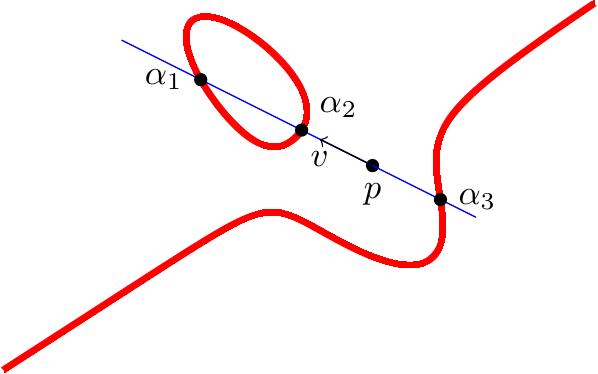}
\caption{For a polynomial $f \in \RR[x, y]$ and a point $p \in \RR^2$, we consider the roots of $f$, $\alpha_1, \alpha_2, \alpha_3$, in the direction of a unit vector $v$.}
\label{fig:fp}
\end{figure}

In~\cite[Lemma 2.1]{BurrKrahmer:2012}, it was shown that for a univariate polynomial $g\in\mathbb{R}[x]$ with complex roots $V_\mathbb{C}(g)$,
$$
\left|\frac{g^{(n)}(x)}{g(x)}\right|\leq\left(\sum_{\alpha\in V_\mathbb{C}(g)}\frac{1}{|x-\alpha|}\right)^n.
$$
This link between the Taylor coefficients of $g$ to the geometry of the zero set of $g$ can be extended to the current setting since $f_v$ is a univariate polynomial.  We introduce the notation $\dist_{\CC}(p,f)$ to represent the {\em complex} distance between the point $p$ and the variety $V_{\CC}(f)$.  Explicitly, we have the following lemma:

\begin{lemma}\label{lemma:distancebound}
Let $f\in\mathbb{R}[x_1,\dots,x_n]$, $p\in\mathbb{R}^n$, and $v\in S^{n-1}$.  Then
$$
\left|\left.\frac{1}{f(p)}\frac{d^kf(p+tv)}{dt^k}\right|_{t=0}\right|\leq\left(\Sigma_{f_v}(p)\right)^k\leq\left(\frac{\deg(f)}{\dist_{\CC}(p,f)}\right)^k.
$$
\end{lemma}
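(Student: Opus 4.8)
The plan is to establish the two inequalities separately, each reducing to a one-dimensional fact already available in the excerpt. Throughout, we implicitly assume $f(p)\neq 0$ (equivalently $p\notin V_{\CC}(f)$), so that the left-hand side is defined and $\dist_{\CC}(p,f)>0$; otherwise the right-hand side is $\infty$ and there is nothing to prove. The key observation is that $f_v(t)=f(p+tv)$ is a univariate polynomial in $t$ of degree at most $\deg(f)$, with $f_v(0)=f(p)$, whose $k$-th derivative at $0$ is exactly $\left.\frac{d^kf(p+tv)}{dt^k}\right|_{t=0}$, and whose complex roots $s\in V_{\CC}(f_v)$ are precisely the complex numbers with $p+sv\in V_{\CC}(f)$.

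For the first inequality, I would apply the cited estimate \cite[Lemma 2.1]{BurrKrahmer:2012} to the univariate polynomial $g=f_v$ evaluated at $x=0$ with exponent $k$, which gives
$$
\left|\frac{1}{f(p)}\left.\frac{d^kf(p+tv)}{dt^k}\right|_{t=0}\right|=\left|\frac{f_v^{(k)}(0)}{f_v(0)}\right|\leq\left(\sum_{s\in V_{\CC}(f_v)}\frac{1}{|0-s|}\right)^k=\left(\Sigma_{f_v}(p)\right)^k,
$$
where the final equality is just the definition of $\Sigma_{f_v}(p)$. One should check here that the sum defining $\Sigma_{f_v}(p)$ uses the same multiplicity convention for roots as in \cite[Lemma 2.1]{BurrKrahmer:2012}; this will not affect the next step.

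For the second inequality, note that for every root $s$ of $f_v$ the point $p+sv$ lies in $V_{\CC}(f)\subseteq\CC^n$, and its Euclidean distance to $p$ equals $\|sv\|=|s|\,\|v\|=|s|$ since $v\in S^{n-1}$. Hence $|s|\geq\dist_{\CC}(p,f)$, i.e. $\frac{1}{|s|}\leq\frac{1}{\dist_{\CC}(p,f)}$. Summing this over all roots of $f_v$ — of which there are at most $\deg(f_v)\leq\deg(f)$ — gives $\Sigma_{f_v}(p)\leq\frac{\deg(f)}{\dist_{\CC}(p,f)}$, and raising both sides to the $k$-th power (legitimate since both are nonnegative) completes the proof.

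The argument is short, so the main ``obstacle'' is really just careful bookkeeping: verifying that the norm of $sv$ in $\CC^n$ is genuinely $|s|$ for complex $s$ and a real unit vector $v$, reconciling the multiplicity conventions so that the number of summands in $\Sigma_{f_v}(p)$ is truly bounded by $\deg(f)$, and disposing of the degenerate cases where $f_v$ is a nonzero constant (then both sides equal $1$ when $k=0$ and $0$ when $k\geq 1$) or where $f$ itself is constant. None of these should present real difficulty.
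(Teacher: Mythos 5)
Your proposal is correct and follows essentially the same route as the paper: reduce to the univariate polynomial $f_v$, invoke \cite[Lemma 2.1]{BurrKrahmer:2012} for the first inequality, and for the second use that each root $s$ of $f_v$ corresponds to a point $p+sv\in V_{\CC}(f)$ at distance $|s|$ from $p$, so each summand in $\Sigma_{f_v}(p)$ is bounded by $1/\dist_{\CC}(p,f)$ and there are at most $\deg(f)$ of them. The extra bookkeeping you flag (the $k=0$ case, degenerate $f_v$, and $\|sv\|=|s|$) is fine and slightly more explicit than the paper's one-line dismissal.
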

\begin{proof}
The claim is trivial when $k=0$.  Since $f_v$ is a univariate polynomial, the first inequality follows directly from~\cite[Lemma 2.1]{BurrKrahmer:2012}.  The second inequality follows from the fact that in the sum for $\Sigma_{f_v}(p)$, there are at most $\deg(f)$ terms, and each element of the sum is the inverse of the distance between $p$ and a point on $V_{\CC}(f)$, each of which is, in turn, bounded above by the inverse of the distance to the closest point on $V_{\CC}(f)$.
\end{proof}

We now use this upper bound along with the results from Section~\ref{sec:sin-cos} to bound individual Taylor coefficients in the multivariate Taylor expansion.

\begin{proposition}\label{A:proposition:derivativebound}
Let $f\in\mathbb{R}[x_1,\dots,x_n]$ and $p\in\mathbb{R}^n$.  Then, for all multi-indices $\alpha\in\mathbb{N}^n$ with $k=|\alpha|$, 
$$
\left|\frac{1}{f(p)}\binom{k}{\alpha}\frac{\partial^{k}f}{\partial x^\alpha}(p)\right|\leq2^{(n-1)(|\alpha|+1)}\left(\frac{\deg(f)}{\dist_{\CC}(p,f)}\right)^{|\alpha|}.
$$
where $\binom{k}{\alpha}$ is the multinomial coefficient.
\end{proposition}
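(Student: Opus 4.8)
\medskip
\noindent\emph{Proof proposal.}
The plan is to exhibit the multinomial Taylor coefficient $\binom{k}{\alpha}\frac{\partial^kf}{\partial x^\alpha}(p)$ as a coefficient in a trigonometric expansion of a directional derivative of $f$ at $p$, and then to extract it using Corollary~\ref{corollary:multifourier}. If $\dist_{\CC}(p,f)=0$ the right-hand side is $+\infty$ when $|\alpha|\ge 1$, and for $|\alpha|=0$ the claim is the trivial inequality $1\le 2^{n-1}$, so I may assume $f(p)\ne 0$. The first step is the chain rule: for $v\in S^{n-1}$,
$$
\left.\frac{d^kf(p+tv)}{dt^k}\right|_{t=0}=\sum_{|\alpha|=k}\binom{k}{\alpha}\frac{\partial^kf}{\partial x^\alpha}(p)\,v^\alpha,
$$
a homogeneous form of degree $k$ in the coordinates of $v$ whose coefficients are precisely the quantities we wish to bound.

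By Lemma~\ref{lemma:distancebound}, after dividing by $f(p)$ the modulus of this form is at most $C:=\bigl(\deg(f)/\dist_{\CC}(p,f)\bigr)^{k}$ for every $v\in S^{n-1}$. I would then substitute a spherical parametrization of $S^{n-1}$, for instance $v_i=\bigl(\prod_{j<i}\cos\theta_j\bigr)\sin\theta_i$ for $1\le i\le n-1$ and $v_n=\prod_{j<n}\cos\theta_j$, which satisfies $\sum_iv_i^2\equiv 1$ and hence defines a point of $S^{n-1}$ for all real $\theta_1,\dots,\theta_{n-1}$. A routine bookkeeping computation then yields
$$
v^\alpha=\prod_{j=1}^{n-1}\sin^{\,k_{j-1}-k_j}(\theta_j)\,\cos^{\,k_j}(\theta_j),\qquad k_j:=\alpha_{j+1}+\cdots+\alpha_n,
$$
where $k_0=k\ge k_1\ge\cdots\ge k_{n-1}\ge 0$ and $k_{j-1}-k_j=\alpha_j$; moreover $\alpha\mapsto(k_1,\dots,k_{n-1})$ is a bijection from the set of multi-indices of weight $k$ onto the index set of Corollary~\ref{corollary:multifourier} with $m=n-1$ and $k_0=k$.

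Consequently $\frac{1}{f(p)}\left.\frac{d^kf(p+tv)}{dt^k}\right|_{t=0}$, regarded as a function of $(\theta_1,\dots,\theta_{n-1})$, is exactly an expression of the type bounded in Corollary~\ref{corollary:multifourier}, with coefficients $a_{(k_1,\dots,k_{n-1})}=\frac{1}{f(p)}\binom{k}{\alpha}\frac{\partial^kf}{\partial x^\alpha}(p)$ and bound $C$. Applying that corollary with $m=n-1$ gives $\bigl|a_{(k_1,\dots,k_{n-1})}\bigr|\le 2^{(n-1)(k+1)}C$, which is exactly the asserted inequality since $k=|\alpha|$. I expect the only delicate point to be the middle step: verifying that the chosen spherical coordinates generate precisely the nested monomials $\sin^{k_{j-1}-k_j}(\theta_j)\cos^{k_j}(\theta_j)$ (so one must fix the ``$\sin$ in slot $i$, $\cos$ in all later slots'' convention consistently in order to match the form in Corollary~\ref{corollary:multifourier}), and that the reindexing $\alpha\leftrightarrow(k_1,\dots,k_{n-1})$ is genuinely a bijection onto the corollary's summation range; everything else is a direct appeal to results already established.
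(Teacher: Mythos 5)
Your proof is correct and takes essentially the same route as the paper's: expand the directional derivative via the multivariate chain rule, bound its modulus uniformly over $S^{n-1}$ by Lemma~\ref{lemma:distancebound}, pass to the spherical parametrization $(\theta_1,\dots,\theta_{n-1})\mapsto v$ (which is the paper's map, merely $1$-indexed rather than $0$-indexed), identify $v^\alpha$ with the nested monomial $\prod_j\sin^{k_{j-1}-k_j}\theta_j\cos^{k_j}\theta_j$ via the bijection $\alpha\leftrightarrow(k_1,\dots,k_{n-1})$, and invoke Corollary~\ref{corollary:multifourier} with $m=n-1$, $k_0=k$. Your extra remarks about the degenerate cases $|\alpha|=0$ and $\dist_{\CC}(p,f)=0$, and the explicit verification of the bijection, are sound refinements but do not change the argument.
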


\begin{proof}
Let $(\theta_1,\dots,\theta_{n-1})\in\left(S^1\right)^{n-1}$.  Consider the surjective map $\left(S^1\right)^{n-1}\rightarrow S^{n-1}$ given by
$
(\theta_1,\dots,\theta_{n-1})\mapsto x=(x_0,\dots,x_{n-1})
$
where
$$
x_i=\left(\prod_{j=1}^i\cos\theta_j\right)\sin\theta_{i+1}\quad\text{for}\quad0\leq i<n-1\quad\text{and}\quad x_{n-1}=\prod_{j=1}^{n-1}\cos\theta_j.$$
Then, for $k_0=k\geq k_1\geq k_2\geq\dots\geq k_{n-1}\geq 0=k_n$, 
\begin{equation}\label{equation:powersofx}
x_0^{k_0-k_1}x_1^{k_1-k_2}\dots x_{n-1}^{k_{n-1}-k_n}=\prod_{j=1}^{n-1}(\sin^{k_{j-1}-k_j}\theta_j\cos^{k_j}\theta_j).
\end{equation}

Observe that, by the chain rule for $v\in S^{n-1}$,
$$
\frac{1}{f(p)}\frac{d^kf(p+tv)}{dt^k}=\frac{1}{f(p)}\sum_{|\alpha|=k}\binom{k}{\alpha}\frac{\partial^k f}{\partial x^\alpha}(p)v^\alpha.
$$
By Lemma~\ref{lemma:distancebound}, we know that the magnitude of these expressions are bounded above by $\left(\frac{\deg(f)}{\dist_{\CC}(p,f)}\right)^k$.  Moreover, since $v$ is a unit vector, there exist $k_0=k\geq k_1\geq k_2\geq\dots\geq k_{n-1}\geq 0=k_n$ so that $v^\alpha$ can be written in the form of Equation~(\ref{equation:powersofx}), 
$$
v^\alpha=\prod_{j=1}^{n-1}(\sin^{k_{j-1}-k_j}\theta_j\cos^{k_j}\theta_j).
$$
Therefore, 
$$
\left|\frac{1}{f(p)}\sum_{|\alpha|=k}\binom{k}{\alpha}\frac{\partial^k f}{\partial x^\alpha}(p)v^\alpha\right|
$$
is of the form for Corollary~\ref{corollary:multifourier} where $m=n-1$ and $k_0=k$.  Therefore, the individual terms are bounded by $2^{(n-1)(k+1)}\left(\frac{\deg(f)}{\dist_{\CC}(p,f)}\right)^k$, as desired.
\end{proof}

With Proposition~\ref{A:proposition:derivativebound} in hand, we now prove a lower bound on an $n$-dimensional cube $J$ of width $w=w(J)$ that fails the predicate $C(J)$ in the following corollary:

\begin{corollary}\label{A:corollary:testbound}
Let $f\in\mathbb{R}[x_1,\dots,x_n]$ and $p\in\mathbb{R}^n$.  Suppose that $0<w\leq\frac{\dist_{\CC}(p,f)\ln(1+2^{2-2n})}{2^{n-1}\deg(f)}$.  Then
$$
\left|\sum_{k=1}^{\deg(f)}\sum_{|\alpha|=k}\frac{1}{k!}\binom{k}{\alpha}\frac{1}{f(p)}\frac{\partial^{|\alpha|} f}{\partial x^\alpha}(p)\left(\frac{w}{2}\right)^k\right|\leq 1 .
$$
\end{corollary}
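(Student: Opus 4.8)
The plan is to move the absolute value inside both summations with the triangle inequality and then bound each summand using Proposition~\ref{A:proposition:derivativebound}, controlling the number of multi-indices at each level crudely. First, observe that if $\dist_{\CC}(p,f)=0$ then the hypothesis forces $w\le 0$, contradicting $w>0$; hence $\dist_{\CC}(p,f)>0$ and, in particular, $f(p)\neq 0$, so every quantity below is well-defined. The triangle inequality gives
\[
\left|\sum_{k=1}^{\deg(f)}\sum_{|\alpha|=k}\frac{1}{k!}\binom{k}{\alpha}\frac{1}{f(p)}\frac{\partial^{|\alpha|}f}{\partial x^\alpha}(p)\left(\frac{w}{2}\right)^k\right|\le\sum_{k=1}^{\deg(f)}\frac{1}{k!}\left(\frac{w}{2}\right)^k\sum_{|\alpha|=k}\left|\binom{k}{\alpha}\frac{1}{f(p)}\frac{\partial^{|\alpha|}f}{\partial x^\alpha}(p)\right|.
\]

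Next, for each multi-index $\alpha$ with $|\alpha|=k$, Proposition~\ref{A:proposition:derivativebound} bounds the corresponding term by $2^{(n-1)(k+1)}\left(\deg(f)/\dist_{\CC}(p,f)\right)^k$, uniformly in $\alpha$. Since the number of multi-indices $\alpha\in\mathbb{N}^n$ with $|\alpha|=k$ equals $\binom{n+k-1}{n-1}$, which is an entry in the $(n+k-1)$-st row of Pascal's triangle and hence at most $2^{n+k-1}$, the inner sum is at most $2^{n+k-1}\,2^{(n-1)(k+1)}\left(\deg(f)/\dist_{\CC}(p,f)\right)^k$. Collecting the powers of two (the exponent $n+k-1+(n-1)(k+1)$ simplifies to $nk+2(n-1)$) and recombining with $(w/2)^k$, the $k$-th term of the outer sum is at most
\[
2^{2(n-1)}\frac{1}{k!}\left(\frac{2^{n-1}\deg(f)\,w}{\dist_{\CC}(p,f)}\right)^k.
\]
Extending the summation from $k=1,\dots,\deg(f)$ to all $k\ge 1$ (the extra terms are nonnegative) and recognizing the exponential series, the whole expression is bounded by $2^{2(n-1)}\bigl(e^{z}-1\bigr)$, where $z=2^{n-1}\deg(f)\,w/\dist_{\CC}(p,f)$.

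Finally, the hypothesis $w\le\dist_{\CC}(p,f)\ln(1+2^{2-2n})/(2^{n-1}\deg(f))$ says exactly that $z\le\ln(1+2^{2-2n})$, so $e^{z}-1\le 2^{2-2n}=2^{-2(n-1)}$, and therefore the bound collapses to $2^{2(n-1)}\cdot 2^{-2(n-1)}=1$, as desired. The only genuine work is the exponent bookkeeping in the middle step together with the observation that the resulting series is $e^{z}-1$; the form of the hypothesis on $w$ is reverse-engineered precisely so that the final product equals $1$. One small point to state explicitly is that replacing the count $\binom{n+k-1}{n-1}$ by $2^{n+k-1}$ is wasteful but harmless, since it only enlarges the right-hand side.
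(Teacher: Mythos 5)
Your proof is correct and follows essentially the same route as the paper's: apply the triangle inequality, invoke Proposition~\ref{A:proposition:derivativebound} for the derivative bound, count the multi-indices by $\binom{n+k-1}{n-1}\le 2^{n+k-1}$, and recognize the resulting series as (a truncation of) $e^z-1$ evaluated at $z\le\ln(1+2^{2-2n})$. The only additions are a brief well-definedness remark (that $\dist_{\CC}(p,f)>0$) and writing the exponent bookkeeping more explicitly, neither of which changes the argument.
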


\begin{proof}
Observe that by the triangle inequality,
$$
\left|\sum_{k=1}^{\deg(f)}\sum_{|\alpha|=k}\frac{1}{k!}\binom{k}{\alpha}\frac{1}{f(p)}\frac{\partial^{|\alpha|} f}{\partial x^\alpha}\left(\frac{w}{2}\right)^k\right|\leq  \sum_{k=1}^{\deg(f)}\sum_{|\alpha|=k}\left|\frac{1}{k!}\binom{k}{\alpha}\frac{1}{f(p)}\frac{\partial^{|\alpha|} f}{\partial x^\alpha}\left(\frac{w}{2}\right)^k\right|.
$$
Now, we can substitute the bound on the derivatives in Proposition~\ref{A:proposition:derivativebound} as well as the assumed bound on $w$, resulting in an upper bound of
\begin{equation*}
\sum_{k=1}^{\deg(f)}\sum_{|\alpha|=k}\frac{1}{k!}2^{(n-1)(k+1)}\left(\frac{\ln(1+2^{2-2n})}{2^n}\right)^k.
\end{equation*}

Since there are $\binom{n+k-1}{k}$ possibilities for $\alpha$ when $|\alpha|=k$, which can be trivially bounded from above by $2^{n+k-1}$, we can bound the expression above by
\begin{align*} 
  2^{2n-2}\sum_{k=1}^{\deg(f)}\frac{1}{k!}(\ln(1+2^{2-2n}))^k.
\end{align*}
Since this sum is a truncated version of the Taylor series expansion of $e^x-1$ centered at $0$ with $x=\ln(1+2^{2-2n})$, this sum is bounded above by $2^{2-2n}$, and, hence, the entire expression is bounded above by $1$.
\end{proof}

Using Corollary~\ref{A:corollary:testbound}, we can develop bounds on the size of a region which guarantees the success of the given predicate.  We make this explicit in the following corollary:

\begin{corollary}\label{corollary:C0:initial}
Let $f\in\mathbb{R}[x_1,\dots,x_n]$ and $J\subseteq\mathbb{R}^n$ an $n$-dimensional cube with midpoint $m=m(J)$ and width $w=w(J)$ such that
$w\leq\frac{\dist_{\CC}(m,f)\ln(1+2^{2-2n})}{2^{n-1}\deg(f)}$.
Then, $C(J)$ is true.
\end{corollary}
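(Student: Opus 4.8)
The plan is to unwind the definition of $C$ in terms of the simplified standard centered form and then apply Corollary~\ref{A:corollary:testbound} at the point $p=m$. Recall that $C(J)=\true$ exactly when $0\notin\square f(J)$, and, since $J$ is an $n$-dimensional cube of positive width $w$ with midpoint $m$, Equation~(\ref{eq:simplified:centered:form}) gives
$$
\square f(J)=\left[\,f(m)-r,\ f(m)+r\,\right],\qquad r:=\sum_{|\alpha|=1}^{\deg f}\frac{\left|\partial^\alpha f(m)\right|}{\alpha!}\left(\frac{w}{2}\right)^{|\alpha|}.
$$
Since $r\geq 0$, this is a genuine closed interval, so $C(J)=\true$ is equivalent to the strict inequality $|f(m)|>r$, and this is what I would prove.

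First I would dispose of the trivial cases. If $\deg f=0$, then $f$ is a nonzero constant, $r=0$, and $|f(m)|>0=r$. Otherwise $\deg f\geq 1$, and, since $w>0$, the hypothesis $w\leq\frac{\dist_{\CC}(m,f)\ln(1+2^{2-2n})}{2^{n-1}\deg f}$ forces $\dist_{\CC}(m,f)>0$, hence $m\notin V_{\CC}(f)$ and $f(m)\neq 0$. Dividing the target inequality $|f(m)|>r$ by $|f(m)|>0$ and using the identity $\frac{1}{\alpha!}=\frac{1}{k!}\binom{k}{\alpha}$ for $|\alpha|=k$, it suffices to show
$$
\frac{r}{|f(m)|}=\sum_{k=1}^{\deg f}\sum_{|\alpha|=k}\left|\frac{1}{k!}\binom{k}{\alpha}\frac{1}{f(m)}\frac{\partial^{|\alpha|}f}{\partial x^\alpha}(m)\right|\left(\frac{w}{2}\right)^{k}<1.
$$
The left-hand side is precisely the quantity that the proof of Corollary~\ref{A:corollary:testbound}, applied with $p=m$, bounds once that proof passes to the sum of absolute values via the triangle inequality; the hypothesis imposed on $w$ here is exactly the one demanded there.

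The only point requiring care is strictness. Corollary~\ref{A:corollary:testbound} is stated with ``$\leq 1$'', whereas I need ``$<1$'' in order to conclude $0\notin\square f(J)$ rather than merely that $0$ could be an endpoint of $\square f(J)$. This strict inequality is read off directly from the cited proof: it bounds the sum of absolute values by $2^{2n-2}\sum_{k=1}^{\deg f}\frac{\left(\ln(1+2^{2-2n})\right)^k}{k!}$, which is $2^{2n-2}$ times a proper (finite, since $\deg f\geq 1$) truncation of the everywhere-positive series $\sum_{k\geq 1}\frac{x^k}{k!}=e^x-1$ at $x=\ln(1+2^{2-2n})>0$; as every omitted term is strictly positive, the truncation is strictly less than $e^x-1=2^{2-2n}$, so the whole bound is strictly less than $2^{2n-2}\cdot 2^{2-2n}=1$. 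Therefore $r<|f(m)|$, hence $0\notin\square f(J)$ and $C(J)=\true$. I do not anticipate any genuine obstacle: the substantive estimates all live in Proposition~\ref{A:proposition:derivativebound} and Corollary~\ref{A:corollary:testbound}, and the remaining work is just the bookkeeping identifying the centered-form remainder $r$ with the sum of absolute values that appears there, together with the strict-inequality observation above.
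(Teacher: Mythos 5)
Your proof is correct and follows essentially the same route as the paper: unwind Equation~(\ref{eq:simplified:centered:form}) to reduce $C(J)=\true$ to the inequality $|f(m)|>r$, divide through by $|f(m)|$, and invoke Corollary~\ref{A:corollary:testbound} at $p=m$. The one place you go beyond the paper is the strictness observation, and you are right to flag it: Corollary~\ref{A:corollary:testbound} as stated only yields a non-strict bound ``$\leq 1$'', while concluding $0\notin\square f(J)$ genuinely requires the strict inequality ``$<1$''. The paper's proof simply says ``we get exactly the expression in Corollary~\ref{A:corollary:testbound}'' without comment, whereas you correctly extract the strict bound from the \emph{proof} of that corollary (a proper truncation of $e^x-1$ at $x=\ln(1+2^{2-2n})>0$ is strictly less than $2^{2-2n}$, so the final bound is strictly less than $1$). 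That, together with your handling of the degenerate cases $\deg f\le 1$ and $f(m)=0$ (which the paper leaves implicit), makes your write-up a slightly more careful rendition of the same argument rather than a different one.
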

\begin{proof}
From Equation~(\ref{eq:simplified:centered:form}), we see that $0\not\in\square f(J)$ is equivalent to 
$$
\left(\sum_{|\alpha|=1}^d \frac{\left|\partial^\alpha f(m)\right|}{|f(m)|\alpha!}\left(\frac{w}{2}\right)^{|\alpha|}\right)<1.
$$
This inequality arises because, in Equation~(\ref{eq:simplified:centered:form}), $\square f(J)$ is an interval centered at the origin shifted by $f(m)$.  In order for $0$ to be excluded from this interval, the shift by $f(m)$ must be larger than the half-width of the interval.  Dividing both sides by $|f(m)|$, we get exactly the expression in Corollary~\ref{A:corollary:testbound}.
\end{proof}

While Corollary~\ref{corollary:C0:initial} gives a test for $C(J)=\true$ for an $n$-dimensional cube, this test is not enough to prove that $C$ is a distance-diameter test because both sides of the inequality involve region $J$.  In particular, the midpoint of the region $J$ appears on the right-hand-side of the inequality.  The following lemma changes the right-hand-side of the inequality to depend on any point within $J$, instead of the midpoint.

\begin{lemma}\label{lemma:changefrommidpoint}
Let $f\in\mathbb{R}[x_1,\dots,x_n]$ and $J\subseteq\mathbb{R}^n$ an $n$-dimensional cube with midpoint $m=m(J)$ and width $w=w(J)$.  Suppose that $x\in J$ and $C$ and $k$ are positive constants.  If
$$
w\leq\frac{2C\dist_{\CC}(x,h)}{2k+C\sqrt{n}},
$$
then
$$
w\leq \frac{C \dist_{\CC}(m,h)}{k}.
$$
\end{lemma}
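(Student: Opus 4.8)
The plan is to reduce the statement to two elementary facts: a bound on the distance from the midpoint $m$ to an arbitrary point $x$ of the cube $J$, and the triangle inequality for the distance function $\dist_\CC(\,\cdot\,,h)$. Since $J$ is an $n$-dimensional cube of width $w$, it is contained in the closed Euclidean ball of radius $\frac{w\sqrt{n}}{2}$ centered at its midpoint $m$; in particular $\norm{x-m}\le\frac{w\sqrt{n}}{2}$, and we may read this norm in either $\RR^n$ or $\CC^n$ since $m,x\in\RR^n$.

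Next I would apply the triangle inequality in $\CC^n$: for every $v\in V_\CC(h)$ we have $\norm{m-v}\ge\norm{x-v}-\norm{x-m}\ge\dist_\CC(x,h)-\frac{w\sqrt{n}}{2}$, and taking the infimum over $v\in V_\CC(h)$ yields
$$
\dist_\CC(m,h)\ \ge\ \dist_\CC(x,h)-\frac{w\sqrt{n}}{2}.
$$
Finally I would simply rearrange the hypothesis. Clearing denominators in $w\le\frac{2C\,\dist_\CC(x,h)}{2k+C\sqrt{n}}$ gives $w\bigl(\frac{k}{C}+\frac{\sqrt{n}}{2}\bigr)\le\dist_\CC(x,h)$, hence $\frac{kw}{C}\le\dist_\CC(x,h)-\frac{w\sqrt{n}}{2}\le\dist_\CC(m,h)$ by the displayed inequality, which is exactly the desired conclusion $w\le\frac{C\,\dist_\CC(m,h)}{k}$.

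There is essentially no real obstacle here; the only points requiring a little care are that the constant $\sqrt{n}$ is precisely the one produced by the diameter of a width-$w$ cube in the Euclidean metric (so the dependence on $n$ in the hypothesis is the natural one), and that the triangle inequality is being used between the complex variety $V_\CC(h)$ and two real points, which causes no difficulty. The role of the lemma is purely one of bookkeeping: it converts the midpoint-dependent size bound of Corollary~\ref{corollary:C0:initial} into one phrased in terms of $\dist_\CC(x,f)$ for an arbitrary $x\in J$, which is the form needed to match the definition of a diameter-distance test.
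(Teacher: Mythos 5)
Your proof is correct and follows essentially the same approach as the paper: both use the triangle inequality $\dist_\CC(m,h)\ge\dist_\CC(x,h)-\frac{\sqrt{n}}{2}w$ (from the fact that the cube's circumradius is $\frac{\sqrt{n}}{2}w$) and then combine it with a rearrangement of the hypothesis; the paper just packages the algebra as an add-and-subtract identity rather than clearing denominators directly.
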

\begin{proof} We follow the ideas of the argument in~\cite{BurrKrahmer:2012}.  We observe that 
  \begin{equation}\label{equation:changefrommidpoint}
      w =\left(1+\frac{C\sqrt{n}}{2k}\right)w-\frac{C\sqrt{n}}{2k}w \leq\frac{C \dist_{\CC}(x,h)}{k}-\frac{C\sqrt{n}}{2k}w=\frac{C}{k}\left(\dist_{\CC}(x,h)-\frac{\sqrt{n}}{2}w\right),
\end{equation}
where the inequality follows from the assumed upper bound on $w$.  Suppose that $\alpha$ is the closest point of $V(h)$ to $m$, then, by the triangle inequality, $\dist_{\CC}(m,h)\geq \dist_{\CC}(x,\alpha)-\dist_{\CC}(x,m)$.  The distance $\dist_{\CC}(x,m)$ is at most the radius of $J$, which is $\frac{\sqrt{n}}{2}w$.  Moreover, the closest point on $V(h)$ to $x$ is distance at most the distance to $\alpha$, so $\dist_{\CC}(x,\alpha)\geq \dist_{\CC}(x,h)$.  Hence, $\dist_{\CC}(m,h)\geq \dist_{\CC}(x,h)-\frac{\sqrt{n}}{2}w$.  By substituting this into Expression~(\ref{equation:changefrommidpoint}), the desired result follows.
\end{proof}

By combining Corollary~\ref{corollary:C0:initial} with Lemma~\ref{lemma:changefrommidpoint}, we explicitly show that the predicate $C$ is a distance-diameter test:

\begin{corollary}\label{corollary:C0:localsizebound}
Let $f\in\mathbb{R}[x_1,\dots,x_n]$ and $J\subseteq\mathbb{R}^n$ an $n$-dimensional cube with width $w=w(J)$.  If there is a point $x\in J$ such that
$$
w\leq\frac{2\ln\left(1+2^{2-2n}\right)\dist_{\CC}(x,f)}{2^n\deg(f)+\sqrt{n}\ln\left(1+2^{2-2n}\right)},
$$
then, $C(J)$ is true.
\end{corollary}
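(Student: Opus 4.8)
The plan is to chain two results already established in this section: Corollary~\ref{corollary:C0:initial}, which guarantees $C(J)=\true$ once the width $w$ of $J$ is small relative to $\dist_{\CC}(m,f)$, where $m=m(J)$ is the midpoint, and Lemma~\ref{lemma:changefrommidpoint}, which trades a bound on $w$ phrased in terms of $\dist_{\CC}(x,f)$ for an arbitrary $x\in J$ against a bound phrased in terms of $\dist_{\CC}(m,f)$. Since both statements are available, the whole argument reduces to choosing the free constants in Lemma~\ref{lemma:changefrommidpoint} so that its conclusion matches, on the nose, the hypothesis of Corollary~\ref{corollary:C0:initial}.

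Concretely, I would fix the point $x\in J$ supplied by the hypothesis and apply Lemma~\ref{lemma:changefrommidpoint} with $h=f$, with the constant there (unfortunately also named $C$ in that lemma) set equal to $\ln(1+2^{2-2n})$, and with $k=2^{n-1}\deg(f)$. For these choices the hypothesis of Lemma~\ref{lemma:changefrommidpoint} reads
$$
w\le\frac{2\ln(1+2^{2-2n})\,\dist_{\CC}(x,f)}{2\cdot 2^{n-1}\deg(f)+\sqrt{n}\,\ln(1+2^{2-2n})}=\frac{2\ln(1+2^{2-2n})\,\dist_{\CC}(x,f)}{2^{n}\deg(f)+\sqrt{n}\,\ln(1+2^{2-2n})},
$$
which is exactly the inequality assumed in the statement, while its conclusion reads
$$
w\le\frac{\ln(1+2^{2-2n})\,\dist_{\CC}(m,f)}{2^{n-1}\deg(f)},
$$
which is precisely the hypothesis of Corollary~\ref{corollary:C0:initial}. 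Invoking that corollary then gives $C(J)=\true$, finishing the proof.

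There is no substantive obstacle here; the only point requiring care is the arithmetic matching of the two constants, in particular the identity $2k=2^{n}\deg(f)$ that makes the two denominators agree, and checking that Lemma~\ref{lemma:changefrommidpoint} is genuinely applicable (its constants are positive since $n\ge 1$, $\deg(f)\ge 1$, and $\ln(1+2^{2-2n})>0$, and $x,m\in\RR^n$ so the complex distances are well-defined). Finally, once this statement is in hand it exhibits $C$ as a diameter-distance test with closed set $V_{\CC}(f)$: because $J$ is an $n$-cube, $\diam(J)=\sqrt{n}\,w$, and ``there exists $x\in J$ with $w\le c\,\dist_{\CC}(x,f)$'' is the same as $\diam(J)\le K\max_{x\in J}\dist_{\CC}(x,f)$ with $K=\frac{2\sqrt{n}\,\ln(1+2^{2-2n})}{2^{n}\deg(f)+\sqrt{n}\,\ln(1+2^{2-2n})}$, which is exactly the shape of hypothesis needed to feed the complexity machinery of Section~\ref{sec:subdiv-dd-test}.
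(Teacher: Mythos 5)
Your proof is correct and coincides with the paper's own argument: the paper's proof of this corollary consists of exactly the statement that it follows from Corollary~\ref{corollary:C0:initial} and Lemma~\ref{lemma:changefrommidpoint} with $C=\ln(1+2^{2-2n})$ and $k=2^{n-1}\deg(f)$, which is precisely the parameter-matching you carry out. Your fleshing-out of the arithmetic (in particular the identity $2k=2^n\deg(f)$) and the sanity checks on positivity are all accurate.
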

\begin{proof}
This result follows from Corollary~\ref{corollary:C0:initial} with Lemma~\ref{lemma:changefrommidpoint} by letting $C=\ln\left(1+2^{2-2n}\right)$ and $k=2^{n-1}\deg(f)$.
\end{proof}

Since the diameter of an $n$-dimensional cube whose side is of length $w$ is scaled by $\sqrt{n}$, we have the following corollary:

\begin{corollary}\label{corollary:C0:localsizebound:diameter}
Let $f\in\mathbb{R}[x_1,\dots,x_n]$ and $J\subseteq\mathbb{R}^n$ an $n$-dimensional cube.  If there is a point $x\in J$ such that
$$
\diam(J)\leq\frac{2\sqrt{n}\ln\left(1+2^{2-2n}\right)\dist_{\CC}(x,f)}{2^n\deg(f)+\sqrt{n}\ln\left(1+2^{2-2n}\right)}.
$$
Then, $C(J)$ is true.  Therefore, $C$ is a diameter-distance test.
\end{corollary}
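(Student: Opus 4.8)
The plan is to deduce the statement directly from Corollary~\ref{corollary:C0:localsizebound} by rescaling the side-length bound into a diameter bound. For an $n$-dimensional cube $J$ of side length $w=w(J)$, the diameter is the length of a main diagonal, so $\diam(J)=\sqrt{n}\,w$. Multiplying both sides of the hypothesis of Corollary~\ref{corollary:C0:localsizebound} by $\sqrt{n}$ therefore turns the condition ``there is $x\in J$ with $w(J)\leq\tfrac{2\ln(1+2^{2-2n})\dist_{\CC}(x,f)}{2^n\deg(f)+\sqrt{n}\ln(1+2^{2-2n})}$'' into exactly the condition displayed in the statement. Hence Corollary~\ref{corollary:C0:localsizebound} yields $C(J)=\true$, which is the first assertion.

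For the second assertion, I would check the definition of a diameter-distance test with the explicit choices
$$
K \;=\; \frac{2\sqrt{n}\ln(1+2^{2-2n})}{2^n\deg(f)+\sqrt{n}\ln(1+2^{2-2n})}\qquad\text{and}\qquad V \;=\; V_{\CC}(f).
$$
Here $K>0$ because $2^{2-2n}>0$ forces $\ln(1+2^{2-2n})>0$, and $K$, $V$ are permitted to depend on $f$ since the definition only requires them to be independent of the region $J$. Given any $J\in\mathcal{S}$ with $\diam(J)<K\max_{x\in J}d(x,V)$, properness of the ambient space lets us attain the maximum at some $x^\star\in J$, so $\diam(J)<K\,d(x^\star,V)=K\,\dist_{\CC}(x^\star,f)$, and the first assertion gives $C(J)=\true$, as required. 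If one prefers to keep $V$ inside $\RR^n$, one may instead take $V$ to be the closure of $\{\Re z : z\in V_{\CC}(f)\}$ and use $d(x,V)\leq\dist_{\CC}(x,f)$, which follows by writing a complex zero as $a+bi$ with $a,b\in\RR^n$ and noting $\|x-(a+bi)\|^2=\|x-a\|^2+\|b\|^2\geq d(x,V)^2$.

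There is essentially no obstacle: the substantive work was already done in Proposition~\ref{A:proposition:derivativebound}, Corollary~\ref{A:corollary:testbound}, Lemma~\ref{lemma:changefrommidpoint}, and Corollary~\ref{corollary:C0:localsizebound}, and what remains is only the harmless factor of $\sqrt{n}$ together with matching the definition. The single point that deserves a moment's care is the compatibility of the complex variety $V_{\CC}(f)$ with the metric-space setup of Section~\ref{sec:subdiv-dd-test}; this is resolved either by regarding the ambient metric space as containing $\CC^n$, so that $d(\cdot,V_{\CC}(f))=\dist_{\CC}(\cdot,f)$ on $\RR^n$, or by replacing $V_{\CC}(f)$ with the real-part projection as indicated above.
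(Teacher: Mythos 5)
Your proof is correct and takes essentially the same route as the paper: Corollary~\ref{corollary:C0:localsizebound} is stated with a side-length bound $w$, and the paper's entire ``proof'' of Corollary~\ref{corollary:C0:localsizebound:diameter} is the one-line remark that $\diam(J)=\sqrt{n}\,w$ for an $n$-dimensional cube, which is exactly the rescaling you carry out. Your additional remarks verifying the definition of a diameter-distance test (choice of $K$ and $V$, and the care about $V_{\CC}(f)\subseteq\CC^n$ versus a closed $V\subseteq\RR^n$) are a sound and slightly more careful elaboration of what the paper leaves implicit, but do not constitute a different approach.
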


\section{The Modified Plantinga-Vegter Algorithm}
\label{sec:PV-alg}

In this section, we provide a modified form of the \PV\ algorithm~\cite{Plantinga:2004} for curve and surface approximation.  Our version of the algorithm uses a slightly stronger variant of their, but makes both tests of the appropriate form for the application of Corollary~\ref{corollary:C0:localsizebound:diameter}.  We begin by reviewing the original \PV\ algorithm and then discuss our generalization and adaptation.  

\subsection{The Original \PV\ Algorithm}

Let $f\in\mathbb{R}[x,y]$ or $\mathbb{R}[x,y,z]$ be a square-free polynomial such that its real zero set $V_{\RR}(f)$ is smooth (see Footnote~\ref{fn:correctness} for a brief discussion of the requirement for the curve to be bounded for correctness of the approximation).  The \PV\ algorithm recursively subdivides an initial input square or cube $I$ with a quad-tree or oct-tree data structure until at least one of the following tests holds on each subregion $J$ (in the literature, these tests are often referred to as $C_0$ and $C_1$):
\begin{align*}
C_0(J)&=\true\quad\text{if and only if}\quad 0\not\in\square f(J)\\
C_1(J)&=\true\quad\text{if and only if}\quad 0\not\in\langle \square\nabla f(J),\square\nabla f(J)\rangle.
\end{align*}
For the purposes of curve approximation, when $C_0(J)$ is $\true$, the variety does not enter the region $J$, and so $J$ can be discarded.  On the other hand, when $C_1(J)$ holds, the curve or surface does not bend much within the region $J$.

The \PV\ algorithm is an instance of an Abstract Subdivision-based Algorithm that uses bisection and the two tests $C_0$ and $C_1$, see Algorithm~\ref{algorithm:abstract}.  We explicitly include the algorithm here for completeness and to illustrate the simplicity of the approach.  Given an input polynomial $f$ and region $I$, the \PV\ algorithm constructs a partition $P$ of $I$ so that for all regions $J$ of the partition, either $C_0(J)=\true$ or $C_1(J)=\true$.  Initially, $P=\{I\}$.
\begin{algorithm1}{Main subdivision of \PV\ algorithm}\label{algorithm:PV}
\noindent Repeatedly subdivide (into $4$ or $8$ children) each $J\in P$ until one of the following conditions hold:\\[.15cm]
\indent $C_0(J)$ is $\true$ or $C_1(J)$ is $\true$.
\end{algorithm1}

After every sub-region $J$ satisfies $C_0(J)$ or $C_1(J)$, the authors of~\cite{Plantinga:2004} perform post-processing steps, which include balancing the tree, evaluating the sign of $f$ on the corners of each $J$ in $P$, and using sign changes along the sides of regions $J$ to detect and approximate the curve or surface.  Their approximation is topologically correct for bounded curves as there is an ambient isotopy between the approximation and the variety $V_{\RR}(f)$.  Additionally, by further subdivision, the isotopy can be made sufficiently small so that the Hausdorff distance between the approximation and the variety is as small as desired.  We note that it is possible to extend the \PV\ algorithm in the plane to provide a topologically correct approximation even when $V_{\RR}(f)$ is unbounded, when $V_{\RR}(f)$ is singular, and when $I$ is not a bounding box, see~\cite{Burr:Isotopic}.  In this paper, however, we focus on the original \PV\ algorithm, but without the restriction of a bounded curve.

Our main target is to compute the number of regions that the \PV\
algorithms construct, and not to approximate the curve or
surface, per se.  Therefore, we focus, exclusively, on the $C_0$ and
$C_1$ tests and apply them in arbitrary dimensions.  More precisely,
let $f\in\mathbb{R}[x_1,\dots,x_n]$ be such that its real zero set
$V_{\RR}(f)$ is smooth.  Let $I\subseteq\mathbb{R}^n$ be an
$n$-dimensional real cube.  Then, we can generalize the tests $C_0$
and $C_1$, along with Algorithm~\ref{algorithm:PV}, to $n$ dimensions,
where the subdivision splits an $n$-dimensional cube into $2^n$
children.  We mention that, in this case, we no longer use the output
of the algorithm to construct an approximation to $V_{\RR}(f)$.

\subsection{Modifying the \texorpdfstring{$C_1$}{C1} test}
\label{sec:extend-C1}

As presented above, the $C_0$ test is of the form considered in Corollary~\ref{corollary:C0:localsizebound:diameter}, so it is a diameter-distance test.  On the other hand, the $C_1$ test is not of this form; therefore, it is not clear if the $C_1$ test is a diameter-distance test.  The difficulty in applying the corollary in this case is that arithmetic operations are performed on intervals after an application of interval methods.  In this section, we describe an alternate $C_1$ test that satisfies the assumptions of Corollary~\ref{corollary:C0:localsizebound:diameter}.

The predicate $C_1(J)$ has the following two consequences that are fundamental in the proof of the correctness of the \PV\ algorithm in~\cite{Plantinga:2004}:
\begin{enumerate}
\item If a region $J$ satisfies the $C_1$ condition, then, in $J$, there cannot be any pair of gradient vectors of $f$ which are orthogonal to each other.\label{fact1:PVtest}
\item The variety $V_{\RR}(f)$ is parametrizable in the direction of at least one of the coordinate axes. \label{fact2:PVtest}
\end{enumerate} 
Fact~\ref{fact2:PVtest} is a direct consequence of Fact~\ref{fact1:PVtest}, but it is used so frequently in the proofs in~\cite{Plantinga:2004}, that it is worthwhile to mention it
explicitly.

We now modify the $C_1$ test in arbitrary dimensions so that it has the form in the assumptions of Corollary~\ref{corollary:C0:localsizebound:diameter}.  Consider the function $g:\mathbb{R}^n\times\mathbb{R}^n\rightarrow\mathbb{R}$, defined as 
$$
g(x_1,\dots,x_n,y_1,\dots,y_n)=\langle\nabla f(x_1,\dots,x_n),\nabla f(y_1,\dots,y_n)\rangle.
$$
It follows that, if, for a region $J$, $0\not\in\square g(J\times J)$, then there is no pair of gradient vectors of $f$ in $J$ which are orthogonal to each other.  Thus the modified $C_1$ test, briefly denoted $C_1'$, is as follows:
$$
C_1'(J)=\true\quad\text{if and only if}\quad0\not\in\square g(J\times J).
$$
Therefore, when $C_1'(J)$ is true, we can conclude the truth of Facts~\ref{fact1:PVtest} and~\ref{fact2:PVtest}, and the application of an interval method appears as the last step as opposed to in an intermediate step.  Therefore, $C_1'$ satisfies the assumptions in Corollary~\ref{corollary:C0:localsizebound:diameter}, and, therefore, is a diameter-distance test.  For the rest of the paper, all references to the $C_1$ test refer to this new $C_1'$ test.  In particular, all discussions of the \PV\ algorithm refer to the modified \PV\ algorithm.

Let $J$ be an $n$-dimensional real cube with midpoint $m=m(J)$ and side length $w=w(J)$.  The explicit formula for the $C_0$ test appears in the proof of Corollary~\ref{corollary:C0:initial}.  Since the $C_1$ test is based on the function $g$ whose domain is $2n$-dimensional and the square $J\times J$ has midpoint $(m,m)$, but side length $w$, the $C_1$ test simplifies, in terms of multi-index notation, to 
\begin{equation}\tag{$C_1$}
\sum_{|\alpha|+|\beta|\geq 1}\left|\sum_{i=1}^n \frac{\partial^{\alpha+e_i}f(m)\partial^{\beta+e_i}f(m)}{\|\nabla f(m)\|^2\alpha!\beta!}\right|\left(\frac{w}{2}\right)^{|\alpha|+|\beta|}<1,
\end{equation}
where $e_i$ is the $i$-th standard basis vector.  We additionally note that since $f$ and $J$ are real, $\|\nabla f(m)\|^2=g(m,m)$.  Additionally, for future reference, we collect and adapt the statement of Corollary~\ref{corollary:C0:localsizebound} to the case of $g$ on the region $J\times J$ in the following corollary:
\begin{corollary}\label{corollary:C1:localsizebound}
Let $f\in\mathbb{R}[x_1,\dots,x_n]$ and $g\in\mathbb[x_1,\dots,x_n,y_1,\dots,y_n]$ be $g(x_1,\dots,x_n,y_1,\dots,y_n)=\langle\nabla f(x_1,\dots,x_n),\nabla f(y_1,\dots,y_n)\rangle$.  Let $J\subseteq\mathbb{R}^n$, and suppose that there is a point $(a,b)\in J\times J$ such that 
$$w\leq\frac{2\ln\left(1+2^{2-4n}\right)\dist_{\CC}((a,b),g)}{2^{2n+1}(\deg(f)-1)+\sqrt{2n}\ln\left(1+2^{2-4n}\right)}.$$  Then, $C_1(J)$ is true.
\end{corollary}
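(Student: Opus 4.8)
The plan is to obtain this as an immediate specialization of Corollary~\ref{corollary:C0:localsizebound}, applied not to $f$ on $J$ but to the $2n$-variate polynomial $g$ on the product region $J\times J$. First I would observe that, by the very definition of the modified test in Section~\ref{sec:extend-C1}, the statement ``$C_1(J)$ is true'' means exactly ``$0\notin\square g(J\times J)$'', which is the predicate ``$C$'' of Section~\ref{Section:localsizebound} for the polynomial $g\in\mathbb{R}[x_1,\dots,x_n,y_1,\dots,y_n]$ evaluated on the region $J\times J$. Since $J$ is an $n$-dimensional cube of side length $w$ with midpoint $m$, the set $J\times J$ is a $2n$-dimensional cube of the same side length $w$ (with midpoint $(m,m)$), so it is precisely the kind of region to which Corollary~\ref{corollary:C0:localsizebound} applies, and a point $(a,b)\in J\times J$ plays the role of the distinguished point there. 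Applying Corollary~\ref{corollary:C0:localsizebound} with the ambient dimension $n$ replaced by $2n$ and with $f$ replaced by $g$ therefore yields that $C_1(J)$ holds whenever
$$ w\leq\frac{2\ln\left(1+2^{2-2(2n)}\right)\dist_{\CC}((a,b),g)}{2^{2n}\deg(g)+\sqrt{2n}\ln\left(1+2^{2-4n}\right)}, $$
and one checks that $2^{2-2(2n)}=2^{2-4n}$, which is the exponent appearing in the statement.

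It then only remains to bound $\deg(g)$ in terms of $\deg(f)$. Since $g=\sum_{i=1}^n\partial_if(x)\,\partial_if(y)$ and each $\partial_if$ has total degree at most $\deg(f)-1$, every summand has total degree at most $2(\deg(f)-1)$, so $\deg(g)\leq 2(\deg(f)-1)$ and hence $2^{2n}\deg(g)\leq 2^{2n+1}(\deg(f)-1)$. Replacing the denominator above by this larger quantity only decreases the right-hand side, so the hypothesis assumed in the present corollary is at least as restrictive as the one required by Corollary~\ref{corollary:C0:localsizebound}; thus any $w$ satisfying the former also satisfies the latter, and the conclusion follows.

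There is no genuinely difficult step here — it is essentially a bookkeeping argument — but two points warrant care. One must verify that the degree substitution goes in the direction that \emph{strengthens} the hypothesis: enlarging $\deg(g)$ to $2(\deg(f)-1)$ in the denominator yields a \emph{smaller} admissible bound on $w$, which is exactly the inequality written in the statement, so the implication is preserved. One must also confirm that $J\times J$ genuinely is a cube in the sense used in Section~\ref{Section:localsizebound} — namely a product of $2n$ intervals all of the common width $w$ — so that the simplified standard centered form in Equation~(\ref{eq:simplified:centered:form}), and with it the whole chain running from Proposition~\ref{A:proposition:derivativebound} through Corollary~\ref{corollary:C0:localsizebound}, applies verbatim to $g$ on $J\times J$. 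Finally, the degenerate cases $\deg(f)\leq 1$, where $\nabla f$ (hence $g$) is constant, are harmless and may be handled directly.
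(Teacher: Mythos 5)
Your proof is correct and follows the same route the paper intends: the paper itself states that Corollary~\ref{corollary:C1:localsizebound} is simply ``Corollary~\ref{corollary:C0:localsizebound} adapted to $g$ on $J\times J$,'' and you make that adaptation precise by substituting $2n$ for the ambient dimension, observing that $J\times J$ is again a cube of side $w$, and majorizing $\deg(g)$ by $2(\deg(f)-1)$, with the inequality correctly going in the direction that strengthens the hypothesis.
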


We end this section by collecting a corollary of Corollary~\ref{corollary:C1:localsizebound} which resembles a diameter-distance test and will be used in the next section: 
\begin{corollary}\label{corollary:C1:localsizebound:diameterdistance}
Let $f\in\mathbb{R}[x_1,\dots,x_n]$ and $g\in\mathbb[x_1,\dots,x_n,y_1,\dots,y_n]$ be $g(x_1,\dots,x_n,y_1,\dots,y_n)=\langle\nabla f(x_1,\dots,x_n),\nabla f(y_1,\dots,y_n)\rangle$.  Let $J\subseteq\mathbb{R}^n$, and suppose that there is a point $x\in J$ such that 
$$\diam(J)\leq\frac{2\sqrt{n}\ln\left(1+2^{2-4n}\right)\dist_{\CC}((x,x),g)}{2^{2n+1}(\deg(f)-1)+\sqrt{2n}\ln\left(1+2^{2-4n}\right)}.$$  Then, $C_1(J)$ is true.
\end{corollary}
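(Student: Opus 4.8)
The plan is to obtain this corollary from Corollary~\ref{corollary:C1:localsizebound} in exactly the way that Corollary~\ref{corollary:C0:localsizebound:diameter} was obtained from Corollary~\ref{corollary:C0:localsizebound}: first pass from an arbitrary point of $J\times J$ to a point on the diagonal, and then trade the side length $w$ of the $n$-cube $J$ for its diameter $\diam(J)$. No new estimate is needed; the whole argument is a specialization of, and a rescaling of the inequality in, the earlier corollary.

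Concretely, I would proceed as follows. Given the point $x\in J$ from the hypothesis, observe that the pair $(x,x)$ lies in $J\times J$, so Corollary~\ref{corollary:C1:localsizebound} applies with $(a,b)=(x,x)$; this specializes $\dist_{\CC}((a,b),g)$ to $\dist_{\CC}((x,x),g)$ while leaving the denominator $2^{2n+1}(\deg(f)-1)+\sqrt{2n}\ln(1+2^{2-4n})$ unchanged. Next, use that an $n$-dimensional cube of side length $w=w(J)$ has $\diam(J)=\sqrt{n}\,w$. Multiplying the hypothesis of Corollary~\ref{corollary:C1:localsizebound},
$$
w\le\frac{2\ln\left(1+2^{2-4n}\right)\dist_{\CC}((x,x),g)}{2^{2n+1}(\deg(f)-1)+\sqrt{2n}\ln\left(1+2^{2-4n}\right)},
$$
through by $\sqrt{n}$ produces exactly the bound on $\diam(J)$ in the statement. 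Hence the hypothesis of the present corollary is equivalent to the hypothesis of Corollary~\ref{corollary:C1:localsizebound} evaluated at the diagonal point $(x,x)$, and the conclusion $C_1(J)=\true$ follows immediately.

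I do not expect a genuine obstacle; the content is bookkeeping. The one point that requires a moment of care is keeping straight that the correct rescaling factor is $\sqrt{n}$ — the diameter-to-side ratio of the ambient cube $J\subseteq\RR^n$ on which the subdivision tree is actually built — and not $\sqrt{2n}$, which would be the diameter-to-side ratio of the auxiliary cube $J\times J\subseteq\RR^{2n}$ that appears only inside the $C_1$ test. One should also double-check that the constants in the denominator are inherited verbatim from Corollary~\ref{corollary:C1:localsizebound}, where they in turn come from applying Corollary~\ref{corollary:C0:localsizebound} to $g$, a polynomial in $2n$ variables of total degree $2(\deg(f)-1)$, so that the factor $2^{2n}\cdot 2(\deg(f)-1)$ collapses to $2^{2n+1}(\deg(f)-1)$ and the logarithmic terms acquire the exponent $2-4n$.
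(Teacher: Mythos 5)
Your proposal is correct and is exactly the derivation the paper intends: the paper gives no explicit proof, but the text preceding Corollary~\ref{corollary:C0:localsizebound:diameter} announces the single fact used (``the diameter of an $n$-dimensional cube whose side is of length $w$ is scaled by $\sqrt{n}$''), and you apply it verbatim after specializing Corollary~\ref{corollary:C1:localsizebound} to the diagonal point $(a,b)=(x,x)$. Your caution about using $\sqrt{n}$ (the diameter-to-side ratio of $J\subseteq\RR^n$) rather than $\sqrt{2n}$ (that of $J\times J$) is exactly the right thing to flag, and your accounting of where the constants $2^{2n+1}(\deg f-1)$ and $\ln(1+2^{2-4n})$ come from matches the paper.
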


\section{Worst-Case Bounds}
\label{sec:worst-case-bounds}

In this section, we provide worst-case complexity bounds for the
modified \PV\ algorithm.  We bound both the number of regions produced
by the subdivision as well as the overall bit-complexity of the
algorithm.  In the next section, we give examples which show that
these bounds are tight in the worst case.

\subsection{Non-adaptive Bounds}
In this section, we use Proposition~\ref{prop:global-size-bound} to bound the number of regions produced by the \PV\ algorithm.  We assume\footnote{The argument in this section can be directly generalized for $f\in\mathbb{Q}[x_1,\dots,x_n]$ and $I$ whose corners are in $\mathbb{Q}^n$.  We leave the details to the interested reader.} that $f\in\ZZ[x_1,\dots,x_n]$ and the fixed initial input region $I$ has corners in $\mathbb{Z}^n$.  Suppose that we can find a $\delta$ so that
$$
0<\delta\leq\min_{x\in I}\max\left\{\dist_{\CC}(x,f),\dist_{\CC}((x,x),g)\right\},
$$  
and define 
$$
K=\min \left\{\frac{2\sqrt{n}\ln\left(1+2^{2-2n}\right)}{2^n\deg(f)+\sqrt{n}\ln\left(1+2^{2-2n}\right)},
\frac{2\sqrt{n}\ln\left(1+2^{2-4n}\right)}{2^{2n+1}(\deg(f)-1)+\sqrt{2n}\ln\left(1+2^{2-4n}\right)} \right\}.
$$

We observe that the terms in the definition of $K$ are the
coefficients in Corollaries~\ref{corollary:C0:localsizebound} and
\ref{corollary:C1:localsizebound:diameterdistance}.  With a slight
modification to Proposition~\ref{prop:global-size-bound}, we can
substitute $K$ and $\delta$ from above into
Equation~(\ref{eq:nonadaptive:bound}) to get the following corollary:

\begin{corollary}
  \label{corollary:separationboundcurvescomplexity}
Let $f\in\mathbb{R}[x_1,\dots,x_n]$.  Then, the \PV\ algorithm performs at most $\max\left\{1,\left(\frac{2\diam(I)}{K\delta}\right)^n\right\}$ subdivisions.
\end{corollary}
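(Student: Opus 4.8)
The plan is to obtain this as the \PV-specific instance of Proposition~\ref{prop:global-size-bound}, with $\varepsilon_1=2^{-n}$ and $\varepsilon_2=2^{-1}$ (the values for the $2^n$-ary subdivision of a cube) and with the two stopping criteria $C_0$ and the modified $C_1$. One cannot simply quote Proposition~\ref{prop:global-size-bound}, because the quantities here playing the role of $d(x,V_i)$ are the \emph{complex} distances $\dist_{\CC}(x,f)$ and $\dist_{\CC}((x,x),g)$, not distances for the metric on $X=\mathbb{R}^n$; this is the ``slight modification'' alluded to before the statement. So I would re-run the proof of Proposition~\ref{prop:global-size-bound}, using Corollary~\ref{corollary:C0:localsizebound:diameter} and Corollary~\ref{corollary:C1:localsizebound:diameterdistance} in place of the abstract diameter-distance property. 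Write $K_0$ and $K_1$ for the two expressions inside the minimum defining $K$; these are exactly the constants occurring in those two corollaries.

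Concretely: if no subdivision occurs, there is one region and the $\max\{1,\cdot\}$ covers this case. Otherwise, fix a terminal region, let $J$ be its parent at depth $k-1$ in the subdivision tree, and pick any $x$ in the terminal region, so $x\in J\subseteq I$. Since $J$ was subdivided, both $C_0(J)$ and $C_1(J)$ are \false; the contrapositives of Corollaries~\ref{corollary:C0:localsizebound:diameter} and~\ref{corollary:C1:localsizebound:diameterdistance} then give $\diam(J)>K_0\,\dist_{\CC}(x,f)$ and $\diam(J)>K_1\,\dist_{\CC}((x,x),g)$, hence
$$
\diam(J)\;>\;K\,\max\{\dist_{\CC}(x,f),\dist_{\CC}((x,x),g)\}\;\ge\;K\delta ,
$$
the last step being exactly the defining property of $\delta$ (valid since $x\in I$). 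By Assumption 2 with $\varepsilon_2=2^{-1}$, $\diam(J)\le 2^{-(k-1)}\diam(I)$, so $k\le\log_2(2\diam(I)/(K\delta))$; by Assumption 1 with $\varepsilon_1=2^{-n}$, the terminal region has measure at least $2^{-nk}\mu(I)\ge(K\delta/(2\diam(I)))^n\mu(I)$. Since terminal regions overlap in measure zero and cover $I$, there are at most $(2\diam(I)/(K\delta))^n$ of them, and the number of subdivisions does not exceed the number of terminal regions. (Rewriting $\mu(I)$ as $\diam(I)^n/n^{n/2}$ recovers the form recorded in Equation~(\ref{eq:nonadaptive:bound}).)

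I expect the only real subtleties to be the two just flagged. The first is the passage from the real-metric framework of Section~\ref{sec:subdiv-dd-test} to the complex distances used by the \PV\ tests; this forces the re-derivation above, but costs nothing, since that argument only uses the diameter inequalities of the two corollaries together with Assumptions 1 and 2, none of which refers to the metric on $X$. The second is the bookkeeping that folds the two tests into one bound: the key point is that $\delta$ was defined to be a lower bound for $\min_{x\in I}\max\{\dist_{\CC}(x,f),\dist_{\CC}((x,x),g)\}$, which is the separation bound matched to the \emph{pair} $(C_0,C_1)$, so that at the parent of any terminal region at least one of the two corollaries is violated with the correct constant. (That such a $\delta>0$ exists --- equivalently, that no real point of $I$ is a singular point of $f$ --- is part of the standing hypotheses of this subsection and need not be reproved here.)
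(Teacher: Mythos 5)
Your proof is correct and matches the paper's intent: the paper itself only states that the result follows ``with a slight modification to Proposition~\ref{prop:global-size-bound}'' and by substituting into Equation~(\ref{eq:nonadaptive:bound}), and you have correctly identified that the ``slight modification'' is precisely the replacement of the abstract distances $d(x,V_i)$ on $X=\mathbb{R}^n$ by the complex distances $\dist_{\CC}(x,f)$ and $\dist_{\CC}((x,x),g)$, which is harmless because the argument of Proposition~\ref{prop:global-size-bound} only uses the inequalities of Corollaries~\ref{corollary:C0:localsizebound:diameter} and~\ref{corollary:C1:localsizebound:diameterdistance} together with Assumptions~1 and~2. Your re-derivation is exactly the paper's argument spelled out; the only nitpick is that the inequality $k\le\log_2\bigl(2\diam(I)/(K\delta)\bigr)$ should be strict, but this does not affect the stated bound.
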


We spend the remainder of this section computing a lower bound for $\delta$.  We begin by observing that $f$ is a polynomial in $n$ variables and $g$ is a polynomial in $2n$ variables.  In other words, the varieties $V_{\CC}(f)$ and $V_{\CC}(g)$ are embedded in two different spaces.  It becomes easier to study and compare the varieties if they are subsets of the same space; therefore, we consider the image of $V_\CC(f)$ in the diagonal of a $2n$-dimensional space.  In particular, let the variables of $\CC^{2n}$ be $\{x_1,\dots,x_n,y_1,\dots,y_n\}$.  The diagonal $\Delta$ consists of all points of the form $x_i=y_i$; then, $\Delta$ is $n$-dimensional and we can identify $\CC^n$ with $\Delta$.  In our case, we write $f^\Delta$ for the polynomial system $f(x_1,\dots,x_n)$ and $x_i-y_i$ for all $i$.  We note that for $x\in\CC^n$, $\dist_\CC((x,x),f^\Delta)=\sqrt{2}\,\dist_\CC(x,f)$.  Therefore, we are interested in computing a lower bound for 
\begin{multline}\label{eq:2nlowerbound}
\min_{x\in I}\max\left\{\frac{1}{\sqrt{2}}\dist_{\CC}((x,x),f^\Delta),\dist_{\CC}((x,x),g)\right\}\\
\geq \frac{1}{\sqrt{2}}\min_{x\in I}\max\left\{\dist_{\CC}((x,x),f^\Delta),\dist_{\CC}((x,x),g)\right\}.
\end{multline}
We now focus on computing a lower bound on the RHS of Inequality~(\ref{eq:2nlowerbound}).

First we introduce some notation.  Let $I^\Delta$ be the image of $I$
in $\Delta$, i.e., $I^\Delta=\{(x,x)\in\Delta:x\in I\}$.  Moreover,
let
$C_\varepsilon=\left([-\varepsilon,\varepsilon]\times[-i \varepsilon,i
  \varepsilon]\right)^{2n}$ be the cube of side length $2\varepsilon$
centered at the origin in $\CC^{2n}$.  Then, we write
$I^\Delta_\varepsilon=I^\Delta\oplus C_\varepsilon$, where $\oplus$
denotes the Minkowski sum.  We observe that for all
$(x,y)\in I^\Delta_\varepsilon$, the distance from $(x,y)$ to
$I^\Delta$ is at most $2\sqrt{n}\varepsilon$
since that is the largest
distance from a point in $C_\varepsilon$ to the origin.  Similarly, if
$(x,y)\in\CC^{2n}$ is not in $I^\Delta_\varepsilon$, then the distance
from $(x,y)$ to $I^\Delta$ is more than $\varepsilon$ since
$C_\varepsilon$ contains the closed ball of radius $\varepsilon$ centered at
the origin.

Suppose that we can find a positive integer $k$ so that for any $(x,x)\in V_{\CC}(f^\Delta,g)$, the distance between $(x,x)$ and $I^\Delta$ is at least $\frac{\sqrt{n}}{2^{k-1}}$.  Then, we may use a bound of~\cite{JPT:SemiAlgebraic} to find a lower bound for the RHS of Inequality~(\ref{eq:2nlowerbound}) as follows:
\begin{proposition}\label{prop:separationbound}
Let $f\in\mathbb{Z}[x_1,\dots,x_n]$ be of degree $d$ and $g\in\mathbb{R}[x_1,\dots,x_n,y_1,\dots,y_n]$ be $g(x_1,\dots,x_n,y_1,\dots,y_n)= \langle\nabla f(x_1,\dots,x_n),\nabla f(y_1,\dots,y_n)\rangle$.  Suppose that $I\subseteq\mathbb{R}^n$ is an axis-aligned $n$-dimensional cube whose corners have integral coordinates.  Let $H$ be the maximum absolute value of the coefficients of $f$ and coordinates of the corners of $I$.  Suppose that $f^\Delta=\{f,x_i-y_i\}$ is the polynomial system corresponding to the image of $V_\CC(f)$ in the diagonal of $\mathbb{C}^{2n}$ and $I^\Delta=\{(x,x):x\in I\}$ is the image of $I$ in the diagonal of $\mathbb{C}^{2n}$.  Let $k$ be a positive integer so that for any $(x,x)\in V_{\CC}(f^\Delta,g)$, the distance between $(x,x)$ and $I^\Delta$ is more than $\frac{\sqrt{n}}{2^{k-1}}$.  Then, 
\begin{multline}
\min_{x\in I}\max\left\{\dist_{\CC}((x,x),f^\Delta),\dist_{\CC}((x,x),g)\right\}\\\geq\frac{1}{2^{k+1}}\left(2^{4-4n}\max\left\{2^{(2d-2)(k+1)}nd^2H^2,60n+8\right\}(2d-2)^{8n}\right)^{-4n2^{8n}(2d-2)^{8n}}.
\end{multline}
\end{proposition}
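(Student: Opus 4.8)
The plan is to bound, from below, the quantity $\min_{x\in I}\max\{\dist_\CC((x,x),f^\Delta),\dist_\CC((x,x),g)\}$ by a \emph{distance to a variety} and then invoke a semialgebraic/arithmetic separation bound of the type in~\cite{JPT:SemiAlgebraic}. First I would set up the right geometric object: consider, in $\CC^{2n}$, the variety $V_\CC(f^\Delta,g)$, i.e.\ the common zeros of $f(x)$, the $n$ linear forms $x_i-y_i$, and $g(x,y)$. A point $x\in I$ contributes a small value to the left-hand side precisely when the diagonal point $(x,x)$ is simultaneously close to $V_\CC(f^\Delta)$ and to $V_\CC(g)$; by a standard compactness/triangle-inequality argument, if both distances were smaller than some $r$, then $(x,x)$ would be within distance $O(r)$ of the intersection $V_\CC(f^\Delta,g)$ (one needs here that the nearest points can be taken to lie on the intersection, which is why the linear equations $x_i-y_i$ are carried along — they force everything back onto the diagonal). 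So the task reduces to: lower-bound the distance from $I^\Delta$ to $V_\CC(f^\Delta,g)$ \emph{restricted to the part of that variety that is not already far from $I^\Delta$}, which is exactly what the hypothesis on $k$ is for.

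Next I would translate "distance from $I^\Delta$ to the variety $V_\CC(f^\Delta,g)$" into an evaluation bound. The idea, following the $\varepsilon$-inflation already introduced in the text (the cubes $C_\varepsilon$ and the sets $I^\Delta_\varepsilon$), is: if for some small $\varepsilon$ the inflated set $I^\Delta_\varepsilon$ meets $V_\CC(f^\Delta,g)$, then there is a common solution of the polynomial system within the explicitly bounded region $I^\Delta_\varepsilon$, and one appeals to a separation bound for the minimal distance between a point of a prescribed-complexity semialgebraic/polynomial system and such a variety — this is where~\cite{JPT:SemiAlgebraic} enters. Concretely, I would: (i) record the total degree and coefficient bit-size of the combined system $\{f, x_i-y_i, g\}$ (degrees $d$, $1$, $2d-2$ respectively; coefficients controlled by $H$, with the coefficients of $g$ bounded in terms of $d$ and $H$ via the product rule for the inner product of gradients — this accounts for the $nd^2H^2$ and $(2d-2)$ factors); (ii) note the ambient dimension is $2n$, which is why $2d-2$ is raised to the $8n$ power and the exponent carries the $4n2^{8n}(2d-2)^{8n}$; (iii) plug these parameters into the JPT bound, being careful that the hypothesis "every relevant $(x,x)\in V_\CC(f^\Delta,g)$ is at distance $>\sqrt{n}/2^{k-1}$ from $I^\Delta$" is used to relate the JPT output (a bound on a root coordinate, or on a distance to the origin after translating) to an honest distance bound, losing only the factor $2^{k+1}$ that appears out front.

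The main obstacle I anticipate is the bookkeeping that converts the two \emph{separate} nearness conditions into a \emph{single} nearness-to-intersection condition with only a constant-factor loss: one must argue that if $\dist_\CC((x,x),f^\Delta)$ and $\dist_\CC((x,x),g)$ are both $\le r$ then $\dist_\CC((x,x), V_\CC(f^\Delta,g)) = O(r)$, and in general closeness to two varieties does \emph{not} imply closeness to their intersection. This is precisely why the statement imposes the extra hypothesis about $k$: it rules out the pathological regime by asserting that $V_\CC(f^\Delta,g)$ has no branch passing near $I^\Delta$ other than at scale $\ge\sqrt{n}/2^{k-1}$, so that the JPT-type bound (which is really a lower bound on how close the two systems can "come to agreeing" near $I^\Delta$) applies. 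I would therefore expect the proof to proceed by: contrapositive — assume the left-hand side is smaller than the claimed bound; inflate by the corresponding $\varepsilon$; observe that $I^\Delta_\varepsilon$ then contains (or is met by) a point witnessing that $f^\Delta$ and $g$ nearly vanish simultaneously; invoke~\cite{JPT:SemiAlgebraic} with the parameters computed in step (i)–(iii) above to contradict the assumed smallness; and finally reconcile the $\sqrt{2}$ and $\sqrt{n}$ geometric constants coming from the diagonal embedding and from the cube-versus-ball comparison. The degree/height/dimension substitutions are routine once the system $\{f,x_i-y_i,g\}$ is written down explicitly, so I would not belabor them.
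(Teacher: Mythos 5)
Your high-level instincts (inflate $I^\Delta$ to $I^\Delta_\varepsilon$, invoke a separation bound from~\cite{JPT:SemiAlgebraic}, track degrees and heights of the system $\{f, x_i-y_i, g\}$) are aligned with the paper, but the core reduction you propose is the wrong one, and that mis-step is load-bearing. You aim to lower-bound $\dist_\CC((x,x),V_\CC(f^\Delta,g))$ and then argue that closeness to $V_\CC(f^\Delta)$ and $V_\CC(g)$ separately forces closeness to their intersection. You correctly flag that such a ``Łojasiewicz-type'' inference is false in general, but then claim the hypothesis on $k$ repairs it. It does not: the $k$-hypothesis simply asserts that \emph{all} of $V_\CC(f^\Delta,g)$ is at distance more than $\sqrt{n}/2^{k-1}$ from $I^\Delta$. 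That makes the quantity you are trying to bound large \emph{trivially}, and it does nothing to rule out the pathology where $V_\CC(f^\Delta)$ and $V_\CC(g)$ each graze $I^\Delta$ without meeting. In fact your phrase ``restricted to the part of the variety not already far from $I^\Delta$'' describes an empty set under the hypothesis, which is a sign the reduction is not well-posed.

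The paper's proof does something different: it uses the $k$-hypothesis only to conclude that $V_\CC(f^\Delta,g)\cap I^\Delta_\varepsilon=\emptyset$ for $\varepsilon=2^{-k}$, i.e.\ that $V_\CC(f^\Delta)$ and $V_\CC(g)$ are \emph{disjoint inside the inflated region}. It then applies a version of~\cite[Theorem 1.2]{JPT:SemiAlgebraic} that lower-bounds the \emph{distance between two disjoint varieties} within a bounded semialgebraic set, not the distance from a point to an intersection. This sidesteps the Łojasiewicz issue entirely: for each $x\in I$, if both $\dist_\CC((x,x),f^\Delta)$ and $\dist_\CC((x,x),g)$ were smaller than half that variety-to-variety gap, then the nearest points on the two varieties would themselves lie in $I^\Delta_\varepsilon$ and be too close to each other, a contradiction; the $1/2$ in the triangle inequality contributes to the $1/2^{k+1}$ out front. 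Two further bookkeeping steps in the paper that your sketch omits, and which are needed to arrive at the stated exponents, are: (i) a homothety by $2^k$ centered at the origin, which normalizes the inflation to unit size, inflates coefficient bit-sizes by $O(dk)$ and $(2d-2)k$, and is undone at the end by a factor $2^{-k}$; and (ii) the realification $\CC^{2n}\cong\RR^{4n}$, which doubles the number of equations and multiplies coefficients by binomial factors, and is why the exponents read $8n$ and $4n\cdot 2^{8n}$ rather than $4n$ and $2n\cdot 2^{4n}$. Your step ``(ii) note the ambient dimension is $2n$'' would produce the wrong exponents precisely because it skips the passage to $\RR^{4n}$.
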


\begin{proof}
  If $d=1$, then $g$ is a nonzero constant, so the bound holds
  trivially.  Therefore, we assume that $d\geq 2$.  We observe that
  $g$ has degree $2d-2$, and, since the maximum absolute value of the
  coefficients of $\nabla f$ is $dH$, the maximum absolute value of
  the coefficients of $g$ is $nd^2H^2$.

  Let $\varepsilon=\frac{1}{2^k}$; by the assumption on $k$, it
  follows that $V_{\mathbb{C}}(f^\Delta,g)\cap I^\Delta_\varepsilon$
  is empty.  We proceed by applying a homothety centered at the origin
  by a factor of $2^k$ in $\mathbb{C}^{2n}$.  Therefore, if we let
  $\widetilde{I}^\Delta$ be the image of $I^\Delta$ after applying the
  homothety, then applying the homothety to $I^\Delta_\varepsilon$
  results in $\widetilde{I}^\Delta_1$.  Let $\widetilde{f}$ and
  $\widetilde{g}$ be the images of $f$ and $g$ under the homothety and
  a suitable scaling to restore integer coefficients.  Then, the
  maximum absolute value of the coefficients of $\widetilde{f}$ is
  $2^{dk}H$ and the maximum absolute value of the corners of
  $\widetilde{I}^\Delta$ is $2^{k}H$.  Additionally, the linear terms
  of the form $x_i-y_i$ are unchanged and the maximum absolute value
  of the coefficients of $g$ is $2^{(2d-2)k}nd^2H^2$.

  Next, we identify $\mathbb{C}^{2n}$ with $\mathbb{R}^{4n}$ by
  decomposing each complex variable into two real variables.  This
  doubles the number of polynomials and scales the maximum absolute
  value of the coefficients by binomial coefficients, which can be
  trivially bounded by $2^d$ for the polynomials coming from
  $\widetilde{f}$ and $2^{2d-2}$ for the polynomials coming from
  $\widetilde{g}$.  Hence, the maximum absolute value of the
  coefficients coming from $\widetilde{f}$ is at most $2^{d(k+1)}H$,
  and the maximum absolute value of the coefficients coming from
  $\widetilde{g}$ is at most $2^{(2d-2)(k+1)}nd^2H^2$.

  We observe that if $I=\prod[a_i,b_i]$, then $\widetilde{I}^\Delta_1$
  can be defined by the inequalities:
  \begin{align*}
    2^ka_i-1\leq \Re(x_i)&,\Re(y_i)\leq 2^kb_i+1\\
    -1\leq \Im(x_i)&,\Im(y_i)\leq 1\\
    -2\leq\Re(x_i)&-\Re(y_i)\leq 2.
  \end{align*}
  This system accounts for $10n$ inequalities with largest absolute
  value of the coefficients at most $2^{k}H+1$.  Moreover,
  $\widetilde{f}^\Delta$ corresponds to $2n+2$ equalities while
  $\widetilde{g}$ corresponds to $2$ equalities.  By applying
 ~\cite[Theorem 1.2]{JPT:SemiAlgebraic}, we get that the distance
  between $V_\mathbb{C}(\widetilde{f}^\Delta)$ and
  $V_\mathbb{C}(\widetilde{g})$ within $\widetilde{I}_1^\Delta$ is at
  least
  $$  \left(2^{4-4n}\max\left\{2^{(2d-2)(k+1)}nd^2H^2,60n+8\right\}(2d-2)^{8n}\right)^{-4n2^{8n}(2d-2)^{8n}}.
  $$
  By scaling this by $\frac{1}{2^k}$ to remove the homothety and
  appealing to the triangle inequality, we get the desired result.
\end{proof}

In the remainder of this section, we find an upper bound for $k$.  We find this bound by computing a separation bound between $I^\Delta$ and $V_\CC(f^\Delta,g)$.

\begin{proposition}\label{prop:separationbound2}
Let $f\in\mathbb{Z}[x_1,\dots,x_n]$ be smooth and of degree $d$ and $g\in\mathbb{R}[x_1,\dots,x_n,y_1,\dots,y_n]$ be $g(x_1,\dots,x_n,y_1,\dots,y_n)= \langle\nabla f(x_1,\dots,x_n),\nabla f(y_1,\dots,y_n)\rangle$.  Suppose that $I\subseteq\mathbb{R}^n$ is an axis-aligned $n$-dimensional cube whose corners have integral coordinates.  Let $H$ be the maximum absolute value of the coefficients of $f$ and coordinates of the corners of $I$.  Suppose that $f^\Delta=\{f,x_i-y_i\}$ is the polynomial system corresponding to the image of $V_\CC(f)$ in the diagonal of $\mathbb{C}^{2n}$ and $I^\Delta=\{(x,x):x\in I\}$ is the image of $I$ in the diagonal of $\mathbb{C}^{2n}$.  Let $(x,x)\in V_{\CC}(f^\Delta,g)$, then the distance between $(x,x)$ and $I^\Delta$ is at least
$$
\left(2^{4-2n}\max\left\{2^{(2d-2)}nd^2H^2,32n+8\right\}(2d-2)^{4n}\right)^{-2n2^{4n}(2d-2)^{4n}}.
$$
\end{proposition}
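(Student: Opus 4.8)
The plan is to follow the template of the proof of Proposition~\ref{prop:separationbound}, with two simplifications.  First, because the system $f^\Delta$ contains the linear equations $x_i-y_i$, the variety $V_{\CC}(f^\Delta,g)$ lies on the diagonal $\{x=y\}$, so the whole computation can be carried out in $\CC^n$ instead of $\CC^{2n}$; the relevant ambient real dimension is then $2n$ rather than $4n$, which is why the exponents in the target bound are half those of Proposition~\ref{prop:separationbound}.  Second, no a~priori separation is assumed, so there is no homothety step.  To begin, if $d=1$ then $\nabla f$ is a nonzero constant, $g$ is a nonzero constant, $V_{\CC}(f^\Delta,g)=\emptyset$, and the statement is vacuous; so assume $d\geq 2$.

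Every point of $V_{\CC}(f^\Delta,g)$ has the form $(x,x)$ with $x\in\CC^n$, and $(x,x)\in V_{\CC}(f^\Delta,g)$ exactly when $f(x)=0$ and $h(x):=g(x,x)=\sum_{i=1}^n(\partial_i f(x))^2=0$.  The diagonal embedding $x\mapsto(x,x)$ scales distances by $\sqrt 2$ and carries $I$ onto $I^\Delta$, so $\dist_{\CC}((x,x),I^\Delta)=\sqrt 2\,\dist_{\CC}(x,I)\geq\dist_{\CC}(x,I)$, and it suffices to bound $\dist_{\CC}(x,I)$ from below for $x\in V_{\CC}(f,h)$.  Here $h\in\ZZ[x_1,\dots,x_n]$ has degree at most $2d-2$, and, since each coefficient of $(\partial_i f)^2$ is a sum of at most $\prod_j(\eta_j+1)\leq 2^{|\eta|}\leq 2^{2d-2}$ products of two coefficients of $\partial_i f$ (each of absolute value at most $dH$), we get $\|h\|_\infty\leq 2^{2d-2}nd^2H^2$.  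Disjointness also holds: if $x\in I\subseteq\RR^n$ satisfies $f(x)=0$, then $x\in V_{\RR}(f)$ and $h(x)$ is a sum of squares of real numbers, so $h(x)=0$ would force $\nabla f(x)=0$, contradicting the smoothness of $V_{\RR}(f)$; hence $V_{\CC}(f,h)\cap I=\emptyset$.

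To make the problem bounded, observe that if $\dist_{\CC}(x,I)\geq 1$ the claimed (very small) bound holds trivially, so we may assume $x$ lies in the box $N\subseteq\CC^n$ obtained from $I$ by thickening by one unit in each real and each imaginary direction.  Identifying $\CC^n$ with $\RR^{2n}$ via $x_j=u_j+iv_j$, the region $N$ is cut out by $4n$ inequalities with coefficients of absolute value at most $H+1$; the box $I$ becomes the basic semialgebraic set $\{a_j\leq u_j\leq b_j,\;v_j=0\}$; and $V_{\CC}(f,h)$ becomes the common real zero set of the real and imaginary parts of $f$ and of $h$, which are real polynomials of degrees at most $d$, resp.\ at most $2d-2$, whose heights are controlled in terms of $n$, $d$, and $H$ (of the orders $2^dH$ and $2^{2d-2}nd^2H^2$, up to the binomial factors from the real/imaginary split).

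Finally, I would apply \cite[Theorem 1.2]{JPT:SemiAlgebraic} to the two disjoint semialgebraic sets $V_{\CC}(f,h)$ and $I$ inside the bounded region $N$ in $\RR^{2n}$: with ambient dimension $2n$, degrees bounded by $2d-2$, and the heights just recorded, this produces the lower bound
$$
\left(2^{4-2n}\max\left\{2^{2d-2}nd^2H^2,\,32n+8\right\}(2d-2)^{4n}\right)^{-2n2^{4n}(2d-2)^{4n}}
$$
for $\dist_{\CC}(x,I)$, hence for $\dist_{\CC}((x,x),I^\Delta)$, as claimed.  The main obstacle is the bookkeeping in this last step: one must enumerate the defining polynomials and inequalities and track the degree and height estimates---in particular the height of the diagonal restriction $h=g|_\Delta$, which exceeds that of $g$, together with the effect of the real/imaginary decomposition---precisely enough to match the exact statement of \cite[Theorem 1.2]{JPT:SemiAlgebraic}, and one must verify that the localization constant $1$ dominates the resulting estimate.
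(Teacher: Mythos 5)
Your approach is genuinely different from the paper's, and it has a real gap in the coefficient bookkeeping.

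The paper does not pass to $\CC^n$ and does not invoke \cite[Theorem 1.2]{JPT:SemiAlgebraic}. It stays in $\CC^{2n}\cong\RR^{4n}$, keeps $g$ as a polynomial in $2n$ separate complex variables (so its coefficients are only $nd^2H^2$, and after the real/imaginary split at most $2^{2d-2}nd^2H^2$), and encodes the diagonal by the linear equations $x_i-y_i=0$ rather than by substituting $y=x$. The crucial observation is that, since $V_{\RR}(f)$ is smooth, $V_\CC(f^\Delta,g)$ has no real points, so on $V_\CC(f^\Delta,g)\cap I^\Delta_1$ the degree-$2$ polynomial $2\sum_i\Im(x_i)^2$ --- the squared distance from $(x,x)$ to $\RR^{2n}\supseteq I^\Delta$ --- is bounded away from $0$. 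Applying \cite[Theorem 1.1]{JPT:SemiAlgebraic} (the lower bound for a single polynomial on a bounded semialgebraic set) to that function in $\RR^{4n}$ gives the displayed estimate for the squared distance; taking a square root halves the exponent and produces the stated bound.

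The gap in your route is in the heights. Substituting $y=x$ produces $h=g|_\Delta=\sum_i(\partial_i f)^2$, and --- as you yourself note --- its coefficients are already as large as $2^{2d-2}nd^2H^2$ \emph{before} the real/imaginary split, because the substitution forces monomials in the same variables to combine. After the split into $\RR^{2n}$ you incur a further binomial factor of up to $2^{2d-2}$, so the heights you would feed into \cite[Theorem 1.2]{JPT:SemiAlgebraic} are of order $2^{4d-4}nd^2H^2$, not the $2^{2d-2}nd^2H^2$ that appears in the statement. The paper avoids this precisely because $g$, with $x$- and $y$-variables kept separate, suffers no such combinatorial blow-up; the only $2^{2d-2}$ there comes from the split. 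So as written your argument would prove a version of the proposition with a strictly worse base in the exponent, not the stated bound. The exponents you would probably match --- Theorem~1.2 in $\RR^{2n}$ should give the same $-2n2^{4n}(2d-2)^{4n}$ that the paper obtains from Theorem~1.1 in $\RR^{4n}$ after the square root --- so the remaining obstruction is exactly the height of the diagonal restriction, which you acknowledge only loosely (``up to the binomial factors'').
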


\begin{proof}
If $d=1$, then $g$ is a nonzero constant, so the bound holds vacuously.  Therefore, we assume that $d\geq 2$.  Throughout this proof, we restrict our attention to $I^\Delta_1$ and we observe that if $(x,x)\in V_{\CC}(f^\Delta,g)$ is outside of $I^\Delta_1$, then $1$ is a lower bound on its distance to $I^\Delta$.  Since $f$ is smooth, it follows that $V_\CC(f^\Delta,g)\cap\mathbb{R}^{2n}$ is empty.  Therefore, by a compactness argument, $V_\CC(f^\Delta,g)\cap I^\Delta_1$ is bounded away from the real points in the diagonal, i.e., $\mathbb{R}^{2n}\cap\Delta$.  Moreover, since $V_\CC(f^\Delta,g)$ contains no real points, it follows that $2\sum\Im(x_i)^2$ is bounded away from zero for all $(x,x)\in V_\CC(f^\Delta,g)\cap I_1^\Delta$ and the sum is a lower bound on the square of the distance to $I_\Delta^1$.  We now proceed to find a lower bound on this sum.

As in Proposition~\ref{prop:separationbound}, we identify $\mathbb{C}^{2n}$ with $\mathbb{R}^{4n}$.  Since no homotheties are required, $I^\Delta_1$ corresponds to $10n$ inequalities with maximum coefficient size $H+1$, $f^\Delta$ corresponds to $2n+2$ equalities with coefficient size at most $2^dH$, and $g$ corresponds to $2$ equalities with coefficient size at most $2^{(2d-2)}nd^2H^2$.  Finally, the sum of interest is of degree $2$ with maximum coefficient size of $2$.  By applying~\cite[Theorem 1.1]{JPT:SemiAlgebraic}, we get that on $V_\CC(f^\Delta,g)\cap I_1^\Delta$, the sum $2\sum\Im(x_i)^2$ is at least
$$
\left(2^{4-2n}\max\left\{2^{(2d-2)}nd^2H^2,32n+8\right\}(2d-2)^{4n}\right)^{-4n2^{4n}(2d-2)^{4n}}.
$$
Since the sum $2\sum\Im(x_i)^2$ is the square of the distance from $(x,x)$ to $\mathbb{R}^{2n}$, which contains $I^\Delta_1$, so, by taking the square root, the result follows.
\end{proof}

By combining Corollary~\ref{corollary:separationboundcurvescomplexity} with Propositions~\ref{prop:separationbound} and~\ref{prop:separationbound2}, we obtain an explicit bound for the number of terminal regions produced by the \PV\ algorithm.

\begin{theorem}\label{thm:PV:steps}
Let $f\in\mathbb{Z}[x_1,\dots,x_n]$ be smooth and of degree $d$ and $I\subseteq\mathbb{R}^n$ be an axis-aligned $n$-dimensional cube whose corners have integral coordinates.  Let $H$ be the maximum absolute value of the coefficients of $f$ and coordinates of the corners of $I$.  The number of (terminal) regions produced by the \PV\ algorithm is
$$
2^{O(n^32^{24n}d^{12n+1}(d+\lg H+n\lg d))}.
$$
\end{theorem}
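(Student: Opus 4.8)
The plan is to combine the three ingredients already established: the region count of Corollary~\ref{corollary:separationboundcurvescomplexity}, which bounds the number $N$ of terminal regions by $\max\{1,(2\diam(I)/(K\delta))^n\}$; the explicit constant $K$ defined in Section~\ref{sec:worst-case-bounds}; and a lower bound on the separation number $\delta$ obtained by feeding Proposition~\ref{prop:separationbound2} into Proposition~\ref{prop:separationbound}. Taking logarithms, it suffices to show that $n\bigl(\lg\diam(I)-\lg K-\lg\delta\bigr)=O\bigl(n^{3}2^{24n}d^{12n+1}(d+\lg H+n\lg d)\bigr)$ and then exponentiate. We may assume $d\ge 2$: if $d=1$ then $\nabla f$ is a nonzero constant vector, $g$ is a nonzero constant, so $0\notin\square g(I\times I)$ and $C_1(I)$ already holds, giving exactly one region.

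First I would dispose of the cheap factors. Since the corners of $I$ have integer coordinates of absolute value at most $H$, the side length of $I$ is at most $2H$, whence $\diam(I)\le 2\sqrt n\,H$ and $\lg\diam(I)=O(\lg H+\lg n)$. For $K$, using $\ln(1+x)\ge x/2$ for $0<x\le 1$ to lower-bound the two numerators and bounding the two denominators crudely, one finds that both terms defining $K$ are at least $2^{-O(n)}/d$, so $\lg(1/K)=O(n+\lg d)$. Both contributions are negligible against $\lg(1/\delta)$, which will drive the whole estimate.

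The heart of the argument is the bound on $\delta$. By Inequality~(\ref{eq:2nlowerbound}) together with the identity $\dist_\CC((x,x),f^\Delta)=\sqrt2\,\dist_\CC(x,f)$, it suffices to lower-bound $\min_{x\in I}\max\{\dist_\CC((x,x),f^\Delta),\dist_\CC((x,x),g)\}$ and divide by $\sqrt2$. Proposition~\ref{prop:separationbound} supplies such a bound once an admissible integer $k$ is fixed, and Proposition~\ref{prop:separationbound2} supplies one: if $D$ is the explicit lower bound there on the distance from any $(x,x)\in V_\CC(f^\Delta,g)$ to $I^\Delta$, then any $k$ with $\sqrt n/2^{k-1}<D$ works, so we may take $k=O(-\lg D)$. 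A direct estimate of the expression in Proposition~\ref{prop:separationbound2} --- its exponent is $O(n2^{8n}d^{4n})$ and the logarithm of its base is $O(d+\lg H+n\lg d)$ --- gives $-\lg D=O\bigl(n2^{8n}d^{4n}(d+\lg H+n\lg d)\bigr)$, hence $k=O\bigl(n2^{8n}d^{4n}(d+\lg H+n\lg d)\bigr)$. Substituting this $k$ into Proposition~\ref{prop:separationbound}, the base $B'$ of the outer power satisfies $\lg B'=O\bigl((2d-2)(k+1)\bigr)=O\bigl(n2^{8n}d^{4n+1}(d+\lg H+n\lg d)\bigr)$, since the term $(2d-2)(k+1)$ dominates the logarithm of the inner $\max$, while its exponent is $E'=4n2^{8n}(2d-2)^{8n}=O(n2^{16n}d^{8n})$. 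Therefore $\lg(1/\delta)\le k+2+E'\lg B'+O(1)=O\bigl(n^{2}2^{24n}d^{12n+1}(d+\lg H+n\lg d)\bigr)$.

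Combining the three estimates, $\lg N\le n\bigl(O(1)+\lg\diam(I)+\lg(1/K)+\lg(1/\delta)\bigr)=O\bigl(n^{3}2^{24n}d^{12n+1}(d+\lg H+n\lg d)\bigr)$, which is exactly the claimed bound. The main obstacle is not conceptual but combinatorial bookkeeping: one must keep track of which argument of each nested $\max$ is dominant and carefully propagate the $(2d-2)^{4n}$ and $(2d-2)^{8n}$ factors, together with the extra powers of $2$ produced by identifying $\CC^{2n}$ with $\RR^{4n}$, through both layers of the bounds of~\cite{JPT:SemiAlgebraic}, since it is precisely the product of these two layers that yields the $2^{24n}$ and the degree exponent $12n+1$ in the final estimate.
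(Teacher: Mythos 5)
Your proposal is correct and follows essentially the same route as the paper's proof: plug the explicit $K$ and the separation bound $\delta$ from Propositions~\ref{prop:separationbound} and~\ref{prop:separationbound2} (feeding the latter in as the choice of $k$ for the former) into Corollary~\ref{corollary:separationboundcurvescomplexity}, then track logarithms. Your explicit disposal of the $d=1$ case via $C_1(I)$ and the bound $\lg\diam(I)=O(\lg H+\lg n)$ are sensible small additions that the paper leaves implicit, and your bookkeeping of the dominant term $(2d-2)(k+1)$ in $\lg B'$, the exponent $E'=O(n2^{16n}d^{8n})$, and the final factor of $n$ from taking the $n$-th power all match the paper's computation.
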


\begin{proof}
We observe that for any positive constant $a$, $\lg(\ln(1+2^{2-an}))=O(-n)$ and $\sqrt{x}\ln(1+2^{2-2x})$ is bounded, so $-\lg K=O(n+\lg d)$.  By Proposition \ref{prop:separationbound}, $-\lg \delta=O(n2^{16n}d^{8n}(dk+\lg H+n\lg d))$.  Next, since $k$ is an integer and an the exponent of $2$, $k$ can be chosen to be within $1$ of the base 2 logarithm of the bound in Proposition~\ref{prop:separationbound2}.  Therefore, $k=O(n2^{8n}d^{4n}(d+\lg H+n\lg d))$.  Substituting this into the bound for $\delta$, we find that $-\lg \delta = O(n^22^{24n}d^{12n+1}(d+\lg H+n\lg d))$.  Substituting the bounds into the expression in Corollary~\ref{corollary:separationboundcurvescomplexity} results in the stated complexity.
\end{proof}

\subsection{Adaptive Bounds}
\label{sec:adaptive-bounds}

In this section, we use continuous amortization to adaptively compute the number of boxes created by the \PV\ algorithm.  We follow the formulation of continuous amortization in Theorem~\ref{thm:local-sz-bd}.  While Corollary~\ref{corollary:C0:localsizebound:diameter} shows that the $C_0$ test can be substituted directly into the integral of Proposition~\ref{prop:adaptive-size-bound}, the $C_1$ test is slightly more challenging to use, even with Corollary~\ref{corollary:C1:localsizebound:diameterdistance} in hand, since it involves both $n$-dimensional and $2n$-dimensional spaces.  We, therefore, return to the original formulation of continuous amortization in Theorem~\ref{thm:local-sz-bd}.  We observe that Corollary~\ref{corollary:C0:localsizebound} can be reformulated into a local size bound since the volume of an $n$-dimensional cube is the width of the cube to the $n^{\text{th}}$ power, namely, 
$$
G_0(x)=\left(\frac{2\ln\left(1+2^{2-2n}\right)\dist_{\CC}(x,f)}{2^n\deg(f)+\sqrt{n}\ln\left(1+2^{2-2n}\right)}\right)^n
$$
is a local size bound for the $C_0$ test.  Similarly, Corollary~\ref{corollary:C1:localsizebound} can be reformulated into a local size bound as follows:
$$
G_1(x)=\left(\frac{2\ln\left(1+2^{2-4n}\right)\dist_{\CC}((x,x),g)}{2^{2n+1}(\deg(f)-1)+\sqrt{2n}\ln\left(1+2^{2-4n}\right)}\right)^n.
$$
In this case, even though the test in Corollary~\ref{corollary:C1:localsizebound} uses points $(a,b)\in J\times J$, since the statement is existential, the upper bound only gets smaller when restricted to the points in $J\times J$ on the diagonal $\Delta$.  Applying these local size bounds to Theorem~\ref{thm:local-sz-bd} gives the following result:

\begin{theorem}\label{theorem:PV:CA:complexity}
Let $f\in\mathbb{R}[x_1,\dots,x_n]$ and $g\in\mathbb{R}[x_1,\dots,x_n,y_1,\dots,y_n]$ be $g(x_1,\dots,x_n,y_1,\dots,y_n)= \langle\nabla f(x_1,\dots,x_n),\nabla f(y_1,\dots,y_n)\rangle$.  Suppose that $I\subseteq\mathbb{R}^n$ is an axis-aligned $n$-dimensional cube.  The number of (terminal) regions after the subdivision performed by the \PV\ algorithm (before balancing) is bounded above by the maximum of $1$ and
\begin{align*}
  2^n\int_I\min \left\{  \left(\frac{2^n\deg(f)+\sqrt{n}\ln\left(1+2^{2-2n}\right)}{2\ln\left(1+2^{2-2n}\right)\dist_{\CC}(x,f)}\right)^n,\left(\frac{2^{2n+1}(\deg(f)-1)+\sqrt{2n}\ln\left(1+2^{2-4n}\right)}{2\ln\left(1+2^{2-4n}\right)\dist_{\CC}((x,x),g)}\right)^n
  \right\} \, dV_n
\end{align*}
where $dV_n$ is the $n$-dimensional volume form.  Moreover, the algorithm does not terminate if and only if the integral diverges.
\end{theorem}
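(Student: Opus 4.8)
The plan is to apply Theorem~\ref{thm:local-sz-bd} with $h\equiv 1$, so that the integral counts the number of terminal regions, and with $\varepsilon_1 = 2^{-n}$, the volume-shrinkage constant for bisection of an $n$-dimensional cube into $2^n$ congruent children. To do this I first need a local size bound $F$ for the combined stopping criterion ``$C_0(J)=\true$ or $C_1(J)=\true$'', that is, a function $F:\mathbb{R}^n\to\mathbb{R}_{\geq 0}$ with $F(x)\leq \inf\{\mu(J): x\in J,\ C_0(J)=\false,\ C_1(J)=\false\}$. The point is that if both $C_0(J)$ and $C_1(J)$ fail, then in particular $J$ is too large for \emph{either} of the two sufficient conditions to hold. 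By Corollary~\ref{corollary:C0:localsizebound}, if $C_0(J)$ fails then for the specific point $x\in J$ we want to estimate at, $w(J) > \frac{2\ln(1+2^{2-2n})\dist_{\CC}(x,f)}{2^n\deg(f)+\sqrt n\ln(1+2^{2-2n})}$, hence $\mu(J)=w(J)^n > G_0(x)$. Likewise, applying Corollary~\ref{corollary:C1:localsizebound} with the existential point taken to be $(x,x)\in J\times J$ on the diagonal (which only weakens the bound, as noted in the excerpt), the failure of $C_1(J)$ forces $\mu(J) > G_1(x)$. Therefore $F(x)=\min\{G_0(x),G_1(x)\}$ is a local size bound for the combined criterion.

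Next I feed this $F$ into Theorem~\ref{thm:local-sz-bd} with $h\equiv 1$. The theorem gives that the number of regions is at most $\max\{1,\int_I \frac{1}{\varepsilon_1 F(x)}\,d\mu\}$, and with $\varepsilon_1 = 2^{-n}$ this is $\max\{1,\ 2^n\int_I \frac{d\mu}{F(x)}\}$. Now $\frac{1}{F(x)}=\frac{1}{\min\{G_0(x),G_1(x)\}}=\max\{G_0(x)^{-1},G_1(x)^{-1}\}$, which unfortunately is the \emph{max} of the reciprocals, whereas the theorem statement has the \emph{min}. So the cleanest route is to instead bound $F$ below by using Corollary~\ref{corollary:C0:localsizebound:diameter} / Corollary~\ref{corollary:C1:localsizebound:diameterdistance} together with Proposition~\ref{prop:adaptive-size-bound} directly: since the combined criterion is the disjunction of two diameter-distance tests (as established in Section~\ref{sec:PV-alg}) with empty intersection $V_0\cap V_1$ — here $V_0=V_{\CC}(f)\cap\mathbb{R}^n$ and $V_1$ the diagonal preimage of $V_{\CC}(g)$, whose intersection is empty precisely because $f$ is smooth — Proposition~\ref{prop:adaptive-size-bound} applies verbatim and yields $\max\{1,\ \mu(I)^{-1}\int_I\min_i(\cdots)\,d\mu\}$. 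Substituting $\varepsilon_1=2^{-n}$, $\varepsilon_2=2^{-1}$, and the explicit constants $K_0,K_1$ from Corollaries~\ref{corollary:C0:localsizebound:diameter} and~\ref{corollary:C1:localsizebound:diameterdistance}, together with $\diam(I)^n = n^{n/2}\mu(I)$ as in Equation~(\ref{eq:nonadaptive:bound}), collapses the two terms in the $\min$ to exactly $\bigl(\tfrac{2^n\deg(f)+\sqrt n\ln(1+2^{2-2n})}{2\ln(1+2^{2-2n})\dist_{\CC}(x,f)}\bigr)^n$ and $\bigl(\tfrac{2^{2n+1}(\deg(f)-1)+\sqrt{2n}\ln(1+2^{2-4n})}{2\ln(1+2^{2-4n})\dist_{\CC}((x,x),g)}\bigr)^n$ respectively, with an overall factor of $2^n$ out front once the $\mu(I)^{-1}$ and $n^{n/2}$ are reconciled with the $n^{n/2}$ hidden in $\diam(I)^n$; this gives the displayed bound.

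Finally, the divergence statement: Theorem~\ref{thm:local-sz-bd} already asserts that if the algorithm does not terminate then the integral is infinite, and conversely if the integral converges then the bound is finite so the subdivision tree is finite and the algorithm terminates. So the ``if and only if'' is immediate from the boxed statement of Theorem~\ref{thm:local-sz-bd} once we know $F$ is a genuine local size bound and that $F>0$ exactly off $V_0\cup V_1$ — and since $V_0\cup V_1$ has measure zero, finiteness of the integral is equivalent to termination. The main obstacle I anticipate is purely bookkeeping: getting the constants to line up exactly as displayed, in particular tracking the $\sqrt n$ factors relating side length $w(J)$ to $\diam(J)$ (the diameter-distance corollaries are stated for $\diam(J)$ while the local size bounds $G_0,G_1$ are stated for $w(J)^n=\mu(J)$), and making sure that replacing the existential point $(a,b)\in J\times J$ in Corollary~\ref{corollary:C1:localsizebound} by the diagonal point $(x,x)$ is legitimate — which it is, because a smaller feasible set for an existential ``there exists a point such that $w\leq\cdots$'' statement only makes the hypothesis harder to satisfy, hence the resulting lower bound on $\mu(J)$ for a \emph{failing} region is still valid. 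No delicate analysis is needed beyond what is already packaged in the cited results.
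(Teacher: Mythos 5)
Your overall strategy --- feeding local size bounds into Theorem~\ref{thm:local-sz-bd} with $h\equiv 1$ and $\varepsilon_1=2^{-n}$ --- is exactly the paper's approach, and your deduction that if both $C_0(J)$ and $C_1(J)$ fail for a $J\ni x$ then $\mu(J)>G_0(x)$ \emph{and} $\mu(J)>G_1(x)$ is correct. But you then draw the wrong conclusion from it: that chain of inequalities gives
$\inf\{\mu(J):x\in J,\ C_0(J)=C_1(J)=\false\}\geq\max\{G_0(x),G_1(x)\}$, so the local size bound for the disjunction is $F(x)=\max\{G_0(x),G_1(x)\}$, not $\min$. (The paper states the general rule explicitly just before Proposition~\ref{prop:adaptive-size-bound}: ``since at least one of the stopping criteria must be true, we can take the maximum of all of them for the local size bound.'') With $F=\max\{G_0,G_1\}$ you have $\tfrac{1}{\varepsilon_1F(x)}=2^n\min\{G_0(x)^{-1},G_1(x)^{-1}\}$, which is \emph{exactly} the displayed integrand, and you are done; the sign slip is what created the illusion of a mismatch and sent you on the detour.

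That detour --- routing through Proposition~\ref{prop:adaptive-size-bound} --- is precisely the path the paper deliberately declines, and your version of it has concrete problems. You set $V_0=V_\CC(f)\cap\mathbb{R}^n$, but Corollary~\ref{corollary:C0:localsizebound:diameter} is a statement about $\dist_\CC(x,f)$, the distance to the \emph{complex} variety, which can be strictly smaller than the distance to the real variety; substituting $d(x,V_\RR(f))$ would not reproduce the sufficient condition and would not give the constants in the theorem. More fundamentally, $\dist_\CC((x,x),g)$ lives in $\CC^{2n}$ while $\dist_\CC(x,f)$ lives in $\CC^n$, so the two tests are not diameter--distance tests in the same ambient space; this is the very reason the paper writes, just before introducing $G_0,G_1$, that it ``returns to the original formulation of continuous amortization in Theorem~\ref{thm:local-sz-bd}.'' (One can repair this by embedding everything in the diagonal of $\CC^{2n}$, as in Propositions~\ref{prop:separationbound}--\ref{prop:separationbound2}, and checking that the $\sqrt{2}$ scalings of both diameters and distances cancel --- but that is strictly more work than the $\max\{G_0,G_1\}$ route you already had.) Finally, your treatment of the ``if and only if'' only establishes one direction: ``nontermination $\Rightarrow$ divergence'' (Theorem~\ref{thm:local-sz-bd}) and its contrapositive ``convergence $\Rightarrow$ termination'' are the \emph{same} implication restated, and your measure-zero remark does not supply the converse. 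The remaining direction, ``divergence $\Rightarrow$ nontermination,'' needs a separate argument: the integrand is unbounded only where $\dist_\CC(x,f)$ and $\dist_\CC((x,x),g)$ both vanish, which for real $x$ means $f(x)=0$ and $\nabla f(x)=0$, i.e.\ a real singularity, and boxes containing a real singularity can never satisfy $C_0$ or $C_1$.
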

\begin{proof}
This is a straight-forward application of continuous amortization from Theorem~\ref{thm:local-sz-bd} with $\varepsilon_1=2^{-n}$.  The only statement left to prove is that if the integral diverges, then the algorithm does not terminate.  The integral diverges if and only if there exists a point $x\in I$ so that $\dist_\CC(x,f)=0$ and $\dist_\CC((x,x),g)=0$.  This, however, only happens when $f$ has a real singularity, and regions containing real singularities never pass either of the $C_0$ or $C_1$ tests.
\end{proof}

This integral provides a more adaptive and accurate estimate on the complexity than the worst-case {\em a priori} bounds based on the size of the input because it does not assume that the worst case occurs at every point (or even at any point).  Moreover, this integral can be evaluated even when the input polynomial has complex (but not real) singularities.  Additionally, this integral applies even when $f$ does not have integral coefficients.

\subsection{Overall Bit-complexity Bound}

In this section, we extend the results of Theorems~\ref{thm:PV:steps} and~\ref{theorem:PV:CA:complexity} to bound the bit-complexity of the \PV\ algorithm using both adaptive and non-adaptive approaches.  We begin by bounding the cost for evaluating each of the tests $C_0$ and $C_1$ on an arbitrary $n$-dimensional cube.  In this section, we use $O(\cdot)$ and $O_B(\cdot)$ to denote the arithmetic complexity and bit-complexity, respectively.  The soft-$O$ notation, $\sO(\cdot)$ and $\sOB(\cdot)$, mean that we are ignoring logarithmic factors of the dominant term.

A closer look at the predicates $C_0$ and $C_1$ and the centered form (see Section~\ref{Section:localsizebound}) reveals that each step of the \PV\ algorithm consists of a multivariate Taylor shift.  In particular, given a polynomial $F \in \mathbb{Z}[x_1, \dots, x_n]$ and dyadic rational numbers $a_1, \dots, a_n$, we recursively compute the coefficients of $F(x_1 + a_1, \dots, x_n + a_n)$, cf~\cite{mmt-tcs-2010}.

\begin{lemma}
\label{lem:biv-Taylor}
Consider a polynomial $F \in \ZZ[x_1,\cdots,x_n]$ of total degree $d$ and whose coefficients have maximum bit-size $\tau$, and integers $a_1,\dots,a_n$ of bit-size at most $\varrho$.  The Taylor shift $F(x_1 + a_1,\dots,x_n+a_n)$ costs $\sOB(d^{n+1}\varrho+d^n\tau)$.
\end{lemma}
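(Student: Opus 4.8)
The plan is to perform the shift one variable at a time, reducing the problem to $n$ univariate (partial) Taylor shifts, and then to reduce each univariate shift to a single polynomial multiplication, which is handled by Kronecker substitution together with fast integer multiplication; the soft-$O$ absorbs the cost of these standard fast subroutines.

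In detail, I would set $F^{(0)}=F$ and $F^{(j)}(x_1,\dots,x_n)=F^{(j-1)}(x_1,\dots,x_{j-1},x_j+a_j,x_{j+1},\dots,x_n)$, so the output is $F^{(n)}$, and regard each step $F^{(j-1)}\mapsto F^{(j)}$ as a univariate Taylor shift in $x_j$ over the ring $R_j=\mathbb{Z}[x_1,\dots,x_{j-1},x_{j+1},\dots,x_n]$, writing $[x_j^i]P$ for the coefficient of $x_j^i$ in $P$ (an element of $R_j$). Two invariants make the bookkeeping go through: every $F^{(j)}$ still has total degree at most $d$, since substituting $x_j+a_j$ cannot raise total degree; and every integer coefficient of every $F^{(j)}$ has bit-size $\sO(\tau+d\varrho)$, because the coefficient of a monomial $x^\beta$ in $F(x+a)$ equals $\sum_{\alpha\ge\beta}c_\alpha\binom{\alpha}{\beta}a^{\alpha-\beta}$, a sum of at most $\binom{n+d}{d}\le 2^{n+d}$ terms each bounded by $2^{\tau}\cdot 2^{d}\cdot 2^{d\varrho}$, and the identical estimate applies to any partially shifted $F^{(j)}$. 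Hence every $F^{(j)}$ has $O(d^n)$ monomials, each with a coefficient of bit-size $\sO(\tau+d\varrho)$, i.e. total bit-size $\sO(d^n(\tau+d\varrho))$.

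Next I would bound the cost of one partial shift in $x_j$ (we may assume $a_j\ne 0$ and $\varrho\ge 1$, else there is nothing to do). Writing $P:=F^{(j-1)}\in R_j[x_j]$ of degree at most $d$, the standard fast Taylor-shift identity (cf.~\cite{mmt-tcs-2010}) forms $A(x_j)=\sum_{l=0}^{d}(d-l)!\,([x_j^{d-l}]P)\,x_j^l$, whose coefficients lie in $R_j$, and $B(x_j)=\sum_{m=0}^{d}\frac{d!\,a_j^{m}}{m!}\,x_j^m$, whose coefficients are integers of bit-size $\sO(d\varrho)$; then the coefficient of $x_j^k$ in $P(x_j+a_j)$ equals $\frac{1}{k!\,d!}$ times the coefficient of $x_j^{d-k}$ in $A\cdot B$, and this division is exact since $P(x_j+a_j)$ again lies in $R_j[x_j]$. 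Assembling $A$ and $B$ and carrying out the $O(d^n)$ exact divisions by the $\sO(d)$-bit integers $k!\,d!$ costs $\sOB(d^n(\tau+d\varrho))$; and the product $A\cdot B$ is the dominant step. Viewing $A$ as an integer polynomial in the $n$ variables $x_j,x_1,\dots,x_{j-1},x_{j+1},\dots,x_n$ of degree at most $d$ in $x_j$ and at most $d$ in each of the remaining $n-1$ variables, and $B$ as an integer polynomial in $x_j$ alone of degree at most $d$, the product has $O(d^n)$ coefficients, each of bit-size $\sO(\tau+d\varrho)$; by Kronecker substitution this reduces to one multiplication of two integers of bit-size $\sO(d^n(\tau+d\varrho))$, which costs $\sOB(d^n(\tau+d\varrho))=\sOB(d^n\tau+d^{n+1}\varrho)$ by fast integer multiplication. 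Summing over the $n$ partial shifts, and absorbing the factor $n$ into the soft-$O$, yields the claimed bound $\sOB(d^{n+1}\varrho+d^n\tau)$.

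The main obstacle is the bookkeeping rather than any single estimate: one must keep every intermediate coefficient at bit-size $\sO(\tau+d\varrho)$ through the factorial scalings and through the degree doubling in the product $A\cdot B$, and verify that each division performed (by $k!\,d!$) is exact — which it is, precisely because the partially shifted polynomials provably remain in $R_j[x_j]$. Everything else — reducing a univariate shift to one polynomial product, Kronecker packing, and fast integer multiplication — is standard and is exactly what the $\sOB$ notation is designed to hide.
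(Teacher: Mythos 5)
Your proposal is correct, and it reaches the stated bound by a genuinely different route than the paper. The paper proves the lemma by induction on the number of variables $n$ combined with a divide-and-conquer recursion on the degree in the last variable $x_n$: it precomputes the powers $(x_n+a_n)^{2^i}$ by repeated squaring, splits $F$ with respect to $x_n$ so that $F(x+a)=F_0(x+a)+(x_n+a_n)^{d/2}F_1(x+a)$, recurses on the halves (degree in $x_n$ halving, number of pieces doubling, multiplication cost per level staying $\sOB(d^n(d\varrho+\tau))$), and at the bottom of the recursion applies the univariate Taylor shift bound of von zur Gathen--Gerhard to $(d+1)$ polynomials in $n-1$ variables, closing the induction. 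You instead perform a linear chain of $n$ partial shifts, one variable at a time, and reduce each partial shift in $x_j$ to a single polynomial multiplication via the classical convolution identity ($A\cdot B$ with the factorial rescalings), which is then handled by Kronecker substitution and one fast integer multiplication on $\sO(d^n(\tau+d\varrho))$-bit integers. Two invariants carry the bookkeeping: the total degree never exceeds $d$, and the coefficient bit-size stays $\sO(\tau+d\varrho)$ at every stage because each partial shift $F^{(j)}$ is itself a full shift of $F$ by a vector of $\varrho$-bit integers, so the same direct estimate applies uniformly. Your route is more modular --- it avoids the two-level recursion of the paper's argument, does not assume $d+1$ is a power of $2$, and does not invoke the univariate result as a black box --- while the paper's route stays closer in form to the cited source~\cite{mmt-tcs-2010} and makes the per-level cost of the $x_n$-recursion explicit. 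Both are valid; the bounds and the absorbed polylog factors (including the factor of $n$) match.
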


\begin{proof}
We begin the proof with two observations: The maximum degree of any polynomial appearing in this proof is $d$ and the logarithm of the bit-size of the coefficients is $\widetilde{O}(d\varrho+\tau)$, see, e.g.,~\cite[Lemma 2.1]{GG-Taylor-97}.  We prove this lemma by induction; when $n=1$, this is a univariate Taylor shift, whose complexity is $\sOB(d^2\varrho+d\tau)$ by~\cite[Theorem 2.4]{GG-Taylor-97}.

For the inductive step, we assume that $d+1$ is a power of $2$.  We begin by calculating $(x_n+a_n)^{2^i}$ for $i=0,\dots,\lg d$.  Since each of these polynomials has coefficients of maximum bit-size $\sOB(d\varrho)$, and these can be computed through successive squaring, the total cost is $\sOB(d^2\varrho)$.  We now write 
$$
F(x_1+a_1,\dots,x_n+a_n)=F_0(x_1+a_1,\cdots,x_n+a_n)+(x_n+a_n)^{d/2}F_1(x_1+a_1,\cdots,x_n+a_n)
$$
where in each $F_i$, the degree in $x_n$ is at most $d/2$.  The cost to compute the product $(x_n+a_n)^{d/2}F_1(x_1+a_1,\cdots,x_n+a_n)$ is $\sOB(d^n(d\varrho+\tau))$.  By continuing this computation recursively, we see that the number of polynomials doubles each time and the maximum degree of $x_n$ halves each time, so the total cost of multiplication remains $\sOB(d^n(d\varrho+\tau))$ at every step.  The recursion has depth $\lg(d+1)$, and the final step of the recursion requires $(d+1)$ Taylor shifts on $(n-1)$ variables.  The result then follows from the inductive hypothesis.
\end{proof}

Using Theorem~\ref{thm:PV:steps} and Lemma~\ref{lem:biv-Taylor}, we can calculate the overall bit-complexity of the \PV\ algorithm.

\begin{theorem}\label{thm:nonadaptive-bit-complexity}
Let $f\in\mathbb{Z}[x_1,\dots,x_n]$ be smooth and of degree $d$ and $I\subseteq\mathbb{R}^n$ be an axis-aligned $n$-dimensional cube whose corners have integral coordinates.  Let $\tau=\lg H$ be the maximum bit-size of the coefficients of $f$ and the corners of $I$.  The overall bit-complexity of the \PV\ algorithm is 
$$
2^{O(n^32^{24n}d^{12n+1}(d+\tau+n\lg d))}\sOB(2^{26n}d^{14n+2}(d+\tau)).
$$
\end{theorem}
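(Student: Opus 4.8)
The plan is to bound the bit-complexity as the number of nodes of the subdivision tree times the maximum cost of processing a single node, where processing a node means evaluating the two predicates $C_0$ and $C_1$ on the associated $n$-dimensional cube. Since every internal node of the tree has at least two children, the number of internal nodes is at most the number of leaves, so the total number of nodes is at most twice the number of terminal regions; by Theorem~\ref{thm:PV:steps} (with $\lg H=\tau$) this count is $2^{O(n^32^{24n}d^{12n+1}(d+\tau+n\lg d))}$, which will supply the first factor in the claimed bound. It then remains to show that each node is processed within $\sOB(2^{26n}d^{14n+2}(d+\tau))$ bit operations.

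The first ingredient for the per-node cost is the depth $D$ of the subdivision tree. Rerunning the diameter argument in the proof of Proposition~\ref{prop:global-size-bound} with $\varepsilon_2=\tfrac12$ shows that every node sits at depth at most $D=1+\lg\diam(I)-\lg(K\delta)$, where $K$ and $\delta$ are the quantities from the non-adaptive analysis. From the estimates in the proof of Theorem~\ref{thm:PV:steps} we have $\lg\diam(I)=O(\tau)$, $-\lg K=O(n+\lg d)$, and, via Propositions~\ref{prop:separationbound} and~\ref{prop:separationbound2}, $-\lg\delta=O(n^22^{24n}d^{12n+1}(d+\tau+n\lg d))$; hence $D=O(n^22^{24n}d^{12n+1}(d+\tau+n\lg d))$, and in particular $\tau=O(D)$.

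Next I would unwind the cost of evaluating $C_0$ and $C_1$ on a single cube. Inspecting the centered-form formulas --- Equation~(\ref{eq:simplified:centered:form}) for $C_0$ and the $C_1$ inequality derived in Section~\ref{sec:extend-C1} --- shows that evaluating $C_0$ is a multivariate Taylor shift of $f$ (an $n$-variate polynomial of degree $d$) to the midpoint of the cube, and that evaluating the modified $C_1$ test is a multivariate Taylor shift of $g=\langle\nabla f,\nabla f\rangle$ (a $2n$-variate polynomial of degree $2d-2$) to that midpoint, followed in each case by a weighted sum of absolute values of the resulting coefficients and a comparison with $0$ that is dominated by the shift. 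Since Lemma~\ref{lem:biv-Taylor} requires integer input, I would first apply a single global dyadic-affine change of coordinates sending $I$ to $[0,2^{D+1}]^n$; after clearing denominators this replaces $f$ and $g$ by integer polynomials whose coefficients have bit-size $O(dD)$ (using $\tau=O(D)$) and makes the midpoint of every cube of the subdivision tree an integer point of bit-size $O(D)$. Lemma~\ref{lem:biv-Taylor} then bounds the $C_0$ shift by $\sOB(d^{n+1}D)$ and, applied with $2n$ variables and degree $2d-2$, bounds the $C_1$ shift by $\sOB(2^{2n}d^{2n+1}D)$, which dominates.

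Finally, I would combine the pieces: the per-node cost is $\sOB(2^{2n}d^{2n+1}D)$, and substituting the bound for $D$ and absorbing factors polynomial in $n$ and in $\lg d$ into the $\sOB$ gives $\sOB(2^{26n}d^{14n+2}(d+\tau))$; multiplying by the number of nodes yields the stated complexity. The step I expect to be the main obstacle is the bit-size bookkeeping rather than anything conceptual: one must verify that the denominators introduced by the dyadic midpoints and the depth-dependent growth $O(dD)$ of the rescaled coefficients are genuinely dominated by the Taylor-shift costs recorded above, and that $-\lg\delta$ really does dwarf $\tau$, which is what legitimizes the clean ``$(d+\tau)$'' factor in the statement. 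A minor additional point is to check that the post-processing of the \PV\ algorithm --- evaluating the sign of $f$ at the corners of the cubes, and, if included, balancing the tree --- costs no more than the subdivision itself, so it does not change the bound.
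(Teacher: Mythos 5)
Your proposal is correct and follows the paper's proof in all essentials: you combine the region count from Theorem~\ref{thm:PV:steps} with a per-node Taylor-shift cost from Lemma~\ref{lem:biv-Taylor}, using the depth bound governed by $-\lg\delta$ (your $D$) to control the shift bit-size, exactly as the paper does. The only cosmetic difference is the accounting for internal nodes---you bound them directly by the number of leaves, while the paper invokes the area-charging argument of \cite[Section~7.1]{Burr:2016}---and both yield the same constant factor.
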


\begin{proof}
We observe that, after each subdivision in the \PV\ algorithm, the bit-size of the center of the Taylor shift increases by at most $1$.  To simplify the calculation, we charge each $n$-dimensional cube in the final partition for all intermediate $n$-dimensional cubes that contain it, proportionally to their relative areas.  Following the approach of~\cite[Section 7.1]{Burr:2016}, it follows that the total complexity cost of the \PV\ algorithm is at most twice the cost incurred by the terminal regions themselves.

We observe that the maximum bit-size of a Taylor shift is $O(-\lg\delta)$ from Theorem~\ref{thm:PV:steps}, so we replace $\varrho$ in Lemma~\ref{lem:biv-Taylor} by the bound from this theorem.  We also recall, from Proposition~\ref{prop:separationbound}, that $g$ is a polynomial of degree $2d-2$ in $2n$ variables whose coefficients have maximum bit-size $O(\tau+d+\lg n)$.  By substituting these values into Lemma~\ref{lem:biv-Taylor} and multiplying by the maximum number of regions, we arrive at the overall bit-complexity of
$$
2^{O(n^32^{24n}d^{12n+1}(d+\tau+n\lg d))}\sOB((2d-2)^{2n+1}(n^22^{24n}d^{12n+1}(d+\tau+n\lg d))+(2d-2)^{2n}(\tau+d+\lg n)),
$$
which simplifies to the desired expression.
\end{proof}

We observe that in the 2-dimensional case that frequently occurs in applications, the overall bit-complexity of the \PV\ algorithm is as follows:

\begin{corollary}
The bit-complexity of the \PV\ algorithm for curves is
$$
2^{O(d^{25}(d+\tau+\lg d))}\sOB(d^{30}(d+\tau)).
$$
\end{corollary}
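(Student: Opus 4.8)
The plan is to derive this corollary directly from Theorem~\ref{thm:nonadaptive-bit-complexity} by setting $n=2$ and absorbing the resulting numerical constants into the asymptotic notation. First I would substitute $n=2$ into the exponent $n^3 2^{24n} d^{12n+1}(d+\tau+n\lg d)$ appearing there: here $n^3=8$ and $2^{24n}=2^{48}$ are absolute constants, $12n+1=25$, and $n\lg d = 2\lg d = O(\lg d)$, so the entire exponent is $O(d^{25}(d+\tau+\lg d))$, yielding the factor $2^{O(d^{25}(d+\tau+\lg d))}$.

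Next I would treat the second factor $\sOB(2^{26n}d^{14n+2}(d+\tau))$. With $n=2$ we have $2^{26n}=2^{52}$, an absolute constant, and $14n+2=30$, so this factor is $\sOB(d^{30}(d+\tau))$; any additive $\lg n$-type terms hidden inside the $\widetilde{O}_B$ are likewise constants and are swallowed. Multiplying the two pieces gives the claimed bound $2^{O(d^{25}(d+\tau+\lg d))}\sOB(d^{30}(d+\tau))$.

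There is essentially no obstacle here beyond careful bookkeeping: the one point requiring attention is recognizing that every quantity depending only on the ambient dimension --- namely $n^3$, $2^{24n}$, $2^{26n}$, $\lg n$, and the additive $n\lg d$ inside the exponent --- becomes an absolute constant (or a lower-order term) once $n$ is fixed to $2$, and may therefore be legitimately absorbed into $O(\cdot)$ and $\widetilde{O}_B(\cdot)$ rather than tracked explicitly. No new estimates are needed beyond Theorem~\ref{thm:nonadaptive-bit-complexity} itself.
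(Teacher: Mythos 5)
Your derivation is correct and matches the intent of the paper: the corollary is obtained simply by setting $n=2$ in Theorem~\ref{thm:nonadaptive-bit-complexity} and absorbing the now-constant factors $n^3$, $2^{24n}$, $2^{26n}$ into the $O(\cdot)$ and $\sOB(\cdot)$. The paper gives no explicit proof for this corollary precisely because the substitution is routine, and you have carried it out faithfully.
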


We may also use Theorem~\ref{thm:local-sz-bd} to find an adaptive bound for the bit complexity.  To be able to use this Theorem, we need to define the appropriate functions $h_0$ and $h_1$ that compute the charges to the terminal regions depending on the $C_0$ and $C_1$ tests.  The main complexity costs in the $C_0$ and $C_1$ tests are the costs for the Taylor shifts.  Therefore, we use Lemma~\ref{lem:biv-Taylor} to derive appropriate cost functions.  We observe that for an $n$-dimensional cube $J$, the bit-size of the appropriate Taylor shift is at most $(\lg w(I)-\lg w(J))$.  By the discussion above, since the complexity cost of the \PV\ algorithm is at most twice the complexity cost of the terminal regions, we may focus on terminal regions.

If $J$ passes $C_0$, since $f$ is a degree $d$ polynomial in $n$ variables whose coefficients have maximum bit-size $\tau$, it follows that the charge associated to $J$ is $\sOB(d^{n+1}\lg w(I)-d^{n+1}\lg w(J)+d^n\tau)$.  Since the functions in Theorem~\ref{thm:local-sz-bd} are based on the measure of $J$ and not its width, we define the function
$$
h_0(y)=\left(d^{n+1}\lg w(I)-\frac{d^{n+1}}{n}\lg y +d^n\tau\right)k_0(d,\tau,n)
$$
where $k_0(d,\tau,n)$ is the maximum value over $I$ of the suppressed terms in the $\sOB$.  We observe that $h_0(\mu(J))$ is an upper bound on the bit-cost to compute the Taylor shift for the $C_0$ test for $J$.  

On the other hand, if $J$ passes $C_1$, since $g$ is a degree $2d-2$ polynomial in $n$ variables whose coefficients have maximum bit-size $O(\tau+d+\lg n)$, it follows that the charge associated to $J$ for the $C_1$ test is $\sOB(2^{2n}d^{2n+1}\lg w(I)-2^nd^{2n+1}\lg w(J)+2^{2n}d^{2n}(\tau+d+\lg n))$, which simplifies to $\sOB(2^{2n}d^{2n+1}\lg w(I)-2^{2n}d^{2n+1}\lg w(J)+2^{2n}d^{2n}\tau)$.  As above, we define the function
$$
h_1(y)=\left(2^{2n}d^{2n+1}\lg w(I)-\frac{2^{2n}d^{2n+1}}{n}\lg y+2^{2n}d^{2n}\tau\right)k_1(d,\tau,n)
$$
where $k_1(d,\tau,n)$ is the maximum value over $I$ of the suppressed terms in the $\sOB$.  We observe that $h_1(\mu(J))$ is an upper bound on the bit-cost to compute the Taylor shift for the $C_1$ test for $J$.

We use these two functions along with $G_0$ and $G_1$ as defined in Section~\ref{sec:adaptive-bounds} to develop adaptive bounds on the bit-complexity of the \PV\ algorithm as follows:

\begin{theorem}\label{thm:bit-CA}
Let $f\in\mathbb{Z}[x_1,\dots,x_n]$ be smooth and of degree $d$ and $I\subseteq\mathbb{R}^n$ be an axis-aligned $n$-dimensional cube whose corners have integral coordinates.  Let $\tau=\lg H$ be the maximum bit-size of the coefficients of $f$ and the corners of $I$.  The overall bit-complexity of the \PV\ algorithm is the maximum of $h_0(w(I)^n)$, $h_1(w(I)^n)$, and
$$
2^n\int_I\min\left\{\frac{h_0(2^{-n}G_0(x))}{G_0(x)},\frac{h_1(2^{-n}G_1(x))}{G_1(x)}\right\}dV_n.
$$
\end{theorem}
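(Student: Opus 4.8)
The plan is to recognize the claim as a direct instance of the continuous-amortization master estimate, Theorem~\ref{thm:local-sz-bd}, applied with the cost functions $h_0,h_1$ and the local size bounds $G_0,G_1$ built in Section~\ref{sec:adaptive-bounds} (which in turn rest on Corollaries~\ref{corollary:C0:localsizebound} and~\ref{corollary:C1:localsizebound}). Before invoking that theorem I would verify its hypotheses: each $h_i$ has the form $y\mapsto(a_i-b_i\lg y+c_i)\,k_i(d,\tau,n)$ with $b_i>0$, hence is non-increasing on $(0,\infty)$, and at $y=\mu(I)=w(I)^n$ it takes the nonnegative value $d^n\tau\,k_0$, resp.\ $2^{2n}d^{2n}\tau\,k_1$, so $h_i\ge 0$ on $(0,w(I)^n]$, the only range of region measures that arises; the same calculation shows that $y\mapsto h_i(y)/y$ is non-increasing there as well. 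I would also note that $G_0$ (resp.\ $G_1$), being a local size bound for $C_0$ (resp.\ $C_1$), is automatically a local size bound for the combined stopping criterion ``$C_0$ or $C_1$'' run by the \PV\ algorithm, since a region failing $C_0\vee C_1$ fails $C_0$ in particular, so the defining infimum only grows.

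Next I would reduce the \emph{total} bit-cost to a sum over the terminal regions. As in the proof of Theorem~\ref{thm:nonadaptive-bit-complexity}, each subdivision raises the bit-size of the relevant Taylor-shift center by at most one, so the cost charged to a region at depth $k$ is, up to the factors absorbed into $k_0,k_1$, affine in $k$; charging each processed region to the terminal regions it contains proportionally to measure and summing the resulting geometric series (ratio $\varepsilon_1=2^{-n}\le\tfrac12$) bounds the total cost by a constant multiple of $\sum_{J\in P}\mathrm{cost}(J)$. By Lemma~\ref{lem:biv-Taylor}, since the $C_0$ test is a Taylor shift of $f$ and the $C_1$ test an assembly and shift of $g$, one has $\mathrm{cost}(J)\le h_0(\mu(J))$ when $C_0(J)$ holds and $\mathrm{cost}(J)\le h_1(\mu(J))$ when $C_0(J)$ fails but $C_1(J)$ holds (the preliminary $C_0$ evaluation in the second case being subsumed into the $\sOB$ hidden in $h_1$, which dominates $h_0$). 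Then, for a terminal $J$ with parent $J'$ and any $x\in J$, the fact that $J'$ was subdivided gives $C_0(J')=C_1(J')=\false$, hence $\mu(J)\ge\varepsilon_1\mu(J')\ge 2^{-n}G_i(x)$ for $i=0,1$; combining this with the monotonicity of $y\mapsto h_i(y)/y$ and with $h_1\ge h_0$ yields $\mathrm{cost}(J)/\mu(J)\le\min_i h_i(2^{-n}G_i(x))/(2^{-n}G_i(x))$ for all $x\in J$. Integrating over $J$ and summing over the (essentially disjoint, $I$-covering) terminal regions produces $\sum_{J\in P}\mathrm{cost}(J)\le 2^n\int_I\min_i h_i(2^{-n}G_i(x))/G_i(x)\,dV_n$, and the leftover cases ``$I$ passes $C_0$'' and ``$I$ passes $C_1$ without subdivision'' contribute the remaining two terms $h_0(w(I)^n),h_1(w(I)^n)$ of the stated maximum, exactly as the $\max\{h(\mu(I)),\cdot\}$ branch of Theorem~\ref{thm:local-sz-bd}.

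The main obstacle is the two-test bookkeeping hidden in the inequality $\mathrm{cost}(J)/\mu(J)\le\min_i h_i(2^{-n}G_i(x))/(2^{-n}G_i(x))$. For a region certified by the cheaper predicate $C_0$ this is immediate, but for a region certified only by $C_1$ one must both account for the wasted $C_0$ evaluation and explain why the $C_0$-side density $h_0(2^{-n}G_0(x))/(2^{-n}G_0(x))$ — rather than the larger $h_1(2^{-n}G_0(x))/(2^{-n}G_0(x))$ — may be used; the resolution exploits that a region failing $C_0$ already has measure at least $G_0(x)$, so it is ``coarse at scale $G_0$'' near $x$ and contributes proportionally less than a worst-case $C_0$-region would. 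Pinning down this attribution precisely (and with it the exact role of the constants $k_0,k_1$), together with checking the two-to-one amortization from processed regions down to terminal regions, is where essentially all the work lies; the remainder is a mechanical substitution into Theorem~\ref{thm:local-sz-bd}.
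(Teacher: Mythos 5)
Your overall plan matches what the paper intends: Theorem~\ref{thm:bit-CA} is stated without a written proof and is offered as a direct instantiation of the continuous-amortization master bound, Theorem~\ref{thm:local-sz-bd}, with the cost functions $h_0,h_1$ and local size bounds $G_0,G_1$ built just before it, combined with the ``charge intermediate regions to terminal leaves'' amortization imported from the proof of Theorem~\ref{thm:nonadaptive-bit-complexity}. Your hypothesis-checking for the $h_i$ (non-increasing, nonnegative, with $h_i(y)/y$ decreasing on $(0,\mu(I)]$), the remark that any local size bound for a single $C_i$ is automatically one for the disjunction $C_0\vee C_1$, and the measure-proportional two-to-one charging are all correct and are exactly the steps the paper leaves implicit.

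However, the gap you flag in your last paragraph is real, and your proposed resolution does not close it. The pointwise inequality you need,
$$
\frac{\mathrm{cost}(J)}{\mu(J)}\;\lesssim\;\min\left\{\frac{h_0(2^{-n}G_0(x))}{2^{-n}G_0(x)},\;\frac{h_1(2^{-n}G_1(x))}{2^{-n}G_1(x)}\right\},
$$
is fine when $J$ is certified by $C_0$: then $\mathrm{cost}(J)\lesssim h_0(\mu(J))$, the parent's double failure gives $\mu(J)\ge 2^{-n}G_i(x)$ for both $i$, and monotonicity of $h_0(y)/y$ together with $h_0\le h_1$ delivers both branches of the $\min$. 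But when $J$ fails $C_0$ and is certified only by $C_1$, the charge is $\mathrm{cost}(J)\approx h_1(\mu(J))$, which exceeds $h_0(\mu(J))$ by roughly a factor of $2^n d^n\,k_1/k_0$. The only extra leverage you get from ``$J$ fails $C_0$'' is the improved lower bound $\mu(J)\ge G_0(x)$ rather than $2^{-n}G_0(x)$, which buys only a $2^n$ factor; to land on the $h_0$-branch of the $\min$ you would need
$$
h_1\bigl(G_0(x)\bigr)\;\le\;2^n\,h_0\bigl(2^{-n}G_0(x)\bigr),
$$
and comparing the $\lg y$-coefficients, $2^{2n}d^{2n+1}k_1$ against $2^n d^{n+1}k_0$, this fails. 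So ``coarse at scale $G_0$'' is not enough: the attribution of the $h_0$-paired-with-$G_0$ density to a $C_1$-certified region does not follow from the monotonicity facts you cite and needs either an additional relation between $G_0$ and $G_1$ or a reformulation (e.g.\ pairing $h_1$ with both $G_i$, or splitting the integration domain by which test certifies). As written, this step is a genuine hole in the argument, not merely bookkeeping to be pinned down.
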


\section{Examples}
\label{sec:examples}  

The bounds in Theorems~\ref{thm:PV:steps} and~\ref{theorem:PV:CA:complexity} are both exponential with respect to the degree of the polynomial and the number of variables.  They remain exponential even if we assume that the number of variables is constant.  In~\cite{Plantinga:2004}, the authors show that for several examples the computation time is efficient in practice. The following examples illustrate that:
\begin{itemize}
\item The exponential behavior is optimal, up to constants in the exponents and
\item In particular cases, the complexity is provably better than the worst-case.
\end{itemize}

\begin{figure}[htb]
\centering
\includegraphics{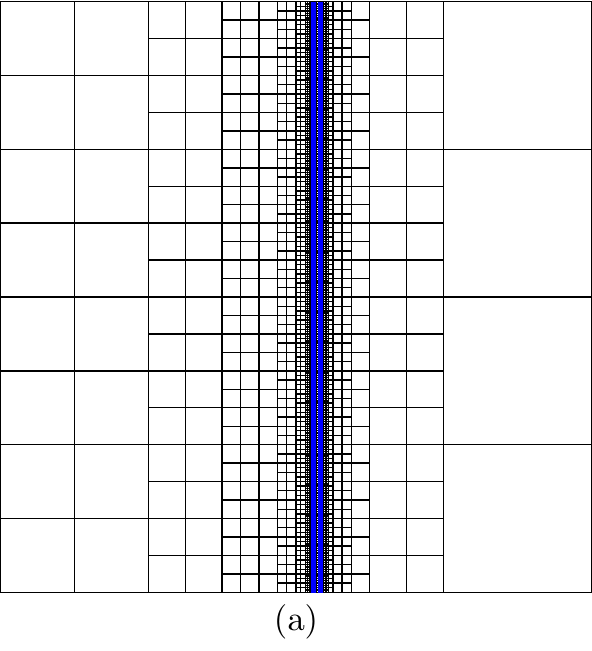}
%\begin{tikzpicture}[scale=.35]
%  \input{"Curve Approximation9.tex"}
% \node[below] at (0,-5.1) {(a)};
%\end{tikzpicture}
%\qquad\qquad
%\begin{tikzpicture}[scale=.35]
%    \input{"Curve Approximation8.tex"}
%     \node[below] at (0,-5.1) {(b)};
%  \end{tikzpicture}
\quad
\includegraphics{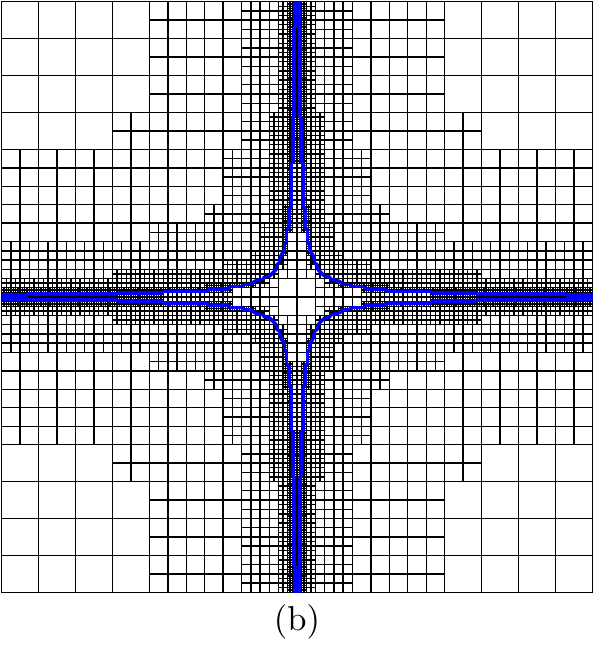}
\caption{(a) The output of the \PV\ algorithm for $f(x,y)=(x^d-2(ax-1)^2)(x^d-(ax-1)^2)$.  The solutions to $f(x,y)=0$ are close vertical lines (the illustrated case is when $d=3$ and $a=3$).  The width of boxes between vertical lines is at most $2a^{-\frac{d}{2}-1}$ and they extend the entire length of the initial region.  The number of regions is bounded from above by $\Omega(wa^{\frac{d}{2}+1})$ where $w=w(I)$ is the width of he initial region.  (b) The output of the \PV\ algorithm on $1000x^4y^4-1$.  We observe that the near-singularity at $(0,0)$ does not cause exponentially many subdivisions.  Instead, the pair of curves with the same asymptote contribute to this behavior since the width of boxes along the horizontal (vertical) axis must be less than the vertical (horizontal) distance between the two branches.
\label{fig:pv-curve-asymptotes}}
\end{figure}

\begin{lemma}\label{lem:PV-tight}
The bound of Theorem~\ref{thm:PV:steps} is asymptotically tight.
\end{lemma}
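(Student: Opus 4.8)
The plan is to construct, for arbitrarily large values of the parameters, smooth real hypersurfaces $V_\RR(f)$ with $f\in\ZZ[x_1,\dots,x_n]$ of degree $d$ and coefficient height $H$, on axis‑aligned cubes $I$ with integral corners, on which the \PV\ algorithm is forced to create a number of terminal regions that is exponential in a polynomial expression in $n$, $d$ and $\tau=\lg H$ of the same shape as the exponent in Theorem~\ref{thm:PV:steps}; this shows that the exponential dependence there cannot be removed.

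First I would analyse the planar building block of Figure~\ref{fig:pv-curve-asymptotes}(a), namely $f(x,y)=\bigl(x^{d}-2(ax-1)^{2}\bigr)\bigl(x^{d}-(ax-1)^{2}\bigr)$ with $a\in\ZZ_{>0}$, on a fixed square $I$ with integral corners whose $x$‑projection contains $1/a$ in its interior. Since $f$ depends only on $x$, its zero set is a union of vertical lines $\{x=r\}$, and for all but finitely many $a$ the univariate polynomial $f(x)$ is square‑free, so $V_\RR(f)$ is smooth and the \PV\ algorithm terminates. A short estimate shows that the two factors produce two pairs of real roots clustered around $x=1/a$, with all gaps between consecutive roots of the cluster of order $\Theta\!\bigl(a^{-d/2-1}\bigr)$; and, as always for a square‑free real polynomial, $f'$ changes sign between consecutive roots. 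Hence a partition box $J$ whose $x$‑range straddles two consecutive roots of the cluster fails $C_0$ (since $V_\RR(f)\cap J\neq\emptyset$) and also fails $C_1$ (since then $\square f'(J)$ contains values of both signs, so $0\in\square g(J\times J)$ for $g(x,y)=f'(x)f'(y)$), so $J$ must be subdivided until its side length is $O\!\bigl(a^{-d/2-1}\bigr)$. These small boxes tile a vertical strip of height $w(I)$, so at least $\Omega\!\bigl(w(I)\,a^{d/2+1}\bigr)$ regions are produced, while the coefficients of $f$ have height $\Theta(a^{4})$, i.e.\ bit‑size $\Theta(\lg a)$. Choosing $a$ of bit‑size $\Theta(\tau)$ (with $\tau$ at least a constant times $\lg d$) then gives $2^{\Omega(d\tau)}$ regions, and taking $\tau=\Theta(d)$ gives $2^{\Omega(d^{2})}$, already ruling out any subexponential bound for $n=2$.

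Next I would lift this to $n$ variables. Viewing the same $f$ as an element of $\ZZ[x_1,\dots,x_n]$ depending only on $x_1$, the zero set becomes a union of parallel hyperplanes $\{x_1=r\}$ and is still smooth; the fine subdivision forced near $x_1=1/a$ now fills a slab $\{\,|x_1-1/a|\lesssim a^{-d/2-1}\,\}\cap I$ of full extent in the remaining $n-1$ directions, which must be tiled by cubes of side $O(a^{-d/2-1})$, producing $\Omega\!\bigl((w(I)\,a^{d/2+1})^{\,n-1}\bigr)$ regions. To push the polynomial factors in the exponent up so that they grow with $n$ (matching the $d^{12n+1}$ and $2^{24n}$ of Theorem~\ref{thm:PV:steps} up to the constants appearing in those exponents), I would replace the single one‑dimensional cluster by a genuinely $n$‑variate near‑degeneracy: a smooth hypersurface for which the auxiliary system $(f^{\Delta},g)$ of Proposition~\ref{prop:separationbound} carries a complex zero at distance only $2^{-\Theta(d^{\Theta(n)}\tau)}$ from $I^{\Delta}$, i.e.\ the regime in which the semialgebraic estimate of~\cite{JPT:SemiAlgebraic} used there is (up to those same constants) attained. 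One then checks that around the associated real point both $C_0$ and $C_1$ keep failing on a full $n$‑dimensional neighbourhood down to that scale, so that the lower‑bound argument underlying Proposition~\ref{prop:global-size-bound} yields a matching lower bound; the remaining work is bookkeeping ($\deg f=d$ exactly, integer data of bit‑size $\Theta(\tau)$, smoothness of $V_\RR(f)$) and reading off the count.

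The easy half is the planar and parallel‑hyperplane construction, where the geometry is completely explicit. The main obstacle is the genuinely multivariate step: one must exhibit a \emph{smooth} real hypersurface whose associated $(f^{\Delta},g)$‑system is as badly separated from $I^{\Delta}$ as the general semialgebraic bound permits, and then argue that this bad separation is actually \emph{realized} as that many \PV\ subdivisions. The latter is not automatic, since the $C_1$ test may certify a box before it becomes small; one therefore has to keep $\square\nabla f$ straddling an orthogonal pair of gradient directions (equivalently, keep $0\in\square g(J\times J)$) on an entire $n$‑dimensional region around the near‑degenerate locus down to the desired scale, while at the same time preserving smoothness of $V_\RR(f)$ so that the algorithm still terminates.
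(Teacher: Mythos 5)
Your first two paragraphs coincide with the paper's proof: the Mignotte-type product $P(x)P_2(x)=(x^d-2(ax-1)^2)(x^d-(ax-1)^2)$, the cluster of roots near $x=1/a$ with gaps $\Theta(a^{-d/2-1})$, the lift to $\ZZ[x_1,\dots,x_n]$ as a union of parallel hyperplanes, and the count of forced small cubes in the resulting slab. You are in fact more careful than the paper on two points it leaves implicit: you argue explicitly that a box straddling two consecutive clustered roots fails $C_1$ as well as $C_0$ (for $f$ depending only on $x_1$, $g(x,y)=f'(x_1)f'(y_1)$, and the sign change of $f'$ puts $0$ in $g(J\times J)\subseteq\square g(J\times J)$), and you note that $a$ must be chosen so that $f$ is square-free, hence $V_{\RR}(f)$ smooth, so the algorithm is guaranteed to terminate. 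Your tiling count $\Omega\bigl((w(I)a^{d/2+1})^{n-1}\bigr)$ is also the correct one for covering an $(n-1)$-dimensional slab by cubes of side $\Theta(a^{-d/2-1})$; the paper's displayed $\tfrac12 w(I)^{n-1}a^{d/2+1}$ divides by $2h$ only once and is an undercount, though this does not change the asymptotic conclusion.

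Your third paragraph goes beyond what the lemma actually claims. As the sentence preceding Lemma~\ref{lem:PV-tight} makes clear, ``asymptotically tight'' is meant in the weak sense that the exponential form of Theorem~\ref{thm:PV:steps} cannot be improved to subexponential; the paper's own example yields only $2^{\Omega(d\tau)}$ regions for $n=2$, and no attempt is made to match the $n$-dependent factors $n^3 2^{24n}d^{12n+1}$ inside the exponent. Your proposed program---constructing a genuinely $n$-variate hypersurface that saturates the separation bound of Proposition~\ref{prop:separationbound} and then verifying that $C_0$ and $C_1$ both keep failing on a full $n$-dimensional neighbourhood down to that scale---would prove a strictly stronger statement, and you correctly flag why it is nontrivial (the $C_1$ test may certify boxes early), but it is not needed here. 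Your first two paragraphs already prove the lemma as stated and are the paper's argument.
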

\begin{proof}
Following the construction in~\cite{Eigenwillig:Descartes}, consider the Mignotte polynomial $P(x)=x^d-2(ax-1)^2$ and the related polynomial $P_2(x)=x^d-(ax-1)^2$ where $a$ is a sufficiently large positive integer.  The product $P(x)P_2(x)$ is of degree $2d$ and the largest coefficient is of size $2a^4$.  In~\cite{Eigenwillig:Descartes}, it is shown that the product $P(x)P_2(x)$ has (at least) three roots in the interval $(a^{-1}-h,a^{-1}+h)$ where $h=a^{-d/2-1}$, see Figure~\ref{fig:pv-curve-asymptotes}(a).  Treating $P(x)P_2(x)$ as a polynomial in $n$ variables, we see that the \PV\ algorithm to approximate the variety in an $n$-dimensional cube $I$ of side length $w=w(I)$ requires subdividing until the regions have side length at most $2h$ to separate the three vertical hyperplanes in the interval $(a^{-1}-h,a^{-1}+h)$.  Since this occurs along an entire hyperplane of the input region, the number of small boxes is, at least, $\frac{w^{n-1}}{2h}=\frac{1}{2}w(I)^{n-1}a^{d/2+1}$, which is exponential in both the size of the input region and the size of the coefficients of the polynomial. 
\end{proof}

The previous example, while illustrating that the bounds are tight, raises the question of whether exponential behavior is due to the fact that the example is a one-dimensional problem lifted to higher dimensions.  We now show provide an example that shows that this exponential behavior can be observed for a curve involving both $x$ and $y$ in two dimensions.  In particular, in Lemma~\ref{lem:PV-tight}, the exponential behavior in two-dimensions was caused by two curves which were close together, but had a curve of critical points between them.  We can mimic that behavior for a curve in two-dimensions by considering a situation where two local components of the curve share an asymptote.

\begin{example}\label{ex:planar:exponential}
Fix $\varepsilon>0$ and consider $f(x_1,x_2)$ of one of the following forms:
\begin{itemize}
\item $f(x_1,x_2)=x_1^{a_1}x_2^{a_2}-\varepsilon^{a_1+a_2}$ where $a_1$ and $a_2$ are both positive integers and at least one is even or
\item $f(x_1,x_2)=x_1^{a_1}x_2^{a_2}+\varepsilon^{a_1+a_2}$ where $a_1$ and $a_2$ are both positive integers and exactly one is even.
\end{itemize}
In either of these cases, the \PV\ algorithm produces exponentially many regions in the size of the input box and the size of the coefficients of the polynomial, see Figure~\ref{fig:pv-curve-asymptotes}(b).

Since all of the cases are similar, we focus on the case where $f(x_1,x_2)=x_1^{a_1}x_2^{a_2}-\varepsilon^{a_1+a_2}$ and $a_2$ is even.  In this case, we show that the number of regions which intersect the positive $x$-axis is exponential in the size of the input.  Since $\nabla f$ is zero on the positive $x$-axis, any box which is terminal and intersects the positive $x$-axis must satisfy Condition $C_0$.  For any positive $x$,
$$
\left(x,\pm\left(\frac{\varepsilon^{a_1+a_2}}{x^{a_1}}\right)^{1/a_2}\right)
$$
are points on the variety $V_{\mathbb{R}}(f)$.  Therefore, any region which is terminal and contains $(x,0)$ must have width at most 
\begin{equation}\label{eq:lowerbound:CA}
2\left(\frac{\varepsilon^{a_1+a_2}}{x^{a_1}}\right)^{1/a_2}
\end{equation}
since, otherwise, the region would contain a point of $V_{\mathbb{R}}(f)$ and could not satisfy Condition $C_0$.

Let $J$ be a terminal region which intersects the positive $x$-axis and let $[s_1,s_2]$ be the intersection of $J$ with the positive $x$-axis.  Then, consider the integral
\begin{equation}\label{eq:lowerbound:Integral}
\frac{1}{2}\int_{s_1}^{s_2}\left(\frac{x^{a_1}}{\varepsilon^{a_1+a_2}}\right)^{1/a_2}dx\leq \frac{w(J)}{2}\left(\frac{s_2^{a_1}}{\varepsilon^{a_1+a_2}}\right)^{1/a_2},
\end{equation}
where the inequality follows since the integrand is increasing.
Since $(s_2,0)\in J$, by the bound in Expression~(\ref{eq:lowerbound:CA}), it follows that Expression~(\ref{eq:lowerbound:Integral}) is at most $1$.

Suppose that the intersection of the initial region $I$ with the positive $x$-axis is $[r_1,r_2]$.  Then, by the bound on Integral~(\ref{eq:lowerbound:Integral}) from above, it follows that 
$$
\frac{1}{2}\int_{r_1}^{r_2}\left(\frac{x^{a_1}}{\varepsilon^{a_1+a_2}}\right)^{1/a_2}dx=\frac{a_2}{2(a_1+a_2)}\left(\left(\frac{r_2}{\varepsilon}\right)^{\frac{a_1+a_2}{a_2}}-\left(\frac{r_1}{\varepsilon}\right)^{\frac{a_1+a_2}{a_2}}\right)
$$
is a lower bound on the number of regions formed by the \PV\ algorithm along the positive $x$-axis.  This region count is exponential in both the size of the input region and the size of the coefficients of the polynomial.
\end{example}

We remark that the example above is intrinsically hard for the algorithm and it can be adapted to higher dimensions and applies even under a change of coordinates.  We also note that the exponential behavior does not come from the near singularity at $(0,0)$, but from the curves sharing asymptotes.  For the centered form, see Section \ref{Section:localsizebound}, the complex portions of the curve also affect subdivisions, so, when using the centered form for the tests $C_0$ and $C_1$, the exponential behavior from the analysis above can be extended for all positive integers $a_1$ and $a_2$ such that $a_1+a_2>2$.

\begin{figure}[htb]
\centering
\includegraphics{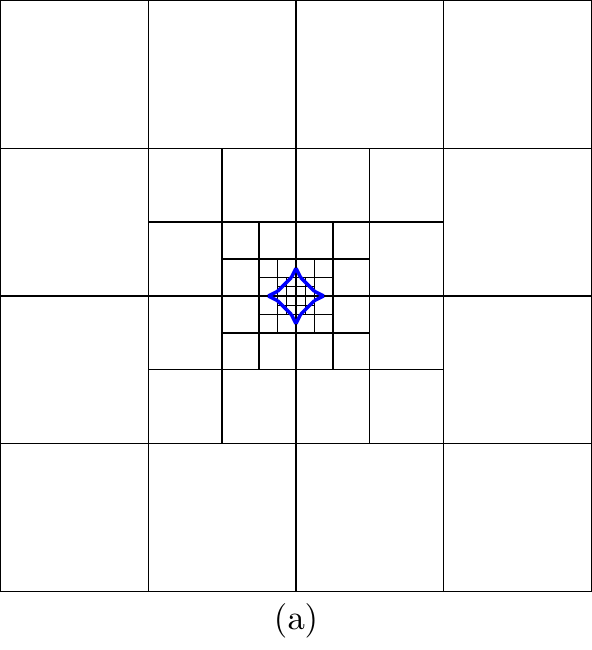}
%\begin{tikzpicture}[scale=.35]
%  \input{"Curve Approximation9.tex"}
% \node[below] at (0,-5.1) {(a)};
%\end{tikzpicture}
%\qquad\qquad
%\begin{tikzpicture}[scale=.35]
%    \input{"Curve Approximation8.tex"}
%     \node[below] at (0,-5.1) {(b)};
%  \end{tikzpicture}
\quad
\includegraphics{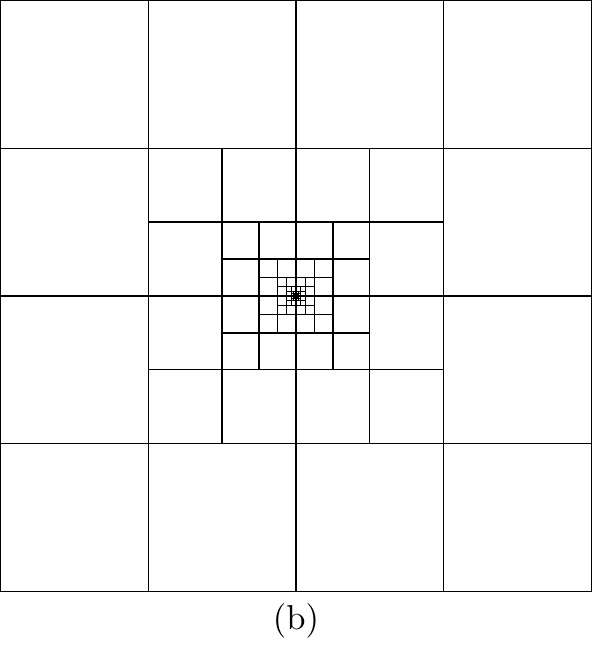}
\caption{(a) The output of the \PV\ algorithm for $f(x,y)=x^2+y^2-\varepsilon^2$.  The number of regions is bounded by $O(\lg(w)-\lg(\varepsilon))$.  (b) The output of the \PV\ algorithm for $f(x,y)=x^2+y^2+\varepsilon^2$.  The number of regions is bounded by $O(\lg(w)-\lg(\varepsilon))$.\label{fig:circle:b}}
\end{figure}

Even though our bounds are optimal, in practice, these are often quite pessimistic, as the actual separation bounds do not follow the worst case behavior.  We illustrate this better behavior in the following two examples:
\begin{example}
Fix $\varepsilon>0$ and consider  $f(x_1,x_2)=x_1^2+x_2^2+\varepsilon^2$.  Then, 
$$
\dist_{\CC}((x_1,x_2),f)=\sqrt{\frac{x_1^2+x_2^2}{2}+\varepsilon^2}
$$
and
$$
\dist_{\CC}((x_1,x_2,x_1,x_2),g)=\sqrt{x_1^2+x_2^2}.
$$  
Let $I$ be the initial input square where $w=w(I)$ is the width of $I$.  By substituting these bounds into Theorem~\ref{theorem:PV:CA:complexity}, we find that the number of regions constructed by the \PV\ algorithm is $O(\lg(w)-\lg(\varepsilon))$, see Figure~\ref{fig:circle:b}(a).
\end{example}

\begin{example}
Fix $\varepsilon>0$ and consider $f(x_1,x_2)=x_1^2+x_2^2-\varepsilon^2$.  Then, 
$$
\dist_{\CC}((x_1,x_2),f)=\begin{cases}\left|\sqrt{x_1^2+x_2^2}-\varepsilon\right|&x_1^2+x_2^2\leq 4\varepsilon^2\\\sqrt{\frac{x_1^2+x_2^2}{2}-\varepsilon^2}&x_1^2+x_2^2>4\varepsilon^2\end{cases}
$$ 
and 
$$\dist_{\CC}((x_1,x_2,x_1,x_2),g)=\sqrt{x_1^2+x_2^2}.$$  
Let $I$ be the initial input square where $w=w(I)$ is the width of $I$.  By substituting these bounds into Theorem~\ref{theorem:PV:CA:complexity}, we find that the number of regions constructed by the \PV\ algorithm is $O(\lg(w)-\lg(\varepsilon))$, see Figure~\ref{fig:circle:b}(b).

\end{example}

Moreover, we observe that for each of these examples, the minimum distance between $V_{\CC}(f^\Delta)$ and $V_{\CC}(g)$ is at most $\varepsilon$.  Therefore, a bound coming from Theorem~\ref{thm:PV:steps} would be much larger than the bound continuous amortization provides.

It remains an open question to deduce adaptive complexity bounds for the \PV\ algorithms from Theorem~\ref{theorem:PV:CA:complexity} based on geometric and {\em a priori} parameters.  We observe that since the complexity of the algorithm can be exponential in the inputs, the integral must be described in terms of more parameters than the degree of $f$ and the size of the coefficients of $f$.

\section*{Acknowledgments.}  The authors are grateful to Martin Sombra for his comments on the previous versions of this paper.  The authors are also grateful to Hoon Hong for his observations on how to generalize the initial results.  The authors thank Bernard Mourrain for his discussions about the difficult cases for the algorithm.

\bibliographystyle{plain}
\bibliography{PV_References}
% Include the ".bib" file (generated by bibtex) right here.
%\begin{thebibliography}{30}
%\bibitem[{Doe (1996)}]{Key1}
%Doe, Jane, Great result, Superb Journal, 1996
%\end{thebibliography}

\end{document}